% ------------------------------------------------------------------------
% bjourdoc.tex for birkjour.cls*******************************************
% ------------------------------------------------------------------------
%%%%%%%%%%%%%%%%%%%%%%%%%%%%%%%%%%%%%%%%%%%%%%%%%%%%%%%%%%%%%%%%%%%%%%%%%%

\documentclass{birkjour}
%
%
% THEOREM Environments (Examples)-----------------------------------------
%
\numberwithin{equation}{section}
%*********************************************
\newtheorem{theorem}{Theorem}[section]
\newtheorem{lemma}[theorem]{Lemma}
\theoremstyle{definition}

\newtheorem{example}[theorem]{Example}

\newtheorem{corollary}[theorem]{Corollary}

\theoremstyle{remark}
\newtheorem{remark}[theorem]{Remark}
\numberwithin{equation}{section}
%%%%%%%%%%%%%%%%%%%%%%%%%%%%%%%%%%%%%%%%%%%%%%%%

\newcommand{\bC}{{\mathbb C}}

\newcommand{\bZ}{{\mathbb Z}}

\newcommand{\cA}{{\mathcal A}}
\newcommand{\cB}{{\mathcal B}}

\newcommand{\cH}{{\mathcal H}}
\newcommand{\cC}{{\mathcal C}}

\newcommand{\cL}{{\mathcal L}}
\newcommand{\cS}{{\mathcal S}}

%
   %{\mathfrak{Re}}
\renewcommand{\Im}{\,\mathrm{Im}\,}   %{\mathfrak{Im}}
\def\idty{{\mathchoice {\mathrm{1\mskip-4mu l}} {\mathrm{1\mskip-4mu l}} %
{\mathrm{1\mskip-4.5mu l}} {\mathrm{1\mskip-5mu l}}}}

\newcommand{\Tr}{\mbox{Tr}}

%% for vectors in Greek letters use \pmb or  (will use \phi)

%% \newcommand{\vephi}{\vec{\varphi}}
%% \newcommand{\ubar}{\overline{\ve{u}} }

%--- operators -------------------------------

% \newcommand{\at}{{\char 64}}

\newcommand{\be}{\begin{equation}}
\newcommand{\ee}{\end{equation}}
\newcommand{\bea}{\begin{eqnarray}}
\newcommand{\eea}{\end{eqnarray}}
\newcommand{\beann}{\begin{eqnarray*}}
\newcommand{\eeann}{\end{eqnarray*}}

\newcommand{\e}{{\epsilon}}
\newcommand{\lb}{{\lambda}}

%    Absolute value notation

%    Blank box placeholder for figures (to avoid requiring any
%    particular graphics capabilities for printing this document).

%%%%%%%%%%%%%%%%%%%%%%%%%%%%%%%%%%%%%%%%%%%%%%%%%%%%%%%%%%%%%%%%%%%%%%%%%%%%%%%%%%%%%%%%%%
%************************************************************************
\usepackage{color}
%\usepackage{float}
%\usepackage{graphics}

%

%\usepackage{showkeys}
%************************************************************************
%%%%%%%%%%%%%%%%%%%%%%%%%%%%%%%%%%%%%%%%%%%%%%%%%%%%%%%%%%%%%%%%%%%%%%%%%
\newcommand{\thedate}
%%%%%%%%%%%%%%%%%%%%%%%%%%%%%%%%%%%%%%%%%%%%%%%%%%%%%%%%%%%%%%%%%%%%%%%%%

\begin{document}

%-------------------------------------------------------------------------
% editorial commands: to be inserted by the editorial office
%
%\firstpage{1} \volume{228} \Copyrightyear{2004} \DOI{003-0001}
%
%
%\seriesextra{Just an add-on}
%\seriesextraline{This is the Concrete Title of this Book\br H.E. R and S.T.C. W, Eds.}
%
% for journals:
%
%\firstpage{1}
%\issuenumber{1}
%\Volumeandyear{1 (2004)}
%\Copyrightyear{2004}
%\DOI{003-xxxx-y}
%\Signet
%\commby{inhouse}
%\submitted{March 14, 2003}
%\received{March 16, 2000}
%\revised{October 31, 2012}
%\accepted{July 22, 2000}
%
%
%
%---------------------------------------------------------------------------
%Insert here the title, affiliations and abstract:
%

%\hfill \textbf{\bl{AHP} \red{revis\'{e}: le 08 f\'{e}vrier, 2013}}

\title[Non-equilibrium states of a photon cavity]
{Non-equilibrium states of a photon cavity pumped by an atomic beam}

%----------Author 1
\author[B. Nachtergaele]{Bruno Nachtergaele}
\address{Department of Mathematics\\
University of California, Davis\\
Davis, CA 95616\\
USA}
\email{bxn@math.ucdavis.edu}

\author[A. Vershynina]{Anna Vershynina}
\address{Department of Mathematics\\
University of California, Davis\\
Davis, CA 95616\\
USA}
\email{annavershynina@gmail.com}

\author[V. A. Zagrebnov]{Valentin A. Zagrebnov}
\address{Laboratoire d'Analyse, Topologie, Probabilit\'{e}s - UMR 7353\\
CMI - Technop\^{o}le Ch\^{a}teau-Gombert \\
39, rue F. Joliot Curie, 13453 Marseille Cedex 13, France\\
and\\
D\'{e}partement de Math\'{e}matiques, Aix-Marseille Universit\'{e}  \\
Luminy - Case 901, Marseille 13288, Cedex 09, France}
\email{Valentin.Zagrebnov@latp.univ-mrs.fr \\
Valentin.Zagrebnov@univ-amu.fr}

\thanks{This work was completed with the support of the France-Berkeley Fund and the National Science
Foundation under grants VIGRE DMS-0636297 and DMS-10009502}

%----------classification, keywords, date
\subjclass{Primary 82C10; Secondary 47D06, 46L60, 81S22.}

\keywords{Non-equilibrium quantum theory, open quantum systems, repeated
interactions, time dependent interactions, $W^*$-dynamical systems, energy-entropy variations,
photon cavity.}

%\date{January 1, 2004}
\date{\today}
%----------additions
%\dedicatory{To my boss}
%%% ----------------------------------------------------------------------

\begin{abstract}
We consider a beam of two-level randomly excited atoms that pass
one-by-one through a one-mode cavity. We show that in the case of an ideal
cavity, i.e. no leaking of photons from the cavity, the pumping
by the beam leads to an unlimited increase in the photon number in the
cavity. We derive an expression for the mean photon number for all times.
Taking into account leaking of the cavity, we prove that
the mean photon number in the cavity stabilizes in time. The limiting
state of the cavity in this case exists and it is independent of the initial
state. We calculate the characteristic functional of this non-quasi-free
non-equilibrium state. We also calculate the total energy variation in both the ideal
and the open cavities as well as the entropy production in the ideal cavity.

\end{abstract}

%%% ----------------------------------------------------------------------
\maketitle
%%% ----------------------------------------------------------------------
\tableofcontents
%%%%%%%%%%%%%%%%%%%%%%%%%
%%%%%%%%%%%%%%%%%%%%%%%%

\section{Model and Results}

%%%%%%%%%%%%%%%%%%%%%%%%%%%%%%%%%%%%%%%%%%%%%%%%%%%%%%%% 1.1 %%%%%%%%%%%%%%%%%%%%%%%%%%%%%%%%%%%%%%%%%%%%%%%%%%%%%%%%%%
%\noindent \textbf{1.1 The Model.}
\subsection{The Model}

Our model consists of a beam $\mathcal{A}$ of two-level randomly excited atoms
that pass one-by-one through a photon cavity $\mathcal{C}$. During the passage time
$\tau$ the atom in the cavity interacts with the cavity field.

The cavity is a one-mode resonator described by a quantum harmonic oscillator with Hamiltonian
$H_{\mathcal{C}} = \epsilon \, b^*b\otimes\idty$ acting on the Hilbert space $\cH_{\mathcal{C}}$.
Here $b^*$ and $b$ stand for boson creation and annihilation operators with canonical commutation
relations (CCR): $[b \, , b^*]=\idty$, $[b\, , b]=[b^*, b^*]=0$.

The beam of two-level atoms with energy levels $0$ and $E>0$, is described by
a chain $H_{\mathcal{A}}= \sum_{k\geq1} H_{\mathcal{A}_k}$ of
individual atoms $\mathcal{A}_k$ with Hamiltonian
$H_{\mathcal{A}_k}=\idty \otimes E \, \eta_k$ in the Hilbert space $\cH_{\mathcal{A}}=
\otimes_{k\geq 1}\cH_{\mathcal{A}_k}$. Here for any $k\geq 1$,
$\cH_{\mathcal{A}_k} = \mathbb{C}^2$ and the individual atomic operator
$\eta_k := (\sigma^{z} + I)/2$, where $\sigma^{z}$ is the third Pauli matrix. The eigenvectors
$\psi_{k}^{\pm}$: $\eta_k \psi_{k}^{+} = \psi_{k}^{+}$ and
$\eta_k \psi_{k}^{-} =0$, are interpreted as the {excited} and the {ground} states of the atom,
respectively.

The initial state of the system is the product state of the cavity and the states of each individual atom:
\begin{equation}\label{In-State}
\rho_{\mathcal{S}}: =\rho_{\mathcal{C}}\otimes\bigotimes_{k\geq 1}\rho_{k} \ .
\end{equation}
Here $\rho_{\mathcal{C}}$ is the initial state of the cavity, which we assume to be normal,
i.e., given by a density matrix $\rho_{\mathcal{C}}\in\mathfrak{C}_{1}(\cH_{\mathcal{C}})$,
the space of the trace-class operators  on $\cH_{\mathcal{C}}$,
and $\rho_{\mathcal{A}}:= \bigotimes_{k\geq 1}\rho_{k}$ is the state of the beam.

In this paper, we suppose that the atomic states $\{\rho_{k}\}_{k\geq1}$ on the algebras
$\{\mathfrak{C}_1(\cH_{\mathcal{A}_k})\}_{k\geq1}$ are diagonal and identical (\textit{homogeneous} beam), hence of
the form
\begin{equation}\label{atom-state}
\rho_k=\left( \begin{array}{cc}
p & 0\\
0& 1-p
\end{array} \right)\ , \ \  p:=\Tr_{\cA_k}(\eta_k \, \rho_k) \ .
\end{equation}
The parameter $0 \leq p \leq 1$ denotes probability that atom in beam is in its excited state $\psi_{k}^{+}$.

For simplicity we consider  our model in the regime when at any moment only {\textit{one}} atom is
present in the cavity. Physically, this corresponds to a special \textit{tuning} of the cavity size
$l$ and the interatomic distance $d =l$. Then the cavity-atom interaction has piecewise constant
time-dependence, and we take it of the form:
\begin{equation}\label{W-int}
K_k(t)=\chi_{[(k-1)\tau, k\tau)}(t) \, \lambda \, (b^*+b)\otimes \eta_k \ .
\end{equation}
Here $\chi_{\Delta}(x)$ is characteristic function of the set $\Delta$.

Since the non-excited atom is not ``visible" by the cavity
($W_k (t) (u \otimes \psi_{k}^{-}) =0$ for any $u \in {\rm{dom}}(H_{\mathcal{C}})$),
the \textit{detuned} case $d > l$, when there is {at most one} atom
inside the cavity at the same time, is also described by a piecewise constant
interaction. This situation can be handled by a small modification of our arguments, but
we will not consider this situation further in this paper.

%%%%%%%%%%%%%%%%%%%%%%%%%%%%%%%%%%%%%%%%%%%%%%%%% Remark %%%%%%%%%%%%%%%%%%%%%%%%%%%%%%%%%%%%%%%%%%%%%%%%%%%%%%%%%%%%%%%%
\begin{remark}\label{Rem-Q-El}
It is convenient, albeit not compulsory, to stick with a quantum description of the atomic beam.
If one restricts oneself to the atomic observables $\eta_k$, which generate a commutative algebra,
one can also consider $H_{\mathcal{A}}$ and interaction
(\ref{W-int}) as matrix representation of the continuous-time Bernoulli process with time-unity $\tau$. Then the
its piecewise constant random realisations
\begin{equation}\label{Bern-proc1}
\widehat{\eta}(t)=
\sum_{k\geq1}(\psi_{k}^{\pm},\chi_{[(k-1)\tau, k\tau)}(t)\eta_k \psi_{k}^{\pm})_{\cH_{\mathcal{A}_k}} \ ,
\end{equation}
are generated by random sequences of atomic operator $\eta_k := (\sigma^{z} + I)/2$ eigenvectors
$\{\psi_{k}^{\pm}\}_{k\geq 1}$ with probabilities $p$ and $1-p$ for eigenvalues $1$ and $0$ respectively.
Here $(\cdot\, , \,\cdot)_{\cH_{\mathcal{A}_k}}$ is the scalar product in the space $\cH_{\mathcal{A}_k}=\mathbb{C}^2$.

Returning back to the quantum electrodynamic origin of the interaction (\ref{W-int}) one observes that
it is completely elastic, since the atomic system does not evolve. The atom remains in the same
state throughout its interaction with the photon field. This may be interpreted as the limit of  {rigid}
atoms that ``kick'' the cavity mode, see \cite{FJMa}.
\end{remark}
%%%%%%%%%%%%%%%%%%%%%%%%%%%%%%%%%%%%%%%%%%%%%%%%%%%%%%%%%%%%%%%%%%%%%%%%%%%%%%%%%%%%%%%%%%%%%%%%%%%%%%%%%%%%%%%%%%%%%%

The Hamiltonian for the entire system $\mathcal{S}$ acts on the space $\cH_{\mathcal{S}}:= \cH_{\mathcal{C}} \otimes
\cH_{\mathcal{A}}$, and is given by the sum of the Hamiltonians of the cavity and the atoms,
and the interaction between them:
\begin{align}\label{Ham-Model}
H(t)=& \ H_{\mathcal{C}}+\sum_{k\geq 1}(H_{\mathcal{A}_k}+K_k(t))\\
=& \ \epsilon \, b^*b\otimes\idty+\sum_{k\geq1} \idty\otimes E \, \eta_k + \sum_{k\geq 1}\chi_{[(k-1)\tau, k\tau)}(t)
(\lambda \, (b^*+b)\otimes \eta_k) \, . \nonumber
\end{align}
Notice that for the time $t\in[(n-1)\tau, n\tau)$, $n\geq 1$, only the $n$-th atom interacts with the cavity and the
system is \textit{autonomous}.

Projected on the time invariant sequences of $\{\psi_{k}^{\pm}\}_{k\geq 1}$ the system (\ref{Ham-Model}) is a one-mode
cavity Hamiltonian with random interaction driven by the $1$-$0$ Bernoulli process (\ref{Bern-proc1}).

%%%%%%%%%%%%%%%%%%%%%%%%%%%%%%%%%%%%%%%%%%%%%%%%%%%%% 1.2 %%%%%%%%%%%%%%%%%%%%%%%%%%%%%%%%%%%%%%%%%%%%%%%%%%%%%%%%%%%%%
%\noindent \textbf{1.2 Hamiltonian Dynamics of Ideal Cavity.}
\subsection{Hamiltonian Dynamics of the Ideal Cavity}

Let $t\in[(n-1)\tau, n\tau)$. Then the Hamiltonian (\ref{Ham-Model}) for the $n$-th atom in the cavity takes the form
$H(t) = H_n$, where
\begin{equation}\label{Ham-n}
H_n:=\epsilon \, b^*b\otimes \idty + \sum_{k\geq1}\idty\otimes E \, \eta_k + \lambda \, (b^*+b)\otimes \eta_n \  \ .
\end{equation}

Although the atomic beam is {infinite}, since we will assume that the initial state is normal
and since up to any finite time $t$ only a finite number of atoms have interacted with the cavity,
we can describe the evolved system by \textit{normal} states
$\omega_{\mathcal{S}}(\cdot):= {\rm{Tr}} (\, \cdot \ \rho_{\mathcal{S}})$, which are defined by density matrices
$\rho_{\mathcal{S}}$ from the space of the \textit{trace-class} operators
$\mathfrak{C}_{1}(\cH_{\mathcal{C}} \otimes \cH_{\mathcal{A}})$. Then the partial traces over $\cH_{\mathcal{A}}$
and over $\cH_{\mathcal{C}}$:
\begin{eqnarray}\label{C-A-states}
&&\omega_{\mathcal{C}}(\cdot):= \omega_{\mathcal{S}}(\cdot \otimes \idty) =
{\Tr}_{\mathcal{C}} (\, \cdot \ {\Tr}_{\mathcal{A}}\rho_{\mathcal{S}}) \ , \\
&&\omega_{\mathcal{A}}(\cdot):= \omega_{\mathcal{S}}(\idty \otimes \cdot)=
{\rm{Tr}}_{\mathcal{A}} (\, \cdot \ {\rm{Tr}}_{\mathcal{C}}\rho_{\mathcal{S}}) \ , \nonumber
\end{eqnarray}
define respectively the cavity and the beam states.

Since below we mostly deal with normal states, we will use the terms `state' and `density matrix'
interchangeably, if this does not cause any confusion.

We suppose that initially our system is in a \textit{product-state}
$\rho_{\mathcal{S}}(t)\big|_{t=0}= \rho_{\mathcal{C}} \otimes \rho_{\mathcal{A}}$ of the sub-systems $\mathcal{C}$
and $\mathcal{A}$:
\begin{equation}\label{normal-states}
\omega^{0}_{\mathcal{S}}(\cdot)= {\rm{Tr}} (\, \cdot \ \rho_{\mathcal{C}} \otimes \rho_{\mathcal{A}}) \ \ .
\end{equation}

For any states $\rho_{\mathcal{C}}$ on $\mathfrak{A}(\cH_{\mathcal{C}})$ and $\rho_{\mathcal{A}}$ on
$\mathfrak{A}(\cH_{\mathcal{A}})$
the Hamiltonian dynamics of the system is defined by (\ref{Ham-Model}), or by the quantum time-dependent
Liouvillian generator:
\begin{equation}\label{L-Gen-t}
L(t)(\rho_{\mathcal{S}}(t)):= -i \, [H(t),\rho_{\mathcal{S}}(t)] \ .
\end{equation}
Then the state $\rho_{\mathcal{S}}(t) := (\rho_{\mathcal{C}} \otimes \rho_{\mathcal{A}})(t)$ of the total system
at the time $t$ is a solution of the non-autonomous Cauchy problem corresponding to Liouville differential equation
\begin{equation}\label{Liouville's-DE}
\frac{d}{dt}\rho_{\mathcal{S}}(t) = L(t)(\rho_{\mathcal{S}}(t)) \ , \ \rho_{\mathcal{S}}(t)\big|_{t=0}=
\rho_{\mathcal{C}} \otimes \rho_{\mathcal{A}} \ .
\end{equation}
We denote by $\omega_{\mathcal{S}}^{t}(\cdot):= {\rm{Tr}} (\, \cdot \ \rho_{\mathcal{S}} (t))$ the system time
evolution due to (\ref{Liouville's-DE}) for the initial product state $\omega_S(\cdot)$ (\ref{normal-states}).

Notice that in general (see e.g. \cite{BrRo1}) the Hamiltonian evolution (\ref{Liouville's-DE}) with time-dependent
generator is a family of unitary mappings $\{U_{t,s}\}_{s \leq t}$ (called also evolution operators or propagators)
\begin{equation}\label{propagator}
i \, \frac{d}{dt} U_{t,s} =  H(t)\, U_{t,s} \ , \ s < t \ ,
\end{equation}
with the composition rule:
\begin{equation}\label{cocycle}
U_{t,s}= U_{t,r} \, U_{r,s} \ ,  \ \ \ {\rm{for}} \ \ \  s\leq r \leq  t  \ .
\end{equation}
Here $s,r,t \in \mathbb{R}$ and $U_{t,t} = \idty$. Then solution of (\ref{Liouville's-DE}) is
\begin{equation}\label{solution}
\rho_{\mathcal{S}}(t) = U_{t,s} \, \rho_{\mathcal{S}}(s)\, U_{t,s}^{-1} =: T_{t,s}(\rho_{\mathcal{S}}(s)) \ ,
\end{equation}
where by (\ref{cocycle}) the mapping $T_{t,s} (\cdot) = T_{t,r}(T_{r,s} (\cdot))$.

For our model with the tuned repeated interactions the form of the evolution operator (\ref{cocycle}) and the solution
(\ref{solution}) are considerably simplified. Indeed, our system (\ref{Ham-Model}) is \textit{autonomous} for each
interval $[(k-1)\tau, k\tau)$. Then by virtue of (\ref{Ham-n}) and (\ref{L-Gen-t}) the Liouvillian generator
\begin{equation}\label{L-Gen-n}
L(t)(\cdot)= L_{k}(\cdot) = -i[H_k, \, \cdot \, ] \ ,  \ \ {\rm{for}} \ \ t\in[(k-1)\tau, k\tau) \ ,  \  k \geq 1 \ ,
\end{equation}
is piecewise constant (time-independent). For any $t \geq 0$ one has the representation
\begin{equation}\label{t}
t:=n(t)\tau + \nu(t) \ , \  n(t) = [t/\tau] \ \ {\rm{and}} \ \ \nu(t)\in[0, \tau) \ ,
\end{equation}
where $[x]$ denotes the integer part of $x \in \mathbb{R}$. Then by (\ref{solution}) and (\ref{L-Gen-n}) the solution of
the Cauchy problem (\ref{Liouville's-DE}) for $t\in [(n-1)\tau, n\tau)$ takes the form:
\begin{equation}\label{Sol-Liouv-Eq}
\rho_{\mathcal{S}}(t)=T_{t,0}(\rho_{\mathcal{C}} \otimes \rho_{\mathcal{A}}):
= e^{\nu(t) L_{n}}e^{\tau L_{n-1}} \, ... \
e^{\tau L_{2}}e^{\tau L_{1}}(\rho_{\mathcal{C}} \otimes \rho_{\mathcal{A}}) \ .
\end{equation}
By (\ref{cocycle}) and (\ref{L-Gen-n}) the mapping $T_{t,0}$ is the composition
\begin{equation}\label{T-t}
T_{t,0} = T_{t,(n-1)\tau}  \prod_{k= n-1}^{k=1} T_{k}
\end{equation}
of the \textit{one-step} evolution maps defined as
\begin{equation}\label{One-step-T}
T_{k} := T_{k\tau,(k-1)\tau} = e^{\tau L_{k}} \  \   \  {\rm{and}} \, \  \  T_{t,(n-1)\tau}= e^{\nu(t) L_{n}} \ .
\end{equation}
Consequently, the evolution of the initial state of the system (\ref{normal-states}) can be expressed as
\begin{equation}\label{S-state-evol}
\omega_{\mathcal{S}}^{t}(\cdot) = {\rm{Tr}} (\, \cdot \ T_{t,0}(\rho_{\mathcal{C}} \otimes \rho_{\mathcal{A}})) \ .
\end{equation}

The mathematical study of this kind of dynamics, for different types of repeated interaction $K_n (t)$ (\ref{Ham-Model}),
was initiated in papers by Attal, Bruneau, Joye, Merkli, Pautrat, Pillet:
\cite{BJM06,APa06,AJ107,AJ207,BJM08,BPi09,BJM10}. These works provide a rigorous framework for such physical
phenomenon as the "one-atom maser", see e.g. \cite{MWM,FJMb}.
The important mathematical aspect of repeated interactions is a piecewise constant (random) Hamiltonian dynamics like
(\ref{Sol-Liouv-Eq}), which in certain limit of $\tau \rightarrow 0$ may produce an effective quantum Markovian
dynamics, which drives the total system to an asymptotic state. In our case, we will be able
to obtain the exact asymptotics of the dynamics of our simple model directly, i.e.,
without taking a limit.

In the next part of the present paper (Section \ref{HD}) we exploit specific structure of the
Hamiltonian dynamics (\ref{Sol-Liouv-Eq}) and a special form of interaction (\ref{W-int}) to work out an effective
Hamiltonian
evolution of the {perfectly isolated} cavity $\mathcal{C}$. In this case, the pumping of the cavity by the atomic
beam leads to an unlimited growth of the number of photons. This means that the limiting state
of the cavity will not be described by a density matrix. For this case, our results concern evolution
of the photon-number expectation $N(t)$ in the reduced by the partial trace (\ref{C-A-states}), (\ref{S-state-evol})
time-dependent cavity state
\begin{equation}\label{C-state-t}
\omega_{\mathcal{C}}^{t}(\cdot):= \omega_{\mathcal{S}}^{t}(\cdot \otimes \idty)=
\Tr_{\mathcal{C}}( \cdot \ \rho_{\mathcal{C}}(t)) \ \ {\rm{for}} \ \
\rho_{\mathcal{C}}(t):= \Tr_{{\mathcal{A}}} \, (T_{t,0}(\rho_{\mathcal{C}} \otimes \rho_{\mathcal{A}})) \ .
\end{equation}

Note that in our model, see (\ref{Ham-Model}) and (\ref{Ham-n}), the atomic states $\rho_k$ do not evolve:
\begin{equation}\label{commut-atoms}
[\idty \otimes \rho_{k}, H(t)] = 0 \ , \  k \geq 1 \ .
\end{equation}

The form of the initial state (\ref{In-State}) together with (\ref{Sol-Liouv-Eq}) and (\ref{C-state-t}) determine
a \textit{discrete} time
evolution for the cavity state: $\rho_{\mathcal{C}}^{(n)}:= \rho_{\mathcal{C}}(t=n\tau)$, with the recursive formula
\begin{align}\label{C-state-n}
\rho_{\mathcal{C}}^{(n)}:=& \Tr_{{\mathcal{A}}}\rho_{\mathcal{S}}(t)=\Tr_{{\mathcal{A}}}[e^{\tau L_n}...
e^{\tau L_2}e^{\tau L_1}(\rho_{\mathcal{C}}\otimes\bigotimes_{k=1}^n\rho_{k})] \\
=& \Tr_{{\mathcal{A}_n}}[e^{\tau L_n}\{\Tr_{{\mathcal{A}_{n-1}}}...\Tr_{{\mathcal{A}_1}}e^{\tau L_{n-1}}...
e^{\tau L_2}e^{\tau L_1}(\rho_{\mathcal{C}}\otimes\bigotimes_{k=1}^{n-1}\rho_{k})\}\otimes\rho_{n}] \nonumber \\
=& \Tr_{{\mathcal{A}_n}}[e^{\tau L_n}(\rho_{\mathcal{C}}^{(n-1)}\otimes\rho_{n})] \ . \nonumber
\end{align}
Here we denoted the partial trace over the $k$-th atom space $\cH_{\cA_k}$ by $\Tr_{\cA_k}(\cdot)$.

For any density matrix $\rho \in \mathfrak{C}_{1}(\cH_{\mathcal{C}})$ corresponding to a normal state on
the operator algebra $\mathfrak{A}(\cH_{\mathcal{C}})$ we define the mapping $\cL: \rho \mapsto \cL(\rho)$, by
\begin{equation}\label{cL}
\cL(\rho):=\Tr_{\mathcal{A}_k}(e^{\tau L_k}(\rho\otimes\rho_{k})) = \Tr_{\mathcal{A}_k}[e^{-i\tau H_k}
(\rho\otimes\rho_{k}) e^{i\tau H_k}] \ .
\end{equation}
Here the last equality is due to (\ref{L-Gen-n}). Note that the mapping (\ref{cL}) does not depend on $k \geq 1$,
since the atomic
states $\{\rho_{k}\}_{k\geq 1}$ are homogeneous (\ref{atom-state}). Then the cavity state at $t=k\tau$ is defined
by the $k$-th
power of $\cL$:
\begin{equation}\label{cavity state at t=n}
\rho_{\mathcal{C}}^{(k)}=\cL(\rho_{\mathcal{C}}^{(k-1)})=\cL^k(\rho_{\mathcal{C}}) \ .
\end{equation}
Therefore, by (\ref{Sol-Liouv-Eq}), (\ref{C-state-n}) and (\ref{cavity state at t=n}) one obtains that for any time
$t=(n-1)\tau+\nu(t)$, where $\nu(t) \in[0, \tau)$, the cavity state is
\begin{equation}\label{cavity state at t}
\rho_{\mathcal{C}}(t)=\Tr_{\mathcal{A}_{n}}[e^{\nu(t) L_{n}}(\cL^{n-1}(\rho_{\mathcal{C}})\otimes\rho_{n})] \ .
\end{equation}

Our first result concerns the evolution of the expectation value $N(t)$ of the photon-number operator
$\hat{N}:=b^*b$ in
the cavity at the time $t$:
\begin{equation}\label{N(t)}
N(t): = \omega_{\mathcal{C}}^{t}(b^*b)= \Tr_{\mathcal{C}}(b^*b \ \rho_{\mathcal{C}}(t)) \ .
\end{equation}
For  $t=n\tau$ the state of the cavity can be expressed using (\ref{cavity state at t=n}). Then (\ref{N(t)}) yields
\begin{equation}\label{mean-photon-number-n}
N(n\tau)=\Tr_{\mathcal{C}}(b^*b \ \cL^n(\rho_{\mathcal{C}})) \ .
\end{equation}
In the theorem below we suppose that the initial cavity state $\omega_{\mathcal{C}}^{t}|_{t=0}(\cdot)=
\omega_{\mathcal{C}}(\cdot)$ is gauge invariant, i.e.
$e^{i \alpha \hat{N}}\rho_{\mathcal{C}} e^{- i \alpha \hat{N}}= \rho_{\mathcal{C}}$.

%%%%%%%%%%%%%%%%%%%%%%%%%%%%%%%%%%%%%%%%%%%%%%% Theorem 1 %%%%%%%%%%%%%%%%%%%%%%%%%%%%%%%%%%%%%%%%%%%%%%%%%%%%%%%%%%%%%%%
\begin{theorem}\label{N of photons}
Let $\rho_{\mathcal{C}}$ be a gauge-invariant state. Then for a homogeneous beam the expectation value
{\rm{(}}\ref{mean-photon-number-n}{\rm{)}} of the photon-number operator in the cavity at the moment $t=n\tau$ is equal to
\begin{equation}\label{number of photons_Hamiltonian}
N(t)=N(0)+ n \, p(1-p) \, \frac{2\lambda^2}{\epsilon^2} \, (1-\cos\epsilon\tau) +
p^2 \, \frac{2\lambda^2}{\epsilon^2}(1-\cos n\epsilon\tau)\ .
\end{equation}
\end{theorem}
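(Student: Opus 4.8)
The plan is to work throughout with the one-step map $\cL$ of (\ref{cL}) and its dual, exploiting the fact that the dual map leaves a four-dimensional space of cavity operators invariant; then (\ref{mean-photon-number-n}) reduces to an explicitly solvable linear recursion. First I would split $\cL$ into two branches. Since the atomic state (\ref{atom-state}) is diagonal, write $\rho_k = p\,P_+ + (1-p)\,P_-$ with $P_\pm$ the eigenprojections of $\eta_k$ onto $\psi_k^\pm$; because $\idty\otimes E\eta_n$ commutes with $H_n$, the atomic energy contributes only a scalar phase that cancels in the conjugation (\ref{cL}) and may be dropped. On the range of $P_-$ the cavity evolves freely under $\epsilon b^*b$, and on the range of $P_+$ under $\epsilon b^*b + \lambda(b^*+b)$. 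Completing the square,
\begin{equation*}
\epsilon b^*b + \lambda(b^*+b) = D^*(\epsilon b^*b)D - \frac{\lambda^2}{\epsilon}, \qquad D := \exp\!\big((\lambda/\epsilon)(b^*-b)\big),
\end{equation*}
so that $e^{-i\tau(\epsilon b^*b+\lambda(b^*+b))} = D^*e^{-i\tau\epsilon b^*b}D$ up to a scalar phase. Taking the partial trace over $\mathcal{A}_k$ in (\ref{cL}) then gives
\begin{equation*}
\cL(\rho) = (1-p)\,U\rho\,U^* + p\,W\rho\,W^*, \qquad U := e^{-i\tau\epsilon b^*b}, \quad W := D^*UD .
\end{equation*}

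Next I would pass to the Heisenberg picture, $\Tr_{\mathcal{C}}(A\,\cL(\rho)) = \Tr_{\mathcal{C}}(\cL^*(A)\rho)$ with $\cL^*(A) = (1-p)U^*AU + p\,W^*AW$. Using the two elementary identities $U^*bU = e^{-i\epsilon\tau}b$ and $D^*bD = b + \lambda/\epsilon$, a direct computation shows that $\cL^*$ maps $\mathrm{span}\{\idty,b,b^*,b^*b\}$ into itself and acts on it affinely: $\cL^*(\idty)=\idty$, $\cL^*(b^*)$ is the adjoint of $\cL^*(b)$, and
\begin{align*}
\cL^*(b) &= e^{-i\epsilon\tau}b + p(\lambda/\epsilon)(e^{-i\epsilon\tau}-1)\,\idty, \\
\cL^*(b^*b) &= b^*b + p(\lambda/\epsilon)(1-e^{i\epsilon\tau})\,b^* + p(\lambda/\epsilon)(1-e^{-i\epsilon\tau})\,b \\
&\quad + p\,\frac{2\lambda^2}{\epsilon^2}(1-\cos\epsilon\tau)\,\idty .
\end{align*}
Hence $\cL^{*n}(b^*b) = b^*b + A_n b^* + \overline{A_n}\,b + C_n\idty$, and the displayed formulas give the recursions $A_n = e^{i\epsilon\tau}A_{n-1} + p(\lambda/\epsilon)(1-e^{i\epsilon\tau})$ and $C_n - C_{n-1} = p(2\lambda^2/\epsilon^2)(1-\cos\epsilon\tau) + 2p(\lambda/\epsilon)\Re[A_{n-1}(e^{i\epsilon\tau}-1)]$, with $A_0=C_0=0$. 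The first solves to $A_n = p(\lambda/\epsilon)(1-e^{in\epsilon\tau})$; substituting this into the second and telescoping the resulting sum gives $C_n = n\,p(1-p)(2\lambda^2/\epsilon^2)(1-\cos\epsilon\tau) + p^2(2\lambda^2/\epsilon^2)(1-\cos n\epsilon\tau)$.

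Finally I would evaluate in the initial state: by (\ref{mean-photon-number-n}), $N(n\tau) = \Tr_{\mathcal{C}}(\cL^{*n}(b^*b)\rho_{\mathcal{C}}) = N(0) + A_n\Tr_{\mathcal{C}}(b^*\rho_{\mathcal{C}}) + \overline{A_n}\Tr_{\mathcal{C}}(b\rho_{\mathcal{C}}) + C_n$. Gauge invariance of $\rho_{\mathcal{C}}$ gives $\Tr_{\mathcal{C}}(b\rho_{\mathcal{C}}) = \Tr_{\mathcal{C}}(b\,e^{i\theta\hat{N}}\rho_{\mathcal{C}}e^{-i\theta\hat{N}}) = e^{i\theta}\Tr_{\mathcal{C}}(b\rho_{\mathcal{C}})$ for all $\theta$, so $\Tr_{\mathcal{C}}(b\rho_{\mathcal{C}}) = \Tr_{\mathcal{C}}(b^*\rho_{\mathcal{C}}) = 0$, and therefore $N(n\tau) = N(0) + C_n$, which is exactly (\ref{number of photons_Hamiltonian}).

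The step I expect to be the main obstacle is the middle one: arranging the displacement-operator conjugations so that the affine action of $\cL^*$ on $\{\idty,b,b^*,b^*b\}$ comes out cleanly — keeping track of the phases $e^{\pm i\epsilon\tau}$ and the signs of the $\lambda/\epsilon$-shifts — and then solving the coupled $(A_n,C_n)$ recursion, in particular recognizing the telescoping that produces the oscillatory $\cos n\epsilon\tau$ contribution. The two-branch decomposition and the gauge-invariance argument are routine. One technical remark: $b,b^*,b^*b$ are unbounded, but since they are only evaluated in the fixed normal state $\rho_{\mathcal{C}}$ and $\cL$ preserves normality, all the manipulations remain on a common dense domain.
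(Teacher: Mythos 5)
Your proposal is correct and follows essentially the same route as the paper: your two-branch displacement decomposition $\cL(\rho)=(1-p)U\rho U^*+pW\rho W^*$ is exactly Lemma \ref{cL(rho)} (with $D=e^{\lambda(b^*-b)/\epsilon}$ playing the role of the shift $S$), your formulas for $\cL^*(b)$ and $\cL^*(b^*b)$ coincide with (\ref{dual cL})--(\ref{cL(b)}), and solving the affine recursion for $(A_n,C_n)$ is just the paper's inductive verification of (\ref{L^n(b^*b)}) in Lemma \ref{cL-T(B)}, followed by the same gauge-invariance step killing the $b$, $b^*$ terms.
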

%%%%%%%%%%%%%%%%%%%%%%%%%%%%%%%%%%%%%%%%%%%%%%%%%%%%%%%%%%%%%%%%%%%%%%%%%%%%%%%%%%%%%%%%%%%%%%%%%%%%%%%%%%%%%%%%%%%%%%%

If for the initial state $\rho_{\mathcal{C}}$ one takes in the theorem the Gibbs state for photons at the inverse
temperature $\beta$:
\begin{equation}\label{Gibbs-photons}
\rho_{\mathcal{C}}^{\beta}= {e^{-\beta\epsilon b^*b}}/{\Tr_{{\mathcal{C}}} e^{-\beta\epsilon b^*b}} \ ,
\end{equation}
{then the the first term in (\ref{number of photons_Hamiltonian}) is simply $N(0)= (e^{\beta\epsilon}-1)^{-1}$.}

If the initial cavity state $\rho_{\mathcal{C},r,\phi}$ is \textit{not} gauge-invariant with the breaking
gauge-invariance parameter:
$r\, e^{i \phi}:= \omega_{\mathcal{C},r,\phi}(b)= \Tr_{\mathcal{C}}(b \ \rho_{\mathcal{C},r,\phi})$, then
instead of (\ref{number of photons_Hamiltonian}) one gets by Lemma \ref{cL-T(B)}
\begin{equation}\label{number of photons not gauge}
\omega_{\mathcal{C},r,\phi}^{t}(\hat{N}) = N(t)+
p \, \frac{2 \lambda r}{\epsilon}\, [\cos\phi -\cos(n\epsilon\tau - \phi)]\ , \ t=n\tau \ ,
\end{equation}
where $\omega_{\mathcal{C},r,\phi}^{t}(\hat{N})\mid_{t=0} \ = N(0) \geq r^2$. The same lemma yields also
the positivity of (\ref{number of photons not gauge}) for any time $t \geq 0$.

%%%%%%%%%%%%%%%%%%%%%%%%%%%%% Remark %%%%%%%%%%%%%%%%%%%%%%%%%%%%%%%%%%%%%%%%%%%%
\begin{remark}\label{p-dependence}
{If {all} atoms in the beam are in the  {ground-state}} $\otimes_{k\geq1}
\psi_{k}^{-}$, one has: $(E \, \eta_k) \psi_{k}^{-} = 0$,  and  $\rho_{k}\mid_{p=0} \ = (\psi_{k}^{-},
\cdot)_{\cH_{\mathcal{A}_k}} \psi_k^{-}$ is the projection operator. Then by (\ref{number of photons_Hamiltonian})
and (\ref{number of photons not gauge})
$\omega_{\mathcal{C},r,\phi}^{t}(\hat{N}) = \omega_{\mathcal{C},r,\phi}^{t}(\hat{N})\mid_{t=0}$.
Since the mean cavity energy is
$\varepsilon(t) =\Tr_{\mathcal{C}}( H_{\mathcal{C}} \ \rho_{\mathcal{C}}(t)) = \epsilon \, N(t)$.
Therefore, there is no energy transfer from the beam to the cavity when $p=0$. Whereas
in the other extreme case, when  {all} atoms are  {excited} $\rho_{k}(\eta_k) =1$, we obtain
\begin{align}\label{number of photons_Hamiltonian-p=1}
\omega_{\mathcal{C},r,\phi}^{n\tau}(\hat{N})=&\omega_{\mathcal{C},r,\phi}^{n\tau}|_{n=0}(\hat{N})+
\frac{2\lambda^2}{\epsilon^2}(1-\cos n\epsilon\tau) + \\
+ &\frac{2 \lambda r}{\epsilon}\, [\cos\phi -\cos(n\epsilon\tau - \phi)] \ . \nonumber
\end{align}

In case of the gauge-invariant initial state ($r=0$) the mean value of the photon number
(\ref{number of photons_Hamiltonian-p=1})
satisfies the estimate $N(t) \geq N(0)$ and it is oscillating between initial value $N(0)$ and
$N(0)+ {4\lambda^2}/{\epsilon^2}$
with the cavity resonant frequency $\epsilon$.
\end{remark}
%%%%%%%%%%%%%%%%%%%%%%%%%%%%%%%%%%%%%%%%%% Remark %%%%%%%%%%%%%%%%%%%%%%%%%%%%%%%%%%%%%%%%%%%%
\begin{remark}\label{Rabi-dual-evol}
These \textit{Rabi oscillations} {\rm{(}}see e.g. \cite{FJMb}{\rm{)}} are a simple consequence of
non-trivial Heisenberg time evolution of the cavity boson operators generated by Hamiltonian (\ref{Ham-n}).
For example, if $t\in[0, \tau)$, i.e. $n=1$, then we obtain (cf. Lemma \ref{n-dual-on-b}):
\begin{eqnarray}
&& t: b \otimes \idty \mapsto  T_{t,0}^{\ast}(b \otimes \idty):=
e^{i t H_{1}} (b \otimes \idty)e^{-i t H_{1}}  \nonumber \\
&& = e^{t L_{n=1}^\ast}(b \otimes \idty)= e^{-i \epsilon t} b \otimes \idty -
\idty \otimes \frac{\lambda}{\epsilon} \eta_{1} (1 - e^{-i \epsilon t})\, \label{b-evol} .
\end{eqnarray}
Here evolution maps on the algebra $\mathfrak{A}(\cH_{\mathcal{C}})\otimes\mathfrak{A}(\cH_{\mathcal{A}})$:
\begin{equation}\label{T-adj}
T_{k}^{\ast} = e^{\tau L_{k}^\ast} \ , \
T_{t,(n-1)\tau}^{\ast} = e^{\nu(t) L_{n}^\ast}  \ , \ t = (n-1)\tau + \nu(t) \ ,
\end{equation}
are adjoint to (\ref{T-t}), (\ref{One-step-T}) by \textit{duality} with respect to the state (\ref{normal-states}),
see Appendix A.2. Here $L_{k}^\ast$ denotes operator, which is adjoint to the Liouvillian generator (\ref{L-Gen-n}).
Then (\ref{S-state-evol}) and (\ref{T-t}),(\ref{T-adj}) yield
\begin{equation}\label{S-state-evol-adj}
\omega_{\mathcal{S}}^{t}(\cdot)= {\rm{Tr}}\, (T_{t,0}^{\ast}(\cdot) \ \rho_{\mathcal{C}} \otimes \rho_{\mathcal{A}})
\ \ ,\ \
T_{t,0}^{\ast} = \prod_{k=1}^{n-1}e^{\tau L_{k}^\ast} \  e^{\nu(t) L_{n}^\ast} \ .
\end{equation}
\end{remark}
%%%%%%%%%%%%%%%%%%%%%%%%%%%%%%%%%%%%%%%%%%%%%%%% Remark %%%%%%%%%%%%%%%%%%%%%%%%%%%%%%%%%%%%%%%%%%%%%%%%%%%%%%%%%%%%%%%
\begin{remark}\label{G-Inv-breaking}
The Hamiltonian evolution (\ref{b-evol}) \textit{breaks down} the gauge invariance of the initial state
$\omega_{\mathcal{C}}(\cdot)= {\Tr}_{\mathcal{C}} (\, \cdot \ \rho_{\mathcal{C}})$.
Indeed, by (\ref{C-A-states}) and (\ref{normal-states}) one gets for the initial gauge-invariant cavity state:
$\omega_{\mathcal{C}}(b) = \omega_{\mathcal{S}}(b \otimes \idty) = 0$.
Since by (\ref{C-state-t}) and (\ref{S-state-evol-adj}) we have
\begin{equation}\label{G-InvBreak1}
\omega_{\mathcal{C}}^{t}(b) = \omega_{\mathcal{S}}^{t}(b \otimes \idty) =
{\rm{Tr}}\, (T_{t,0}^{\ast}(b \otimes \idty) \ \rho_{\mathcal{C}} \otimes \rho_{\mathcal{A}}) \ ,
\end{equation}
(\ref{atom-state}) and (\ref{b-evol}) imply that for $t\in[0, \tau)$
\begin{equation}\label{G-InvBreak2}
\omega_{\mathcal{C}}^{t}(b) = p \frac{\lambda}{\epsilon} (e^{-i \epsilon t} -1) \ .
\end{equation}
This property of dynamics is due to interaction (\ref{Ham-n}), and it has also a non-trivial impact for the
open cavity evolution.
\end{remark}
%\smallskip
%%%%%%%%%%%%%%%%%%%%%%%%%%%%%%%%%%%%%%%%%%%%%%%%%%%%%%%%% 1.3%%%%%%%%%%%%%%%%%%%%%%%%%%%%%%%%%%%%%%%%%%%%%%%%%%%%%%%%%%%%
%\noindent \textbf{1.3 Quantum Dynamics of Open Cavity.}
\subsection{Quantum Dynamics of Open Cavity}

To make a contact of our model with a more physically realistic situation, we consider an open cavity.
This allows the photons to leak \textit{out} of the cavity, but also to diffuse \textit{in} from the environment.
In Theorems \ref{cavity-lim-state-sigma} and \ref{number of photons-sigma} we show that if the leaking rate is
greater than the rate the environmental pumping, the photon number in the cavity stabilizes at a finite mean value.

We consider this case in the framework of Kossakowski-Lindblad extension of the Hamiltonian Dynamics to
\textit{irreversible}
Quantum Dynamics (see Appendix A.1 and \cite{AlFa,AJPII}, ) with time-dependent generator:
\begin{align}\label{Generator}
L_{\sigma}(t)(\rho_{\mathcal{S}}(t)):= &-i[H(t),\rho_{\mathcal{S}}(t)]+
\sigma_- \ b\otimes\idty \ \rho_{\mathcal{S}}(t)\ b^* \otimes\idty -\frac{\sigma_-}{2}
\{bb^*\otimes\idty,\rho_{\mathcal{S}}(t)\}  \nonumber\\
&+ \sigma_+ \ b^*\otimes\idty \ \rho_{\mathcal{S}}(t) \ b \otimes\idty -
\frac{\sigma_+}{2}\{b^*b\otimes\idty, \rho_{\mathcal{S}}(t)\} \ ,
\end{align}
for the \textit{complete positive} evolution of total system, when the initial
state is $\rho_{\mathcal{S}}(t=0) = \rho_{\mathcal{C}}\otimes\rho_{\mathcal{A}}$ (\ref{normal-states}).
For $\sigma_{\mp}> 0$ the $\sigma$-part of this generator (cf (\ref{L-Gen-t})) corresponds to the
non-Hamiltonian part of dynamics. Here $\sigma_-$ describes the rate of the photons \textit{leaking} out of the open
cavity into the environment, whereas $\sigma_+$ corresponds to the cavity \textit{pumping} rate due to the
photon infiltration from the environment. It has to be distinguished from the pumping
mechanism due to the interaction with the atomic beam.

Note that similar to (\ref{Liouville's-DE}) the evolution of the state is defined by the solution of the non-autonomous
Cauchy problem corresponding to the  {time-dependent} generator (\ref{Generator}). The proof of existence of this
solution is a non-trivial problem (Appendix A.1), see for example \cite{NVZ} for the case of lattice systems and
 {bounded} generators. For  {unbounded} generators and for a general setting the proof that
the Kossakowski-Lindblad generator in the form (\ref{Generator}) corresponds to a properly defined continuous evolution
is more involved \cite{NZ,VWZ}. A separate problem is to prove that this map is \textit{trace-preserving} and
verifies the property of \textit{complete positivity}, which are indispensable for correct description of the
open system evolution, see Appendix A.1 and the references there.

To avoid these complications we consider the case of the {\textit{tuned}} repeated interaction, when the
Hamiltonian dynamics is piecewise  {autonomous} for each interval $[(k-1)\tau, k\tau)$. Then for
$t\in[(k-1)\tau,k\tau)$ the generator (\ref{Generator}) has the form:
\begin{align}
L_{\sigma,k}(\cdot):=& -i[H_k,\cdot]+ \sigma_{-} \ b\otimes\idty \ (\cdot)\ b^* \otimes\idty -
\frac{\sigma_-}{2}\{bb^*\otimes\idty, \cdot\} \nonumber \\
& +\sigma_{+} \ b^* \otimes\idty \ (\cdot)\ b \otimes\idty-\frac{\sigma_+}{2}\{b^*b \otimes\idty, \cdot\} \ , \ \ \
\ k \geq 1  \ , \label{Generator-KL}
\end{align}
and $L_{\sigma, k}\big|_{\sigma_{\mp} = 0} = L_{k}$, see (\ref{L-Gen-n}).
Hence, the solution of the non-autonomous Cauchy problem
\begin{equation}\label{Liouville's-DE-sigma}
\frac{d}{dt}\rho_{\mathcal{S,\sigma}}(t) = L_{\sigma}(t)(\rho_{\mathcal{S,\sigma}}(t)) \ , \
\rho_{\mathcal{S,\sigma}}(t)\big|_{t=0}= \rho_{\mathcal{C}} \otimes \rho_{\mathcal{A}} \ ,
\end{equation}
for the piecewise \textit{constant} generators (\ref{Generator-KL}), has the same form as for the ideal cavity,
$\sigma_{\mp} = 0$ (\ref{Sol-Liouv-Eq}), (\ref{T-t}):
\begin{align}\label{Sol-Liouv-Eq-sigma}
\rho_{\mathcal{S,\sigma}}(t)&=T_{t,0}^\sigma(\rho_{\mathcal{C}} \otimes \rho_{\mathcal{A}}):= \\
&= e^{\nu(t) L_{\sigma, n}} e^{\tau L_{\sigma, n-1}} \, ... \
e^{\tau L_{\sigma, 2}}e^{\tau L_{\sigma, 1}}(\rho_{\mathcal{C}} \otimes \rho_{\mathcal{A}})
\, . \nonumber
\end{align}
Here $t = (n-1)\tau + \nu(t)$ and the mapping $T_{t,0}^\sigma$ is the composition of the one-step
non-Hamiltonian evolution maps defined by (\ref{Generator-KL}):
\begin{equation}\label{One-step-T-sigma}
T_{k}^\sigma := T_{k\tau,(k-1)\tau}^\sigma = e^{\tau L_{\sigma, k}} \  \   \  {\rm{and}} \, \  \
T_{t,(n-1)\tau}^\sigma= e^{\nu(t) L_{\sigma,n}} \ .
\end{equation}

By duality with respect to the initial state
$\omega_{\mathcal{S}}^{0}(\cdot)=\Tr ( \cdot \ \rho_{\mathcal{C}} \otimes \rho_{\mathcal{A}})$ one can now
define the adjoint evolution mapping $\{(T_{t,0}^\sigma)^*\}_{t\geq 0} $ by the relation
\begin{equation}\label{dual-sigma}
\omega_{\mathcal{S},\sigma}^{t}(A)= \Tr ( A \ T_{t,0}^{\sigma}(\rho_{\mathcal{C}} \otimes \rho_{\mathcal{A}}))=
\omega_{\mathcal{S}}^{0}((T_{t,0}^\sigma)^*(A)) \ ,
\end{equation}
for any $A \in \mathfrak{A}(\cH_{\mathcal{C}} \otimes \cH_{\mathcal{A}})$. Then similar to the nonleaky case
(\ref{S-state-evol-adj}) we obtain for any $t = (n-1)\tau + \nu(t)$ that
\begin{equation}\label{S-state-evol-adj-sigma}
\omega_{\mathcal{S},\sigma}^{t}(\cdot)=
{\rm{Tr}}\, ((T_{t,0}^\sigma)^{\ast}(\cdot) \ \rho_{\mathcal{C}} \otimes \rho_{\mathcal{A}})
\ \ ,\ \
(T_{t,0}^\sigma)^{\ast} = \prod_{k=1}^{n-1}e^{\tau L_{\sigma,k}^\ast} \  e^{\nu(t) L_{\sigma,n}^\ast} \ .
\end{equation}
Here $\{L_{\sigma,k}^\ast\}_{k \geq 1}$ are generators, which are adjoint to (\ref{Generator-KL}).

Since the restriction of (\ref{Sol-Liouv-Eq-sigma}) to the dynamics of a cavity state
$\rho \in \mathfrak{C}_{1}(\cH_{\mathcal{C}})$
is the partial trace over beam states, the corresponding discrete evolution mappings (\ref{cL}) have to
be modified for the open
cavity as follows:
\begin{equation}\label{cL-sigma}
\cL_{\sigma}(\rho):=\Tr_{\mathcal{A}_k}(e^{\tau L_{\sigma,k}}(\rho\otimes\rho_{k})) \ .
\end{equation}
The mapping (\ref{cL-sigma}) does not depend on $k \geq 1$, since the atomic states $\{\rho_{k}\}_{k\geq 1}$ are
homogeneous
(\ref{atom-state}).
If $\rho_{\mathcal{C}}:= \rho_{\mathcal{C,\sigma}}(t)\mid_{t=0}$ is initial state of the open cavity, then similar to
(\ref{cavity state at t}) we obtain by (\ref{cL-sigma}) for $\rho_{\mathcal{C,\sigma}}(t)$ at the moment
$t=(n-1)\tau + \nu(t)$:
\begin{equation}\label{cavity state at t-sigma}
\rho_{\mathcal{C,\sigma}}(t)=\Tr_{\mathcal{A}_{n}}[e^{\nu(t) L_{\sigma,n}}(\cL_{\sigma}^{n-1}(\rho_{\mathcal{C}})
\otimes\rho_{n})]  \ .
\end{equation}

By (\ref{cavity state at t-sigma}) and (\ref{cL-sigma}) we obtain for the time-dependent open cavity state
\begin{equation}\label{C-state-t-sigma}
\omega_{\mathcal{C,\sigma}}^{t}(\cdot):= \omega_{\mathcal{S,\sigma}}^{t}(\, \cdot \otimes \idty)=
\Tr_{\mathcal{C}}(\, \cdot \ \rho_{\mathcal{C,\sigma}}(t)) \ ,
\end{equation}
where
$\omega_{\mathcal{S,\sigma}}^{t}(\cdot):= \Tr ( \, \cdot \ T_{t,0}^{\sigma}(\rho_{\mathcal{C}} \otimes
\rho_{\mathcal{A}}))$
by (\ref{Sol-Liouv-Eq-sigma}). We also define
\begin{equation}\label{t=0-and-inf}
\omega_{\mathcal{C,\sigma}}^{0}(\cdot):= \Tr_{\mathcal{C}}(\, \cdot \ \rho_{\mathcal{C}}) \ \ \ {\rm{and}}
\ \ \ \omega_{\mathcal{C,\sigma}}(\cdot):=\lim_{t\rightarrow\infty} \omega_{\mathcal{C,\sigma}}^{t}(\cdot) \ .
\end{equation}

To study the infinite-time limit $\omega_{\mathcal{C,\sigma}}(\cdot)$, we consider the {functional}
\begin{equation}\label{Weyl-func-sigma}
\omega_{\mathcal{C,\sigma}}(W(\zeta))=\lim_{t\rightarrow\infty} \omega_{\mathcal{C,\sigma}}^{t}(W(\zeta)) \ ,
\end{equation}
generated by the Weyl operators on $\cH_{\mathcal{C}}$:
\begin{equation}\label{Weyl}
W(\zeta)=e^{\frac{i}{\sqrt{2}}(\overline{\zeta} b+\zeta b^*)} \ , \ \zeta\in\mathbb{C} \ .
\end{equation}
Notice that convergence (\ref{Weyl-func-sigma}) on the family of the Weyl operators guarantees the weak limit
\cite{BrRo1} of the states $\omega_{\mathcal{C}}^{t}(\cdot)$, when $t\rightarrow\infty$, see Appendices A.2 and A.3.
The following theorem is our first result about the open cavity.
%%%%%%%%%%%%%%%%%%%%%%%%%%%%%%%%%%%%%%%%%%%%%%%%%%%%%% Theorem 2 %%%%%%%%%%%%%%%%%%%%%%%%%%%%%%%%%%%%%%%%%%%%%%%%%%%%%%
\begin{theorem}\label{cavity-lim-state-sigma}
Let $\sigma_{+} \geq 0$  and $\sigma_{-} - \sigma_{+}  > 0$.  Then for any gauge-invariant
initial cavity state $\rho_{\mathcal{C}}$ and for a homogenous atomic beam with parameter
$p={\rm{Tr}}_{\cH_{\mathcal{A}_n}}(\eta_n \ \rho_{n})$, the limiting cavity state
\begin{equation}\label{lim-cavity-sigma}
\omega_{\mathcal{C},\sigma}(\cdot):= \lim_{t\rightarrow\infty} \omega_{\mathcal{C},\sigma}^{t}(\cdot)
\end{equation}
exists and it does not dependent on $\rho_{\mathcal{C}}$. Here the limit means trace-norm convergence
of the sequence {\rm{(}}\ref{cavity state at t-sigma}{\rm{)}} to a density matrix $\rho_{C,\sigma}:$
\begin{equation}\label{lim-dens-matr}
\lim_{t\rightarrow\infty} \|\rho_{\mathcal{C},\sigma} - \rho_{\mathcal{C},\sigma}(t)\|_1=0,
\end{equation}
where the norm $\|\cdot\|_1$ denotes the trace-norm on the space of trace-class operators
${\mathfrak{C}_{1}(\mathcal{H}_\mathcal{C})}$, see {\rm{(}}\ref{Tr-norm-conv}{\rm{)}}, Appendix A.2. The explicit
form of the
limiting  {functional} {\rm{(}}\ref{Weyl-func-sigma}{\rm{)}} is
\begin{align}\label{lim-W-funct-sigma}
&\omega_{\cC,\sigma}(W(\zeta))=e^{-\frac{|\zeta|^2}{4}\frac{\sigma_-+\sigma_+}{\sigma_--\sigma_+}} \times \\
&\times \prod_{k=0}^{\infty}
\left\{p \exp{\frac{1}{\sqrt{2}}\left(\frac{\lambda}{\mu}(1-e^{-\mu\tau})e^{-k\mu\tau}\overline{\zeta}-
\frac{\lambda}{\overline{\mu}}(1-e^{-\overline{\mu}\tau})e^{-k\overline{\mu}\tau}\zeta\right)}+1-p\right\} \,,
\nonumber
\end{align}
where $\mu: = i \epsilon + (\sigma_- - \sigma_+)/2$.
\end{theorem}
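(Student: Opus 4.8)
The plan is to reduce the whole statement to the action of the one‑step cavity channel $\cL_\sigma$ (equivalently of its Heisenberg dual $\cL_\sigma^{*}$ on Weyl operators) and to read off the $n\to\infty$ asymptotics directly. By \eqref{cavity state at t=n} and \eqref{cL-sigma} one has $\rho_{\mathcal C,\sigma}(n\tau)=\cL_\sigma^{\,n}(\rho_{\mathcal C})$, and by the duality \eqref{dual-sigma} it suffices to compute $(\cL_\sigma^{*})^{n}(W(\zeta))$, since convergence of the Weyl functional together with continuity at $\zeta=0$ gives weak convergence of the states (Appendices~A.2–A.3); the extra partial interval $e^{\nu(t)L_{\sigma,n}}$ present for $t=(n-1)\tau+\nu(t)$ is absorbed at the very end by appending one more explicitly bounded factor.

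First I would establish the structural fact that, since $\eta_k$ is a rank‑one projection on $\cH_{\mathcal A_k}=\mathbb{C}^{2}$ commuting with $H_k$ and with the dissipative part of \eqref{Generator-KL}, the generator $L_{\sigma,k}$ is block‑diagonal along $\cH_{\mathcal A_k}=\mathbb{C}\psi_k^{+}\oplus\mathbb{C}\psi_k^{-}$: on the excited block it generates a linearly driven, damped‑and‑pumped oscillator ($H=\epsilon b^{*}b+\lambda(b^{*}+b)$, loss $\sigma_-$, gain $\sigma_+$), on the ground block the same oscillator with no driving. For a quadratic‑plus‑linear Kossakowski–Lindblad generator the Heisenberg evolution of a Weyl operator is explicit — a Gaussian scalar times a Weyl operator with linearly transformed argument; using $e^{tL^{*}}(b\otimes\idty)=e^{-\mu t}\,b\otimes\idty-\tfrac{i\lambda}{\mu}(1-e^{-\mu t})\,\idty\otimes\eta_k$, $\mu=i\epsilon+(\sigma_--\sigma_+)/2$ (the open‑cavity analogue of \eqref{b-evol}, cf.\ Remark~\ref{Rabi-dual-evol} and Appendix~A.2), and then tracing the $k$‑th atom against $\rho_k=\mathrm{diag}(p,1-p)$, I expect
\[
\cL_\sigma^{*}(W(\zeta))=e^{c_0(\zeta)}\bigl((1-p)+p\,e^{\ell(\zeta)}\bigr)\,W\bigl(e^{-\overline\mu\tau}\zeta\bigr),
\]
with $c_0(\zeta)=-\tfrac{|\zeta|^{2}}{4}\tfrac{\sigma_-+\sigma_+}{\sigma_--\sigma_+}(1-e^{-(\sigma_--\sigma_+)\tau})$ and $\ell(\zeta)=\tfrac1{\sqrt2}\bigl(\tfrac{\lambda}{\mu}(1-e^{-\mu\tau})\overline\zeta-\tfrac{\lambda}{\overline\mu}(1-e^{-\overline\mu\tau})\zeta\bigr)$ (the key point being that the Gaussian factor is the same on both atomic blocks, only the coherent displacement $\ell$ being absent on the ground block). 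As the right‑hand side is a scalar multiple of a single Weyl operator, iteration is immediate:
\[
(\cL_\sigma^{*})^{n}(W(\zeta))=\Bigl(\textstyle\prod_{k=0}^{n-1}e^{c_0(e^{-k\overline\mu\tau}\zeta)}\Bigr)\Bigl(\textstyle\prod_{k=0}^{n-1}\bigl((1-p)+p\,e^{\ell(e^{-k\overline\mu\tau}\zeta)}\bigr)\Bigr)\,W\bigl(e^{-n\overline\mu\tau}\zeta\bigr),
\]
and pairing with $\rho_{\mathcal C}$ exhibits $\omega_{\mathcal C,\sigma}^{n\tau}(W(\zeta))$ as this double product times $\Tr_{\mathcal C}\!\bigl(\rho_{\mathcal C}W(e^{-n\overline\mu\tau}\zeta)\bigr)$.

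Next I would pass to the limit using the hypothesis $\sigma_--\sigma_+>0$, i.e.\ $\Re\mu=(\sigma_--\sigma_+)/2>0$. Then $e^{-n\overline\mu\tau}\zeta\to0$, so $W(e^{-n\overline\mu\tau}\zeta)\to\idty$ strongly and the last factor tends to $1$ irrespective of $\rho_{\mathcal C}$ — this is the source of the independence from the initial state. The Gaussian product is a geometric sum, $c_0(e^{-k\overline\mu\tau}\zeta)=e^{-k(\sigma_--\sigma_+)\tau}c_0(\zeta)$, telescoping to $\exp\!\bigl(-\tfrac{|\zeta|^{2}}{4}\tfrac{\sigma_-+\sigma_+}{\sigma_--\sigma_+}\bigr)$; and $\ell(e^{-k\overline\mu\tau}\zeta)=O(e^{-k(\sigma_--\sigma_+)\tau/2})$, so each factor of the second product is $1+O(e^{-k(\sigma_--\sigma_+)\tau/2})$ and the infinite product converges absolutely and locally uniformly in $\zeta$; substituting $e^{-k\overline\mu\tau}\zeta$ into $\ell$ reproduces exactly the $k$‑th factor of \eqref{lim-W-funct-sigma}. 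This proves \eqref{Weyl-func-sigma}–\eqref{lim-W-funct-sigma} along $t=n\tau$, and the locally uniform convergence shows the limit functional is continuous at $\zeta=0$; being a pointwise limit of characteristic functionals of states with value $1$ at the origin, it is, by the quantum Bochner (Araki–Segal) theorem, the characteristic functional of a state $\rho_{\mathcal C,\sigma}$, which is normal since its second moment is finite (differentiate \eqref{lim-W-funct-sigma}, or invoke Theorem~\ref{number of photons-sigma}).

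The hard part will be upgrading this weak convergence to the trace‑norm convergence \eqref{lim-dens-matr}, since weak‑$*$ convergence of normal states to a normal state does not by itself imply trace‑norm convergence. I would use that $\cL_\sigma$ is completely positive and trace preserving (a composition of $\rho\mapsto\rho\otimes\rho_k$, the semigroup map $e^{\tau L_{\sigma,k}}$, and a partial trace), hence a trace‑norm contraction; by density it then suffices to prove trace‑norm convergence of $\cL_\sigma^{\,n}(\rho_0)$ for $\rho_0$ a coherent‑state projector, where $\cL_\sigma^{\,n}(\rho_0)$ is an explicit finite convex combination of displaced thermal states whose thermal parameters converge geometrically to $\bar n=\sigma_+/(\sigma_--\sigma_+)$ and whose displacements are partial sums $\sum_{j\le n}\xi_j d_j$ of a Bernoulli sum with geometrically small weights $d_j$; since $\sum_j|d_j|^{2}<\infty$ these partial sums converge in $L^{2}$, and $\delta\mapsto\mathcal D(\delta)\rho^{\mathrm{th}}_{\bar n}\mathcal D(\delta)^{*}$ is Lipschitz from $(\mathbb{C},|\cdot|)$ into $(\mathfrak{C}_{1}(\cH_{\mathcal C}),\|\cdot\|_1)$, which yields a Cauchy estimate. (Alternatively, the already‑established weak convergence together with $N_\sigma(n\tau)\to N_\sigma(\infty)<\infty$ from Theorem~\ref{number of photons-sigma} forces trace‑norm convergence, via the standard fact that for density matrices weak convergence plus convergence of $\Tr(b^{*}b\,\rho_n)$ to the limiting value is equivalent to trace‑norm convergence.) Independence of the limit from $\rho_{\mathcal C}$ then follows from the preceding step, and the times $t=(n-1)\tau+\nu(t)$ are handled by the identical computation with one more factor $e^{\nu(t)L_{\sigma,n}^{*}}$ inserted.
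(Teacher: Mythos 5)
Your proposal is correct and follows essentially the same route as the paper: block-diagonalize the one-step map along the atomic projection $\eta_k$, write $\cL_{\sigma}^{*}(W(\zeta))$ as a Gaussian factor times $\{p\,e^{\ell(\zeta)}+1-p\}$ times $W(e^{-\overline{\mu}\tau}\zeta)$, iterate, use $\Re\mu=(\sigma_--\sigma_+)/2>0$ to make the Weyl factor tend to $\idty$ (whence independence of $\rho_{\mathcal{C}}$), sum the Gaussian geometric series, control the infinite product exactly as in the paper's estimate (\ref{h-k}), and conclude via Araki--Segal and the weak-to-trace-norm upgrade for density matrices cited in Appendix A.2. The only cosmetic differences are that the paper obtains the quasi-free one-step formula by solving the differential equations for $\zeta(\tau)$ and $\Omega_{\lambda,\tau}(\zeta)$ after the shift $S$, whereas you invoke the general Gaussian structure of linear Kossakowski--Lindblad dynamics, and your coherent-state argument for (\ref{lim-dens-matr}) is a valid but unnecessary alternative to the standard fact you also mention.
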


This result is obviously different from the case: $\sigma_- = \sigma_+ =0$, when there is no
regular limiting state, because the number of photons in the ideal cavity cavity unboundedly increases with time,
see Theorem \ref{N of photons}.
%%%%%%%%%%%%%%%%%%%%%%%%%%%%%%%%%%%%%%%%%%%%%%%%%%%%%% Corollary %%%%%%%%%%%%%%%%%%%%%%%%%%%%%%%%%%%%%%%%%%%%%%%%%%%%%%
\begin{corollary}\label{cavity-lim-state-sigma-corollary}
Theorem \ref{cavity-lim-state-sigma} implies that $\omega_{C,\sigma}(\cdot)$  is a regular, normal
(see Appendix A.5) and, in general, non-gauge-invariant state. One also sees that it is not quasi-free for $0 < p < 1$,
but it obviously does for $p =0$ or $p=1$.
\end{corollary}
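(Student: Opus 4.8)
The plan is to read off all four claimed properties from the explicit formula \eqref{lim-W-funct-sigma} for the characteristic functional $\omega_{\cC,\sigma}(W(\zeta))$, treating that formula (and the convergence statements) from Theorem \ref{cavity-lim-state-sigma} as given. First I would establish \emph{regularity}: one must check that $\lambda \mapsto \omega_{\cC,\sigma}(W(\lambda\zeta))$ is continuous at $\lambda = 0$ for each fixed $\zeta \in \bC$ (this is the standard criterion, recalled in Appendix A.5). The Gaussian prefactor $\exp(-\tfrac{|\zeta|^2}{4}\tfrac{\sigma_-+\sigma_+}{\sigma_--\sigma_+})$ is manifestly continuous in $\zeta$, using $\sigma_- - \sigma_+ > 0$. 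For the infinite product, each factor is an entire function of $\zeta$ bounded in modulus by $1$ for purely the structure $p\,e^{i\theta_k} + (1-p)$ with $\theta_k$ real (the exponent is $i$ times a real linear form in $\zeta$ when $\zeta$ is real, and more generally the exponent is $\tfrac{1}{\sqrt2}(a_k\overline\zeta - \overline{a_k}\zeta)$ with $a_k = \tfrac{\lambda}{\mu}(1-e^{-\mu\tau})e^{-k\mu\tau}$, which is purely imaginary when $\zeta$ is real); since $|a_k| \le C e^{-k(\sigma_--\sigma_+)\tau/2}$ is summable, the product converges uniformly on compact sets in $\zeta$, hence defines a continuous (indeed entire) function. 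Continuity at $\zeta = 0$ with value $1$ follows, giving regularity; normality then follows because, as stated in Theorem \ref{cavity-lim-state-sigma}, the limit is attained in trace norm by the density matrices $\rho_{\cC,\sigma}(t)$, so $\rho_{\cC,\sigma}$ is itself a density matrix and $\omega_{\cC,\sigma}$ is a normal state.

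Next I would address \emph{non-gauge-invariance in general} and the behaviour at $p \in \{0,1\}$. Gauge invariance of a state means $\omega(W(e^{i\alpha}\zeta)) = \omega(W(\zeta))$ for all $\alpha$; equivalently, the functional depends only on $|\zeta|$. The Gaussian prefactor has this property, so gauge-invariance is controlled entirely by the product. For $p = 0$ every factor equals $1$ and $\omega_{\cC,\sigma}(W(\zeta)) = \exp(-\tfrac{|\zeta|^2}{4}\tfrac{\sigma_-+\sigma_+}{\sigma_--\sigma_+})$, which is gauge-invariant (and is exactly the characteristic functional of a quasi-free — Gaussian — gauge-invariant state, the Gibbs-type state with mean photon number $\tfrac{1}{2}(\tfrac{\sigma_-+\sigma_+}{\sigma_--\sigma_+} - 1)$). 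For $p = 1$ each factor collapses to a single exponential of a linear form in $\zeta,\overline\zeta$, so the product is $\exp(\tfrac{1}{\sqrt2}\sum_k(a_k\overline\zeta - \overline{a_k}\zeta))$ with $\sum_k a_k = \tfrac{\lambda}{\mu}(1-e^{-\mu\tau})\sum_{k\ge0}e^{-k\mu\tau} = \tfrac{\lambda}{\mu}$; hence $\omega_{\cC,\sigma}(W(\zeta))$ is a Gaussian times the exponential of a linear form — precisely the characteristic functional of a coherent (hence quasi-free) state displaced by $\tfrac{\lambda}{\mu}$, and this is \emph{not} gauge-invariant since $\lambda/\mu \ne 0$. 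For $0 < p < 1$ I would exhibit a concrete $\zeta$ and $\alpha$ for which the product changes: e.g. compare $\zeta$ real with $i\zeta$; the $k=0$ factor becomes $p\exp(\tfrac{1}{\sqrt2}\cdot 2i\,\mathrm{Im}(\tfrac{\lambda}{\mu}(1-e^{-\mu\tau})\overline\zeta)) + (1-p)$ versus the rotated version, and these differ for generic parameters. More robustly, $\omega_{\cC,\sigma}(b) = \tfrac{\partial}{\partial(\cdot)}\big|_{0}$ of the functional is nonzero (it equals, up to normalization, $p$ times $\sum_k a_k = p\lambda/\mu \ne 0$), and a gauge-invariant state must have $\omega(b) = 0$; this single computation settles non-gauge-invariance for every $p > 0$.

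Finally, \emph{non-quasi-freeness for $0 < p < 1$.} A state is quasi-free iff its characteristic functional has the form $\exp(L(\zeta) - \tfrac14 Q(\zeta))$ with $L$ linear and $Q$ a quadratic form — equivalently, $\log \omega(W(\zeta))$ is a polynomial of degree $\le 2$ in $(\zeta,\overline\zeta)$. The plan is to compute the Taylor expansion of $\log$ of the $k$-th factor: $\log(p\,e^{x_k} + 1 - p)$ where $x_k = \tfrac{1}{\sqrt2}(a_k\overline\zeta - \overline{a_k}\zeta)$ is linear in $(\zeta,\overline\zeta)$. The cumulant expansion of $\log(p\,e^{x} + 1-p)$ in powers of $x$ has the form $px + \tfrac12 p(1-p)x^2 + \tfrac16 p(1-p)(1-2p)x^3 + \cdots$, so the cubic (and higher) cumulants are nonzero precisely when $0 < p < 1$ (the cubic term vanishes additionally at $p = 1/2$, but then the quartic term $\tfrac{1}{24}p(1-p)(1-6p+6p^2)x^4$ is nonzero). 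Summing over $k$, the total $\log$-functional acquires a genuine cubic (or quartic) term in $(\zeta,\overline\zeta)$ — one checks the relevant coefficient, e.g. $\sum_k a_k^3 = \tfrac{\lambda^3}{\mu^3}(1-e^{-\mu\tau})^3\sum_k e^{-3k\mu\tau} = \tfrac{\lambda^3}{\mu^3}\tfrac{(1-e^{-\mu\tau})^3}{1 - e^{-3\mu\tau}} \ne 0$ — so $\log\omega_{\cC,\sigma}(W(\zeta))$ is not a degree-$\le 2$ polynomial and the state is not quasi-free. For $p = 0$ the $\log$ is purely the Gaussian term (quasi-free), and for $p = 1$ it is linear-plus-Gaussian (quasi-free, a coherent state), completing the dichotomy. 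The main obstacle is the purely bookkeeping one of making the "polynomial of degree $\le 2$" characterization of quasi-freeness precise in the present conventions and verifying that the summed-over-$k$ cubic/quartic coefficient does not accidentally vanish; this reduces to the elementary geometric-series identities above, so no real difficulty arises once the criterion is pinned down.
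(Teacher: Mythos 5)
Your proposal is correct, and it takes a genuinely different (and in places stronger) route than the paper. The paper verifies the corollary by rewriting the infinite product in (\ref{lim-W-funct-sigma}) as a uniformly convergent infinite \emph{convex combination} of terms of the generic form (\ref{terms}), each of which is (up to normalization) the characteristic functional of a non-gauge-invariant quasi-free state; regularity and positivity then follow term-by-term from the Araki--Segal theorem together with the estimate (\ref{h-k}), normality follows because the combination is a convergent convex mixture of normal quasi-free states, and non-quasi-freeness is essentially only asserted from the structure of that mixture. You instead check the Araki--Segal continuity directly from the uniform convergence of the product (using that each exponent $\frac{1}{\sqrt2}(a_k\overline\zeta-\overline{a_k}\zeta)$ is purely imaginary for \emph{every} $\zeta$, not only real $\zeta$, so each factor has modulus at most one), you obtain normality more cleanly from the trace-norm convergence (\ref{lim-dens-matr}) already contained in Theorem \ref{cavity-lim-state-sigma}, and you settle non-gauge-invariance by the single computation $\omega_{\cC,\sigma}(b)\neq0$ for $p>0$ (the exact value from (\ref{action on b}) is $-ip\lambda/\mu$, but only its non-vanishing matters). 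Most notably, your cumulant expansion of $\log(pe^{x}+1-p)$, with the geometric-series check that $\sum_k a_k^3\neq0$ (and the quartic fallback at $p=1/2$, where the third cumulant vanishes), is an actual \emph{proof} that $\log\omega_{\cC,\sigma}(W(\zeta))$ is not a polynomial of degree $\le2$ in $(\zeta,\overline\zeta)$ and hence that the state is not quasi-free for $0<p<1$ --- a point the paper leaves at the level of a plausibility statement. What the paper's decomposition buys in exchange is that positivity (condition (3) of Theorem \ref{Araki-Segal}) is inherited for free from each quasi-free summand, whereas in your argument it comes indirectly through normality; both routes are sound.
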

To verify these properties notice that by the Araki-Segal theorem (see Theorem \ref{Araki-Segal}), any \textit{regular} state over
CCR($\mathcal{H}_{\mathcal{C}}$) is uniquely defined by the characteristic  {functional}
$\left\{\zeta \mapsto \omega_{C,\sigma}(W(\zeta))\right\}_{\zeta\in\mathbb{C}}$. To check the conditions of this
theorem we express the infinite product in (\ref{lim-W-funct-sigma}) as a uniformly converging (by estimate (\ref{h-k}))
infinite sum of terms that have a generic form:
\begin{align}\label{terms}
e^{-\frac{|\zeta|^2}{4}\frac{\sigma_- + \sigma_+}{\sigma_- - \sigma_+}} \  p^n (1-p)^m \ e^{i r_{k}(\zeta)} \ ,
\end{align}
where
\begin{equation}\label{r-form}
r_{k}(\zeta) :=
2 \ {\rm{Im}}\left([{\lambda}(1-e^{-\mu\tau})e^{-k\mu\tau}\overline{\zeta}]/{{\mu}\sqrt{2}}\right) \ .
\end{equation}
By virtue of (\ref{QFree1}), apart from the normalization, expression (\ref{terms}) is nothing
but the characteristic function of a \textit{non-gauge-invariant} (see (\ref{r-form})) and \textit{quasi-free} state on
CCR($\mathcal{H}_{\mathcal{C}}$), which trivially verifies the Araki-Segal theorem. As a consequence,
(\ref{lim-W-funct-sigma}) defines a \textit{normal} state since it is a convergent infinite convex combination of normal
quasi-free states but, when $0 < p < 1$, the infinite combination of quasi-free states
in (\ref{lim-W-funct-sigma}) is \textit{not} quasi-free.

Let $N_\sigma(t)$ be the expectation value of the photon-number operator $b^*b$ in the open cavity at the time $t$ as:
\begin{equation}\label{N(t)_s}
N_\sigma(t): = \omega_{\mathcal{C},\sigma}^{t}(b^*b)= \Tr_{\mathcal{C}}(b^*b \ \rho_{\mathcal{C},\sigma}(t)) \ .
\end{equation}

For the open cavity the result corresponding to the asymptotic behaviour of the photon number in the cavity takes
the form.
%%%%%%%%%%%%%%%%%%%%%%%%%%%%%%%%%%%%%%%%%%%% Theorem 3 %%%%%%%%%%%%%%%%%%%%%%%%%%%%%%%%%%%%%%%%%%%%%%%%%%%%%%%%
\begin{theorem}\label{number of photons-sigma} Let $\sigma_-  - \sigma_+ > 0$. Then for an arbitrary initial
gauge-invariant cavity state $\rho_{\mathcal{C}}$ such that the initial mean-value of the photon number in the cavity
is bounded {\rm{:}}
\begin{equation}\label{int-cond-photons-sigma}
N_\sigma (0) = \omega_{\mathcal{C,\sigma}}^{t}(b^*b)\mid_{t=0} = {\rm{Tr}}_{{\mathcal{C}}}(b^*b \
\rho_{\mathcal{C}}) < \infty \ ,
\end{equation}
we obtain that the expected number of photos at $t=(n-1)\tau+\nu(t)$ has the form{\rm{:}}
\begin{align}\label{mean-valueBIS}
&N_\sigma(t) = e^{-(\sigma_--\sigma_+) t}N_\sigma(0)\\
&+p\frac{\lambda^2}{|\mu|^2}e^{-(\sigma_--\sigma_+)\tau}(1-e^{\mu\tau})(1-e^{\bar{\mu}\tau})
\frac{1-e^{-(\sigma_--\sigma_+) t}}
{1-e^{-(\sigma_--\sigma_+)\tau}}\nonumber\\
&-p^2\frac{2\lambda^2}{|\mu|^2}\frac{1-e^{-(\sigma_--\sigma_+)t}}
{1-e^{-(\sigma_--\sigma_+)\tau}}(1-e^{-{(\sigma_--\sigma_+)\tau}/{2}} \cos\epsilon\tau) \nonumber\\
&+p^2\frac{2\lambda^2}{|\mu|^2}(1-e^{-(\sigma_{-} - \sigma_{+})t/2}\cos \epsilon t)+\frac{\sigma_+}
{\sigma_--\sigma_+}(1-e^{-(\sigma_--\sigma_+) t})\nonumber.
\end{align}
Here $\mu := (\sigma_--\sigma_+)/2 + i \, \epsilon$.
The limit of the expected number of photons in the cavity is
\begin{align}\label{lim-photons-number-sigma}
% \nonumber to remove numbering (before each equation)
&\omega_{\mathcal{C,\sigma}}(b^*b) :=
\lim_{t\rightarrow\infty} \omega_{\mathcal{C},\sigma}^{t}(b^*b) \\
&= p(1-p) \frac{2\lambda^2}{|\mu|^2}
\frac{1 - e^{-(\sigma_--\sigma_+)\tau/2} \cos\epsilon\tau}{1-e^{-(\sigma_--\sigma_+)\tau}}
+ p(2p-1)\frac{\lambda^2}{|\mu|^2} + \frac{\sigma_+}{\sigma_--\sigma_+} \ . \nonumber
\end{align}
\end{theorem}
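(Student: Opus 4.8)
The plan is to reduce everything to the action of the one-step map $\cL_\sigma$ on the relevant low-order moments, exactly as in the proof of Theorem \ref{N of photons}, but now tracking the extra linear damping and the inhomogeneous pumping term coming from the Kossakowski--Lindblad part of (\ref{Generator-KL}). First I would compute, via the adjoint generators $L_{\sigma,k}^\ast$, the Heisenberg evolution of the three cavity observables $b\otimes\idty$, $b^*\otimes\idty$ and $b^*b\otimes\idty$ under $e^{tL_{\sigma,n}^\ast}$ for $t\in[0,\tau)$. Because the $\sigma$-part of (\ref{Generator-KL}) acts only on the cavity and commutes with the gauge group up to the damping constant, these equations close: one gets $e^{tL_{\sigma,n}^\ast}(b\otimes\idty)=e^{-\mu t}b\otimes\idty - \idty\otimes\frac{\lambda}{\mu}\eta_n(1-e^{-\mu t})$ with $\mu=(\sigma_--\sigma_+)/2+i\epsilon$ — the obvious generalization of (\ref{b-evol}) — and for the number operator a scalar inhomogeneous linear ODE
\begin{equation*}
\frac{d}{dt}e^{tL_{\sigma,n}^\ast}(b^*b\otimes\idty)= -(\sigma_--\sigma_+)\,e^{tL_{\sigma,n}^\ast}(b^*b\otimes\idty)+\sigma_+\idty + (\text{linear terms in }b,b^*),
\end{equation*}
whose integration is routine. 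This yields the one-step recursion for the pair $\bigl(N_\sigma(k\tau),\ \omega^{k\tau}_{\mathcal{C},\sigma}(b)\bigr)$, namely $N_\sigma$ at step $k+1$ is a fixed contraction factor $e^{-(\sigma_--\sigma_+)\tau}$ times $N_\sigma$ at step $k$, plus a constant (the environmental $\sigma_+$ term), plus terms quadratic and linear in $\omega^{k\tau}_{\mathcal{C},\sigma}(b)$ that are themselves governed by a geometric recursion with ratio $e^{-\mu\tau}$ inherited from $\cL_\sigma$ acting on $b$ (cf. Lemma \ref{cL-T(B)} / Lemma \ref{n-dual-on-b}, in their $\sigma$-deformed form).

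Next I would solve these recursions in closed form. The $b$-expectation satisfies $\omega^{(k+1)\tau}_{\mathcal{C},\sigma}(b)=e^{-\mu\tau}\omega^{k\tau}_{\mathcal{C},\sigma}(b)+p\frac{\lambda}{\mu}(e^{-\mu\tau}-1)$, a geometric series summing to a closed term plus a transient $e^{-k\mu\tau}$ piece; since the initial state is gauge-invariant, $\omega^0_{\mathcal{C},\sigma}(b)=0$ and only the first piece survives, after which $\lvert\omega^{k\tau}_{\mathcal{C},\sigma}(b)\rvert^2$ contributes a $p^2$-term. Feeding this into the $N_\sigma$-recursion gives a sum of three geometric series — one with ratio $e^{-(\sigma_--\sigma_+)\tau}$ (the $N_\sigma(0)$ transient and the $\sigma_+$ constant), one mixing ratios $e^{-(\sigma_--\sigma_+)\tau}$ and $e^{-\mu\tau}$, $e^{-\bar\mu\tau}$ (the $p$-linear pumping term), and one purely from $\lvert\omega(b)\rvert^2$ (the $p^2$-terms) — whose resummation, using $\mu+\bar\mu=\sigma_--\sigma_+$ and $\lvert\mu\rvert^2=(\sigma_--\sigma_+)^2/4+\epsilon^2$, produces exactly (\ref{mean-valueBIS}) at $t=n\tau$. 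The extension to $t=(n-1)\tau+\nu(t)$ with $\nu(t)\in[0,\tau)$ is obtained by applying one more partial step $e^{\nu(t)L_{\sigma,n}^\ast}$, which replaces the factors $e^{-(\sigma_--\sigma_+)\tau}$, $e^{-\mu\tau}$ etc. in the last increment by their $\nu(t)$-versions; collecting terms gives the stated formula with the continuous exponents $e^{-(\sigma_--\sigma_+)t}$ and $\cos\epsilon t$.

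Finally, the limit (\ref{lim-photons-number-sigma}) follows by letting $t\to\infty$ in (\ref{mean-valueBIS}): every term carrying $e^{-(\sigma_--\sigma_+)t}$ or $e^{-(\sigma_--\sigma_+)t/2}$ vanishes because $\sigma_--\sigma_+>0$, leaving the three surviving constants, which I would rearrange into the $p(1-p)$, $p(2p-1)$ and $\sigma_+/(\sigma_--\sigma_+)$ form quoted; alternatively one reads it off directly from the limiting functional (\ref{lim-W-funct-sigma}) of Theorem \ref{cavity-lim-state-sigma} by differentiating twice at $\zeta=0$, which provides an independent check. I expect the main obstacle to be bookkeeping rather than conceptual: keeping the linear-in-$b$ cross terms correct through the double recursion (they are what distinguishes the $\sigma>0$ case from a naive damping of Theorem \ref{N of photons}), and verifying that the partial-step $\nu(t)$ contributions assemble into the clean continuous-time exponentials rather than leaving residual $n$-dependent remainders. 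Checking the $\sigma_\mp\to 0$ limit against (\ref{number of photons_Hamiltonian}) and the $p\to 0,1$ limits against Remark \ref{p-dependence} gives useful sanity controls along the way.
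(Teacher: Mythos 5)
Your plan follows essentially the same route as the paper's proof: there too one computes the action of the adjoint one-step map on the invariant span of $b^*b$, $b^*$, $b$, $\idty$ (Lemma \ref{number-photon-sigma} together with (\ref{action on b^*}), (\ref{action on b}), (\ref{action on I})), iterates it --- organized in the paper as diagonalization of a $4\times4$ matrix rather than your coupled recursions for $N_\sigma(k\tau)$ and $\omega^{k\tau}_{\mathcal{C},\sigma}(b)$ --- and then takes the expectation in the gauge-invariant initial state before letting $t\to\infty$. The one bookkeeping caveat (which you already flag as the main risk) is that the $\sigma$-deformed analogue of (\ref{b-evol}) carries the constant $\frac{i\lambda}{\mu}(1-e^{-\mu t})$ rather than $\frac{\lambda}{\mu}(1-e^{-\mu t})$, and the $p^2$ contribution arises as the cross term between the $p$-linear coefficient of $b^*,b$ in $\cL^{*}_{\sigma}(b^*b)$ and the $p$-linear gauge-breaking moments, which is $|\omega(b)|^2$ only up to an additional geometric correction.
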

%%%%%%%%%%%%%%%%%%%%%%%%%%%%%%%%%%%%%%%%%%%%%%% Remark %%%%%%%%%%%%%%%%%%%%%%%%%%%%%%%%%%%%%%%%%%%%%%%%%%%%%%%%%%%%%%%%
\begin{remark}\label{instructive cases} Note that the limit (\ref{lim-photons-number-sigma}) satisfies the estimate from
above:
\begin{equation}\label{lim-photons-number-sigma-above}
\omega_{\mathcal{C,\sigma}}(b^*b) \leq \frac{2\lambda^2}{|\mu|^2}\frac{p(1-p)}{1-e^{-(\sigma_--\sigma_+)\tau}}
+ p(2p-1)\frac{\lambda^2}{|\mu|^2} + \frac{\sigma_+}{\sigma_--\sigma_+} \ ,
\end{equation}
as well as the estimate from below:
\begin{equation}\label{lim-photons-number-sigma-below}
\omega_{\mathcal{C,\sigma}}(b^*b) \geq \frac{2\lambda^2}{|\mu|^2}\frac{p^2}{1+e^{(\sigma_--\sigma_+)\tau}}
+ \frac{\sigma_+}{\sigma_--\sigma_+} \ ,
\end{equation}
There are several instructive cases that one can observe as corollaries of the above theorem:
\begin{itemize}
\item Let $p=0$ (or $\lambda=0$), i.e. no pumping by the atomic beam. Then the limiting value
(\ref{lim-photons-number-sigma}),(\ref{lim-photons-number-sigma-below})
is ${\sigma_+}/{(\sigma_--\sigma_+)}$, which coincides with the formula (\ref{state-I}) in Appendix A.3.
If in this case one put $\sigma_- > 0$ and consider $\sigma_+ \rightarrow 0$, then the limit
(\ref{lim-photons-number-sigma})
is equal to {\textit{zero}}. This simply means that in the absence of atomic pumping the leaky cavity relaxes to
the vacuum state empty of photons, or formally to the Gibbs state (\ref{lim-state}) with zero temperature
$\beta_{{\rm{cav}}}= +\infty$.
\item When $\lambda \neq 0$ and $p=1$, still there is no (unlimited) pumping by the atomic beam but only the photon
Rabi oscillations (\ref{number of photons_Hamiltonian}). Then for the open cavity $\sigma_-  >  \sigma_+ \geq 0$ 
the limit (\ref{lim-photons-number-sigma}) and (\ref{lim-photons-number-sigma-above}) give
\begin{equation}\label{lim-photons-number-sigma-p=1}
\omega_{\mathcal{C},\sigma}(b^*b)=  \frac{\lambda^2}{|\mu|^2} + \frac{\sigma_+}{\sigma_--\sigma_+} \ .
\end{equation}
for any leaking $\sigma_- > 0$, including the \textit{limit} of the ideal cavity when $\sigma_- \rightarrow 0$.
Note that in the ideal cavity for the case $p=1$ the mean-value of photons (\ref{number of photons_Hamiltonian})
is bounded,
but oscillating. In other words, the limits: $t\rightarrow \infty$ and $\sigma_- \rightarrow 0$ do not commute.
\item If $0 < p < 1$ and $\sigma_+ =0$, then for the limit of ideal cavity (\ref{lim-photons-number-sigma}) yields
\begin{equation}\label{lim-photons-number-sigma-0<p<1}
\lim_{\sigma_- \rightarrow 0}\omega_{\mathcal{C},\sigma}(b^*b) =
 p(1-p) \, \frac{2\lambda^2}{\epsilon^2} \, \lim_{\sigma_- \rightarrow 0}
 \frac{(1-\cos\epsilon\tau)}{1-e^{- \tau\sigma_-}} \ .
\end{equation}
Hence, for the non-resonant case $\epsilon\tau\neq 2\pi s$, where $s\in\bZ$, this limit is infinite,
i.e. corresponding to the conclusion of the Theorem \ref{N of photons} about unlimited pumping of the ideal cavity.
Indeed, (\ref{mean-valueBIS}) for $\sigma_+ =0$ implies
\begin{equation}\label{lim-photons-number-sigma-0<p<1-bis}
\lim_{\sigma_- \rightarrow 0}\omega_{\mathcal{C},\sigma}^{t=n\tau}(b^*b)|_{\sigma_+ =0} = N(n\tau) \ .
\end{equation}
Here $N(n\tau)$ coincides with (\ref{number of photons_Hamiltonian}), which diverges for the non-resonant case as
$ n \, p(1-p) \, {2\lambda^2}(1-\cos\epsilon\tau)/{\epsilon^2}|_{n \rightarrow \infty}$, cf.
(\ref{lim-photons-number-sigma-0<p<1}).
\item When $\sigma_+ > 0$ and $\sigma_-\rightarrow\sigma_+$, the limit (\ref{lim-photons-number-sigma}) yields
\begin{equation}\label{sigma to zero2}
\lim_{\sigma_- \rightarrow \sigma_+} \omega_{\mathcal{C},\sigma}(b^*b) = + \infty  \ .
\end{equation}
It means that if the leaking and the environmental pumping have the same rate, the limiting state corresponds to the
infinite temperature state, see (\ref{lim-W-funct-sigma}) and (\ref{state-II}), (\ref{lim-state}).
On the Weyl operators this state is
given by the Kronecker delta-functional
$$ \omega_{\cC,\sigma}(W(\zeta)) = \left\{
     \begin{array}{lr}
       1 & \text{if } \zeta=0,\\
       0 & \text{if } \zeta\neq 0.
     \end{array}
\right.$$
Since the Kronecker characteristic functional is not continuous, the corresponding state is not regular.
Consequently, the Araki-Segal Theorem \ref{Araki-Segal} is not applicable.
\item  Note that for $\sigma_+ > 0$ and $\sigma_-\rightarrow\sigma_+$ by (\ref{mean-valueBIS}) one gets for the expected
number of photos at $t=n\tau$
\begin{equation}\label{lim-photons-number-sigma=sigma}
\lim_{\sigma_- \rightarrow \sigma_+}\omega_{\mathcal{C},\sigma}^{n\tau}(b^*b) = N(n\tau) + n \, \tau\sigma_+ \ .
\end{equation}
Hence, in this case the pumping by the random non-resonant atomic beam (\ref{number of photons_Hamiltonian})
and by the environmental
pumping due to $\sigma_+ > 0$, give the same \textit{linear} rate for increasing of the mean number of photons in
the cavity.
Consequently, for $t=n\tau \rightarrow\infty$ the infinite photon number cavity state coincides with the
infinite-temperature state
that we discussed above, cf. (\ref{lim-state}).
\end{itemize}
\end{remark}
%%%%%%%%%%%%%%%%%%%%%%%%%%%%%%%%%%%%%%%%%%%%%%%%%%%%%%%%% 1.%%%%%%%%%%%%%%%%%%%%%%%%%%%%%%%%%%%%%%%%%%%%%%%%%%%%%%%%%%%%
This concludes the description of our main results. The rest of the paper is organized as follows.
In Section \ref{HD} we use the specific properties of our model to {diagonalise} it, which is the key for the
further analysis. Then we give the proof of Theorem \ref{N of photons} for the Hamiltonian dynamics of the ideal cavity.

In Section \ref{IQD} we present our results for the case of the leaking cavity described by the Kossakowski-Lindblad
irreversible quantum dynamics, i.e., Theorems \ref{cavity-lim-state-sigma} and \ref{number of photons-sigma}.

The results concerning Energy-Entropy relations are presented in Section \ref{EEP}. There we calculate the energy
variation (Theorem \ref{ThDEn-1,n1} and Theorem \ref{Energy-sigma-variation}) and
obtain a formula for the entropy production (formula (\ref{2nd-Law})) for the ideal cavity.

The Section \ref{CR} is reserved for comments, remarks and open problems.

For the reader convenience we collect in an Appendix certain results and definitions necessary for the main text.

%\vskip 1.0cm
%%%%%%%%%%%%%%%%%%%%%%%%%%%%%%%%%%%%%%%%%%%%%%%%%%%% Section %%%%%%%%%%%%%%%%%%%%%%%%%%%%%%%%%%%%%%%%%%%%%%%%%%%%%%%%%%
\section{Hamiltonian Dynamics: The Ideal Cavity}\label{HD}

Here we consider the case of the ideal cavity, i.e. $\sigma_-=\sigma_+=0$. In this case, the discrete evolution
map $\cL_{\sigma=0} = \cL$ is given by (\ref{cL}). Recall that $p=\Tr(\eta_k\rho_{k})$ for the
homogeneous atomic states $\{\rho_{k}\}_{k\geq 1}$ in the beam.

Our first result concerns the {expectation} of the photon-number operator $\hat{N}=b^*b$ in the cavity
(\ref{N(t)}). For $t=n\tau$ this expectation involves the calculation of $\cL^{n}(\rho)$ (\ref{mean-photon-number-n}).
Instead, we use the $n$-th power of the \textit{adjoint} operator $\cL^{*}$ defined by duality with respect to the
cavity state $\omega_{\mathcal{C}}(\cdot) = \Tr_{\mathcal{C}}(\, \cdot \ \rho_{\mathcal{C}})$, see
(\ref{mean-photon-number-n})
and Remark \ref{dual-operator}.

Let $\widehat{S}_{k}$ be $*$-isomorphism ({unitary shift}) on the algebra
$\mathfrak{A}(\cH_{\mathcal{C}})\otimes \mathfrak{A}(\cH_{\mathcal{A}_{k}})$ defined by
\begin{equation}\label{Vn}
\widehat{S}_{k}(\cdot):= e^{iV_k}(\cdot) \, e^{-iV_k}  \ , \ V_k:={\lambda}(b^*-b)\otimes \eta_k /{i\epsilon} \ .
\end{equation}
Since $\eta_{k}^{2} = \eta_k$ and $\widehat{S}_{k}(\idty\otimes \eta_k)=\idty\otimes\eta_k$, whereas
\begin{equation}\label{Vn-shift}
\widehat{S}_{k}(b\otimes\idty)= b\otimes\idty-\idty\otimes\frac{\lambda}{\epsilon}\eta_{k} \ , \
\widehat{S}_{k}(b^*\otimes\idty) = b^*\otimes\idty-\idty\otimes\frac{\lb}{\e}\eta_k \ ,
\end{equation}
the transformation (\ref{Vn}) of the Hamiltonian (\ref{Ham-n}) gives
\begin{equation}\label{Diagonalized Hamiltonian}
\widehat{H}_{k}: = \widehat{S}_{k}(H_k)=\epsilon \ b^*b \otimes \idty + \idty \otimes (E-\frac{\lambda^2}{\epsilon}) \,
\eta_k + \sum_{s\geq1: s \neq k}\idty\otimes E \, \eta_s\ .
\end{equation}
Notice that the dynamics generated by $\widehat{H}_{k}$ (or by ${H}_k$) leaves the atomic operator $\eta_k =
(\sigma^{z} + I)/2$
invariant
\begin{equation*}
e^{i\tau \widehat{H}_{k}}(\idty\otimes\eta_k)e^{-i\tau \widehat{H}_{k}}=\idty\otimes\eta_k \ .
\end{equation*}
Similarly to the unitary shift (\ref{Vn}), we define on $\mathfrak{A}(\cH_{\mathcal{C}})$ the $*$-isomorphism:
\begin{equation}\label{V}
S(\cdot):= e^{iV}(\cdot) \, e^{-iV}  \ \ , \ \  V:= {\lambda}(b^*-b)/{i\epsilon} \ .
\end{equation}
%%%%%%%%%%%%%%%%%%%%%%%%%%%%%%%%%%%%%%%%%%%%%%%%%%%%%% Lemma %%%%%%%%%%%%%%%%%%%%%%%%%%%%%%%%%%%%%%%%%%%%%%%%%%%%%%%%
\begin{lemma} \label{cL(rho)} For any state $\rho$ on $\mathfrak{A}(\cH_{\mathcal{C}})$ the one-step evolution mapping
$\cL$ {\rm{(}}\ref{cL}{\rm{)}} has the form:
\begin{equation}\label{cL(rho)_p}
\cL(\rho)= p \ S^{-1}(e^{-i\tau\epsilon b^*b} S(\rho) e^{i\tau\epsilon b^*b})
+(1-p)\ e^{-i\tau\epsilon b^*b}\rho \, e^{i\tau\epsilon b^*b} \ .
\end{equation}
Here $p={\rm{Tr}}_{\cH_{\mathcal{A}_k}} (\eta_k \ \rho_k)$ is the probability to find the $k$-th atom in the excited
 state with energy $E$.
\end{lemma}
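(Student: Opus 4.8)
The plan is to reduce the one‑step map to the diagonalized picture of (\ref{Diagonalized Hamiltonian}) and then exploit that the atomic density matrix $\rho_k$ is diagonal in the eigenbasis of $\eta_k$. Start from (\ref{cL}), $\cL(\rho)=\Tr_{\mathcal{A}_k}[e^{-i\tau H_k}(\rho\otimes\rho_k)e^{i\tau H_k}]$. Only the part $\epsilon\, b^*b\otimes\idty+\idty\otimes E\,\eta_k+\lambda(b^*+b)\otimes\eta_k$ of $H_k$ acts nontrivially on $\cH_{\mathcal{C}}\otimes\cH_{\mathcal{A}_k}$; the spectator summands $\idty\otimes E\,\eta_s$, $s\neq k$, only contribute unitaries acting trivially on $\rho\otimes\rho_k$ and disappearing under $\Tr_{\mathcal{A}_k}$, so we work with that effective $H_k$. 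By (\ref{Vn})--(\ref{Diagonalized Hamiltonian}) one has $H_k=\widehat S_k^{-1}(\widehat H_k)=e^{-iV_k}\widehat H_k\, e^{iV_k}$, hence $e^{-i\tau H_k}=e^{-iV_k}e^{-i\tau\widehat H_k}e^{iV_k}$ and
\begin{equation}
e^{-i\tau H_k}(\rho\otimes\rho_k)e^{i\tau H_k}=e^{-iV_k}\,e^{-i\tau\widehat H_k}\,e^{iV_k}(\rho\otimes\rho_k)\,e^{-iV_k}\,e^{i\tau\widehat H_k}\,e^{iV_k}.
\end{equation}

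Next I would split according to the spectral resolution of $\eta_k$: with $P_\pm:=|\psi_k^\pm\rangle\langle\psi_k^\pm|$ we have, by (\ref{atom-state}), $\rho_k=p\,P_+ +(1-p)\,P_-$ and $p=\Tr_{\mathcal{A}_k}(\eta_k\rho_k)$. Both $V_k$ and $\widehat H_k$ commute with $\idty\otimes\eta_k$ --- this is precisely $\widehat S_k(\idty\otimes\eta_k)=\idty\otimes\eta_k$ together with the invariance $e^{i\tau\widehat H_k}(\idty\otimes\eta_k)e^{-i\tau\widehat H_k}=\idty\otimes\eta_k$ recorded after (\ref{Diagonalized Hamiltonian}) --- so the conjugation above preserves each block $\idty\otimes P_\pm$ and the computation decouples into two sectors. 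On $\mathrm{ran}(\idty\otimes P_+)$ one has $\eta_k=1$, and since $\eta_k^2=\eta_k$ the exponents restrict as $V_k(\idty\otimes P_+)=(V\otimes\idty)(\idty\otimes P_+)$ and $\widehat H_k(\idty\otimes P_+)=\big((\epsilon\, b^*b+E-\lambda^2/\epsilon)\otimes\idty\big)(\idty\otimes P_+)$; the constant $E-\lambda^2/\epsilon$ cancels as a global phase, so with $S$ as in (\ref{V})
\begin{equation}
e^{-i\tau H_k}(\rho\otimes P_+)e^{i\tau H_k}=S^{-1}\big(e^{-i\tau\epsilon\, b^*b}\,S(\rho)\,e^{i\tau\epsilon\, b^*b}\big)\otimes P_+ .
\end{equation}
On $\mathrm{ran}(\idty\otimes P_-)$ one has $\eta_k=0$, so $V_k$ restricts to $0$ and $\widehat H_k$ to $\epsilon\, b^*b\otimes\idty$, whence $e^{-i\tau H_k}(\rho\otimes P_-)e^{i\tau H_k}=\big(e^{-i\tau\epsilon\, b^*b}\rho\, e^{i\tau\epsilon\, b^*b}\big)\otimes P_-$.

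Finally, taking the partial trace over $\cH_{\mathcal{A}_k}$ and using $\Tr_{\mathcal{A}_k}(P_+)=\Tr_{\mathcal{A}_k}(P_-)=1$, the sum of the two sectors weighted by $p$ and $1-p$ is exactly (\ref{cL(rho)_p}). I do not expect a genuine obstacle here: the only delicate points are the disappearance of the spectator atomic factors and the ``restriction to the $\pm$ eigensectors'' rewriting of $V_k$ and $\widehat H_k$, both of which follow from $\eta_k^2=\eta_k$ and the commutation relations already established in (\ref{Vn})--(\ref{Diagonalized Hamiltonian}); the remainder is routine bookkeeping.
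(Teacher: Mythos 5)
Your proof is correct and follows essentially the same route as the paper: conjugating by the shift $\widehat S_k$ to pass to the diagonalized Hamiltonian $\widehat H_k$, and then splitting according to the eigenprojections of $\eta_k$. Your explicit spectral decomposition $\rho_k=p\,P_++(1-p)\,P_-$ is just a repackaging of the paper's identities (\ref{relationSSn})--(\ref{relationSSnBIS}) (which use $\eta_k\rho_k$ and $(I-\eta_k)\rho_k$ together with $\eta_k^2=\eta_k$), so no genuinely different idea is involved.
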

%%%%%%%%%%%%%%%%%%%%%%%%%%%%%%%%%%%%%%%%%%%%%%%%%%%%%%%% Proof %%%%%%%%%%%%%%%%%%%%%%%%%%%%%%%%%%%%%%%%%%%%%%%%%%%%%%%%%%
\begin{proof}
Using the shift transformation (\ref{Vn}) and the cyclicity of trace, one can re-write (\ref{cL}) as
\begin{equation}\label{cL(rho)1}
\cL(\rho)=\Tr_{{\mathcal{A}_k}}[\widehat{S}_{k}^{-1}(e^{-i\tau \widehat{H}_{k}} \widehat{S}_{k}(\rho\otimes\rho_{k})\,
e^{i\tau \widehat{H}_{k}})] \ .
\end{equation}
Since the operator $\eta_k = (\sigma^{z} + I)/2$ is idempotent ($\eta_{k}^{2} = \eta_k$) and since it commutes
with $\rho_{k}$ (\ref{atom-state}), the combination of expansions of (\ref{Vn}), (\ref{V}) with definitions of $V_k$ and
$V$ gives
\begin{align}\label{relationSSn}
&\widehat{S}_{k}(\rho\otimes\rho_{k}) = S(\rho) \otimes {\eta_{k}}\rho_{k}+\rho\otimes(I - \eta_k)\rho_{k}  \ , \\
&\widehat{S}_{k}^{-1}(\rho\otimes\rho_{k}) = S^{-1}(\rho) \otimes {\eta_{k}}\rho_{k}+\rho\otimes(I - \eta_k)\rho_{k}  \ .
\label{relationSSnBIS}
\end{align}
Therefore, plugging (\ref{relationSSn}) into (\ref{cL(rho)1}) we obtain
\begin{align*}
\cL(\rho)=&\Tr_{{\mathcal{A}_k}}[\widehat{S}_{k}^{-1}(e^{-i\tau \widehat{H}_{k}}(S(\rho) \otimes {\eta_{k}}\rho_{k})
e^{i\tau \widehat{H}_{k}})]\\
&+\Tr_{{\mathcal{A}_k}}[\widehat{S}_{k}^{-1}(e^{-i\tau \widehat{H}_{k}}(\rho\otimes(I- \eta_k)\rho_{k})
e^{i\tau \widehat{H}_{k}})]
\ .
\end{align*}
Now, the diagonal form (\ref{Diagonalized Hamiltonian}) of the Hamiltonian $\widehat{H}_{k}$ and (\ref{relationSSnBIS})
for $\widehat{S}_{k}^{-1}$, imply:
\begin{align*}
\cL(\rho)&=\Tr_{{\mathcal{A}_k}}[S^{-1}(e^{-i\tau\epsilon b^*b} S(\rho) e^{i\tau\epsilon b^*b})\otimes
{\eta_{k}}\rho_{k}]\\
&+\Tr_{{\mathcal{A}_k}}[e^{-i\tau\epsilon b^*b} S(\rho) e^{i\tau\epsilon b^*b}\otimes (I - \eta_k){\eta_{k}}\rho_{k}]\\
&+\Tr_{{\mathcal{A}_k}}[S^{-1}(e^{-i\tau\epsilon b^*b} \rho \, e^{i\tau\epsilon b^*b})\otimes \eta_k(I -
\eta_k)\rho_{k}]\\
&+\Tr_{{\mathcal{A}_k}}[e^{-i\tau\epsilon b^*b}\rho e^{i\tau\epsilon b^*b}\otimes(I - \eta_k)\rho_{k}] \ .
\end{align*}
Since $(I - \eta_k)\eta_k=0$ and $p = \Tr_{{\mathcal{A}_k}}[\idty \otimes {\eta_{k}}\rho_{k}]$, one gets (\ref{cL(rho)_p}).
\end{proof}
%%%%%%%%%%%%%%%%%%%%%%%%%%%%%%%%%%%%%%%%%%%%%%%%%%%%%% Remark %%%%%%%%%%%%%%%%%%%%%%%%%%%%%%%%%%%%%%%%%%%%%%%%%%%%%
\begin{remark} \label{dual-operator}
{\rm{(a)}} To calculate the expectation of the photon-number operator
$\hat{N} = b^*b$ at $t=n\tau$ using (\ref{mean-photon-number-n}), we would need to find the action of the $n$-th power
$\cL^n(\rho)$ of the operator (\ref{cL(rho)1}). \\
We show that in fact it is  easier to calculate this mean value using the $n$-th power of the adjoint (or \textit{dual})
mapping $\cL^{*}$. For any bounded operator $A \in \cB(\cH_{\mathcal{C}})$ and $\rho \in \mathfrak{C}_{1}
(\cH_{\mathcal{C}})$, it is defined by relation
\begin{equation}\label{cL-dual-0}
\Tr_{{\mathcal{C}}}(\cL^{*}(A) \rho): = \Tr_{{\mathcal{C}}}( A \ \cL(\rho)) \ ,
\end{equation}
see Appendix A.2 for discussion and details.\\
{\rm{(b)}} Using invariance of the atomics states (\ref{commut-atoms}) one can obtain explicit expression for the
one-step dual mapping $\cL^{*}$. Indeed, by (\ref{commut-atoms}) and by (\ref{cL})
\begin{align}\label{cL-dual-1}
\Tr_{{\mathcal{C}}}(A \ \cL(\rho))=& \Tr_{\cH_{\mathcal{C}}\otimes\cH_{\mathcal{A}_k}}\{ (A\otimes \idty) \
e^{- i\tau H_k}(\rho\otimes \idty) (\idty \otimes \rho_k)e^{i\tau H_k}\} \\
=& \Tr_{\cH_{\mathcal{C}}\otimes\cH_{\mathcal{A}_k}}\{(\idty \otimes \rho_k) e^{-i\tau H_k} (A\otimes \idty) e^{i\tau H_k}
(\rho\otimes \idty)\} \nonumber \\
=& \Tr_{\cH_{\mathcal{C}}\otimes\cH_{\mathcal{A}_k}}\{e^{i\tau H_k}(A \otimes \rho_k) \ e^{-i\tau H_k}(\rho\otimes \idty)\}
\nonumber \ ,
\end{align}
where we used cyclicity of the full trace $\Tr_{\cH_{\mathcal{C}}\otimes\cH_{\mathcal{A}_k}}$.
Hence, (\ref{cL-dual-0}) together with (\ref{cL-dual-1}) yield expression for the one-step mapping
\begin{equation}\label{cL-dual-2}
\cL^{*}(A)=\Tr_{{\mathcal{A}_k}}\{e^{i\tau H_k}(A\otimes \rho_k)e^{-i\tau H_k}\} \ ,
\end{equation}
which according to (\ref{cL}) is independent of $k\geq 1$.\\
{\rm{(c)}} Let $\rho \in \mathfrak{C}_{1}(\cH_{\mathcal{C}})$ be density matrix such that $\Tr_{{\mathcal{C}}}
( \mathcal{P}(b,b^*) \, \cL(\rho)) < \infty$
for any polynomial $\mathcal{P}(b,b^*)$. Then one can extend definition (\ref{cL-dual-0}) of $\cL^{*}$ to the class of
unbounded observables, which contains $\mathcal{P}(b,b^*)$ and in particular the photon-number operator $b^*b$.
The advantage to use the adjoint mapping $\cL^{*}$ is that its consecutive applications do not
increase the degree of polynomials in variables $b$ and $b^*$.
\end{remark}
%%%%%%%%%%%%%%%%%%%%%%%%%%%%%%%%%%%%%%%%%%%%%%%%%%%%%%%%%%%%%%%%%%%%%%%%%%%%%%%%%%%%%%%%%%%%%%%%
Now, applying to (\ref{cL-dual-2}) the same line of reasoning as in the proof of Lemma \ref{cL(rho)}, we find explicit
expression for the adjoint one-step mapping
\begin{equation}\label{dual cL}
\cL^{*}(A)=  \ p \ S^{-1}(e^{i\tau\epsilon b^*b} S(A) e^{-i\tau\epsilon b^*b})
+  \ (1-p)e^{i\tau\epsilon b^*b}A e^{-i\tau\epsilon b^*b} \ .
\end{equation}
Note that alternatively one can obtain (\ref{dual cL}) using (\ref{cL(rho)_p}) and definition (\ref{cL-dual-0}).
%%%%%%%%%%%%%%%%%%%%%%%%%%%%%%%%%%%%%%%%%%%%%%%%%%%%%% Lemma %%%%%%%%%%%%%%%%%%%%%%%%%%%%%%%%%%%%%%%%%%%%%%%%%%%%%%%%%%%%
\begin{lemma}\label{cL-T(B)}
For $A= {b^*b}$ and for the adjoint operator $\cL^{*}$ defined by {\rm{(}}\ref{dual cL}{\rm{)}} we obtain:
\begin{align}\label{L^n(b^*b)}
(\cL^{*})^n(b^*b)=& \ b^*b+p \ \frac{\lambda}{\epsilon}[(1-e^{ni\epsilon\tau})\ b^*+(1-e^{-ni\epsilon\tau})\ b] \\
+& \ n \, p (1-p)\ \frac{2\lambda^2}{\epsilon^2} (1-\cos\epsilon\tau)+
p^2 \ \frac{2\lambda^2}{\epsilon^2}(1-\cos n\epsilon\tau)  \nonumber \ .
\end{align}
\end{lemma}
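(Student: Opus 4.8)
The plan is to work entirely with the dual map $\cL^{*}$ in the form (\ref{dual cL}) and to exploit that the four-dimensional linear span $\mathcal V:=\mathrm{lin}\{\idty,b,b^{*},b^{*}b\}$ is left invariant by $\cL^{*}$ (cf. Remark \ref{dual-operator}(c)); the statement then reduces to one explicit one-step computation together with two elementary scalar recursions.

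First I would record the action of $\cL^{*}$ on the generators of $\mathcal V$. Writing $\gamma_{\tau}(\cdot):=e^{i\tau\epsilon b^{*}b}(\cdot)e^{-i\tau\epsilon b^{*}b}$ one has $\gamma_{\tau}(\idty)=\idty$, $\gamma_{\tau}(b)=e^{-i\epsilon\tau}b$, $\gamma_{\tau}(b^{*})=e^{i\epsilon\tau}b^{*}$, and from the shift identities for $S$ analogous to (\ref{Vn-shift}), $S^{\pm1}(b)=b\mp\frac{\lambda}{\epsilon}$, $S^{\pm1}(b^{*})=b^{*}\mp\frac{\lambda}{\epsilon}$, $S^{\pm1}(\idty)=\idty$. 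Since $S^{\pm1}$ is a $*$-homomorphism, substituting these into (\ref{dual cL}) and collecting terms gives
\begin{align*}
&\cL^{*}(\idty)=\idty,\qquad \cL^{*}(b)=e^{-i\epsilon\tau}b+p\,\frac{\lambda}{\epsilon}\,(e^{-i\epsilon\tau}-1),\\
&\cL^{*}(b^{*})=e^{i\epsilon\tau}b^{*}+p\,\frac{\lambda}{\epsilon}\,(e^{i\epsilon\tau}-1),\\
&\cL^{*}(b^{*}b)=b^{*}b+p\,\frac{\lambda}{\epsilon}\bigl[(1-e^{i\epsilon\tau})b^{*}+(1-e^{-i\epsilon\tau})b\bigr]+p\,\frac{2\lambda^{2}}{\epsilon^{2}}(1-\cos\epsilon\tau).
\end{align*}
The last line is already (\ref{L^n(b^*b)}) for $n=1$, using $p(1-p)+p^{2}=p$; the one place needing care is this line, where one must keep the constant terms produced by $S^{-1}\!\circ\gamma_{\tau}\!\circ S$ applied to $b^{*}b=(S b^{*})(S b)$ and similar products.

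Next, since $\cL^{*}$ is unital, preserves self-adjointness, and fixes the coefficient of $b^{*}b$ at $1$, iterating must produce
\begin{equation*}
(\cL^{*})^{n}(b^{*}b)=b^{*}b+\alpha_{n}\,b^{*}+\overline{\alpha_{n}}\,b+\delta_{n}\,\idty,\qquad \alpha_{0}=\delta_{0}=0,\ \ \delta_{n}\in\bR .
\end{equation*}
Applying $\cL^{*}$ once more and using the formulas above gives the recursions $\alpha_{n+1}=e^{i\epsilon\tau}\alpha_{n}+p\frac{\lambda}{\epsilon}(1-e^{i\epsilon\tau})$ and $\delta_{n+1}=\delta_{n}+p\frac{2\lambda^{2}}{\epsilon^{2}}(1-\cos\epsilon\tau)+2p\frac{\lambda}{\epsilon}\,\mathrm{Re}\bigl[\alpha_{n}(e^{i\epsilon\tau}-1)\bigr]$. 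The $\alpha$-recursion is affine with fixed point $p\lambda/\epsilon$, so $\alpha_{n}-p\frac{\lambda}{\epsilon}=e^{in\epsilon\tau}(\alpha_{0}-p\frac{\lambda}{\epsilon})$, i.e. $\alpha_{n}=p\frac{\lambda}{\epsilon}(1-e^{in\epsilon\tau})$ — the linear-in-$b,b^{*}$ part of (\ref{L^n(b^*b)}).

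Finally I would feed this $\alpha_{n}$ into the $\delta$-recursion: expanding $2\,\mathrm{Re}\bigl[(1-e^{in\epsilon\tau})(e^{i\epsilon\tau}-1)\bigr]=2\cos\epsilon\tau-2+2\cos n\epsilon\tau-2\cos(n+1)\epsilon\tau$, the increment becomes
\begin{equation*}
\delta_{n+1}-\delta_{n}=\frac{2\lambda^{2}}{\epsilon^{2}}\Bigl\{p(1-p)(1-\cos\epsilon\tau)+p^{2}\bigl[\cos n\epsilon\tau-\cos(n+1)\epsilon\tau\bigr]\Bigr\}.
\end{equation*}
Summing from $0$ to $n-1$ the bracketed part telescopes to $1-\cos n\epsilon\tau$, yielding $\delta_{n}=n\,p(1-p)\frac{2\lambda^{2}}{\epsilon^{2}}(1-\cos\epsilon\tau)+p^{2}\frac{2\lambda^{2}}{\epsilon^{2}}(1-\cos n\epsilon\tau)$, which is exactly the constant term of (\ref{L^n(b^*b)}). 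There is thus no conceptual obstacle: the whole argument is linear once one passes to $\cL^{*}$ and its invariant subspace, and the only vigilance required is the bookkeeping of constants in $\cL^{*}(b^{*}b)$ and the telescoping in the $\delta$-sum.
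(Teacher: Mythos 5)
Your proposal is correct and follows essentially the same route as the paper: it diagonalises the one-step dual map via the shift $S$ to obtain the action of $\cL^{*}$ on $\idty,b,b^{*},b^{*}b$ (your one-step formulas agree with (\ref{cL(b^*b)})--(\ref{cL(b)})) and then iterates on the invariant span of these operators. The only cosmetic difference is that you solve the resulting affine recursions for the coefficients explicitly (with the telescoping sum for the constant term), whereas the paper verifies the claimed $n$-step formula by induction using the same one-step data.
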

%%%%%%%%%%%%%%%%%%%%%%%%%%%%%%%%%%%%%%%%%%%%%%%%%%%%%%%%%%%%%%%%%%%%%%%%%%%%%%%%%%%%%%%%%%%%%%%%%%%%%%%%%%%%%%%%%%%%%%%%%
\begin{proof}
Since $*$-isomorphism (\ref{V}) is a shift transformation, one gets
\begin{equation*}
S(b^* b)=(b^*-{\lambda}/{\epsilon})(b-{\lambda}/{\epsilon}) \ .
\end{equation*}
The CCR relations for $b^*$ and $b$ yield:
\begin{align*}
e^{i\tau\epsilon b^*b} b^* e^{-i\tau\epsilon b^*b}=e^{i\tau\epsilon}b^* \ , \
e^{i\tau\epsilon b^*b} b e^{-i\tau\epsilon b^*b}=e^{-i\tau\epsilon}b \ .
\end{align*}
Consequently,
\begin{align}\label{evolution-rho_shift}
&S^{-1}(e^{i\tau\epsilon b^*b} S(b^* b) e^{-i\tau\epsilon b^*b})
=S^{-1}((e^{i\tau\epsilon}b^*-{\lambda}/{\epsilon})(e^{-i\tau\epsilon}b-{\lambda}/{\epsilon}))\\
&=(e^{i\tau\epsilon}b^*-(1-e^{i\tau\epsilon}){\lambda}/{\epsilon})
(e^{-i\tau\epsilon}b-(1-e^{-i\tau\epsilon}){\lambda}/{\epsilon}).\nonumber
\end{align}
Hence, (\ref{evolution-rho_shift}) implies for (\ref{dual cL})
\begin{align}\label{cL(b^*b)}
\cL^{*}(b^*b)
=& \, p \, (e^{i\epsilon\tau}b^*-(1-e^{i\epsilon\tau}){\lambda}/{\epsilon})(e^{-i\epsilon\tau}b-(1-e^{-i\epsilon\tau})
{\lambda}/{\epsilon})  \\
+& \, (1-p) \, b^*b \nonumber \\
=& \, b^*b + p\frac{\lambda}{\epsilon}(1-e^{i\epsilon\tau})b^* + p \frac{\lambda}{\epsilon}(1-e^{-i\epsilon\tau})b +
p \frac{2\lambda^2}{\epsilon^2}(1-\cos\epsilon\tau) \ , \nonumber
\end{align}
which coincides with expression (\ref{L^n(b^*b)}) for $n=1$. If we insert in (\ref{dual cL}) $A=b^*$, and then $A=b$,
we obtain correspondingly:
\begin{align}\label{cL(b^*)}
&\cL^{*}(b^*)=e^{i\epsilon\tau}b^*-p(1-e^{i\epsilon\tau}){\lambda}/{\epsilon} \ , \\
&\cL^{*}(b)=e^{-i\epsilon\tau}b-p(1-e^{-i\epsilon\tau}){\lambda}/{\epsilon} \ . \label{cL(b)}
\end{align}
Now one can use induction. Since $(\cL^{*})^n(b^*b)= \cL^{*}((\cL^{*})^{n-1}(b^*b))$, we can apply
(\ref{cL(b^*b)})-(\ref{cL(b)}) to (\ref{L^n(b^*b)}) for $n-1$ to check the formula (\ref{L^n(b^*b)}) for the
$n$-th power.
\end{proof}
%%%%%%%%%%%%%%%%%%%%%%%%%%%%%%%%%%%%%%%%%%%%%%%% Remark %%%%%%%%%%%%%%%%%%%%%%%%%%%%%%%%%%%%%%%%%%%%%%%%%%%%%
\begin{remark} As we indicated in Remark \ref{G-Inv-breaking} the evolution $\cL$ \textit{breaks} the
gauge invariance of the initial cavity state $\rho_{\mathcal{C}}$. For $n=1$ the gauge breaking parameters $r_{n=1}$ and
$\phi_{n=1}$ are defined by (\ref{G-InvBreak2}), or by (\ref{cL-dual-0}), (\ref{cL(b)}):
\begin{equation}\label{G-InvBreak3}
\omega_{\mathcal{C}}^{\tau}(b) = \Tr_{{\mathcal{C}}}(\cL^{*}(b) \rho_{\mathcal{C}})=
p \frac{\lambda}{\epsilon} (e^{-i \epsilon \tau} -1) = r_{1}\, e^{i \, \phi_{1}}\ .
\end{equation}
Formulae (\ref{cL(b^*)}),(\ref{cL(b)}) allow to calculate $r_{n}$ and $\phi_{n}$ for any $n\geq 1$.
Iterating them one obtains
\begin{align}\label{cL(b^*-n)}
&(\cL^{*})^n(b^*)=e^{i n\epsilon\tau}b^*-p(1-e^{i n \epsilon\tau}){\lambda}/{\epsilon} \ , \\
&(\cL^{*})^n(b)=e^{-i n\epsilon\tau}b-p(1-e^{-i n \epsilon\tau}){\lambda}/{\epsilon} \ . \label{cL(b-n)}
\end{align}
Therefore, the gauge breaking parameters $r_{n}$ and $\phi_{n}$ are defined by equation:
\begin{equation}\label{G-InvBreak-n}
\omega_{\mathcal{C}}^{n\tau}(b) = \Tr_{{\mathcal{C}}}((\cL^{*})^n(b) \rho_{\mathcal{C}})=
p \frac{\lambda}{\epsilon} (e^{-i n \epsilon \tau} -1) =: r_{n}\, e^{i \, \phi_{n}}\ .
\end{equation}
\end{remark}
%%%%%%%%%%%%%%%%%%%%%%%%%%%%%%%%%%%%%%%%%%%%%% Proof Th.1.2 %%%%%%%%%%%%%%%%%%%%%%%%%%%%%%%%%%%%%%%%%%%%%%%%%%
\begin{proof}(of Theorem \ref{N of photons}) To find the mean-value of the number of photons in the cavity at the time
$t=n\tau$, we calculate the expectation of (\ref{L^n(b^*b)}) in the initial cavity state
$\omega_{\mathcal{C}}(\cdot) = \Tr_{{\mathcal{C}}}(\, \cdot \, \rho_{\mathcal{C}})$, (\ref{cL-dual-0}).
Since $\rho_{\mathcal{C}}$ is supposed to be \textit{gauge-invariant}, this yields
\begin{align}\label{number of photons_HamiltonianBIS}
N(t)=&\Tr_{{\mathcal{C}}}(b^*b \ \cL^n(\rho_{\mathcal{C}}))=\Tr_{{\mathcal{C}}}((\cL^{*})^n(b^*b)
\, \rho_{\mathcal{C}})\\
=&N(0)+ n \, p (1-p)\ \frac{2\lambda^2}{\epsilon^2} \ (1-\cos\epsilon\tau) +
p^2\frac{2\lambda^2}{\epsilon^2}(1-\cos n\epsilon\tau) \ , \nonumber
\end{align}
which coincides with (\ref{number of photons_Hamiltonian}) for $t=n\tau$.
\end{proof}
%%%%%%%%%%%%%%%%%%%%%%%%%%%%%%%%%%%%%%%% Corollary %%%%%%%%%%%%%%%%%%%%%%%%%%%%%%%%%%%%%%%%%%%%%%%%%%%%%%%%%%%%%%%%%%%%%
\begin{corollary}\label{N-non-g-inv} If the initial cavity state $\rho_{\mathcal{C},r,\phi}$ is not gauge-invariant,
then (\ref{number of photons not gauge}) and (\ref{number of photons_HamiltonianBIS}) give for $t=n\tau$
\begin{align}
N(t)=&N(0)+ n \, p (1-p)\ \frac{2\lambda^2}{\epsilon^2} \ (1-\cos\epsilon\tau)  +
p^2\frac{2\lambda^2}{\epsilon^2}(1-\cos n\epsilon\tau)  \nonumber \\
 + &p \, \frac{2 \lambda r}{\epsilon}\, [\cos\phi -\cos(n\epsilon\tau - \phi)] \ .
\label{number of photons_Hamiltonian-non-g-inv}
\end{align}
\end{corollary}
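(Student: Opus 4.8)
The plan is to read the result off directly from Lemma~\ref{cL-T(B)} by evaluating the dual observable $(\cL^{*})^n(b^*b)$ in the (now not gauge-invariant) initial cavity state $\rho_{\mathcal{C},r,\phi}$. Recall from the definition of the adjoint mapping~(\ref{cL-dual-0}) and from~(\ref{mean-photon-number-n}) that
\[
\omega_{\mathcal{C},r,\phi}^{n\tau}(\hat{N})=\Tr_{\mathcal{C}}(b^*b \ \cL^n(\rho_{\mathcal{C},r,\phi}))=\Tr_{\mathcal{C}}((\cL^{*})^n(b^*b) \ \rho_{\mathcal{C},r,\phi}) \ ,
\]
so the only new ingredient compared with the proof of Theorem~\ref{N of photons} is that the terms linear in $b$ and $b^*$ in~(\ref{L^n(b^*b)}) no longer have vanishing expectation.

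First I would record that, by the definition of the gauge-breaking parameter, $\omega_{\mathcal{C},r,\phi}(b)=r e^{i\phi}$ and hence $\omega_{\mathcal{C},r,\phi}(b^*)=r e^{-i\phi}$. Substituting formula~(\ref{L^n(b^*b)}) for $(\cL^{*})^n(b^*b)$ and using linearity of the trace, the $np(1-p)$ and $p^2$ contributions reproduce exactly the right-hand side of~(\ref{number of photons_HamiltonianBIS}), i.e. the value $N(t)$ of the gauge-invariant case, while the linear part contributes
\[
p \, \frac{\lambda}{\epsilon}\big[(1-e^{ni\epsilon\tau})\, r e^{-i\phi}+(1-e^{-ni\epsilon\tau})\, r e^{i\phi}\big] \ .
\]
Then I would simplify this bracket: it equals $r\big[e^{-i\phi}+e^{i\phi}-e^{i(n\epsilon\tau-\phi)}-e^{-i(n\epsilon\tau-\phi)}\big]=2r\big[\cos\phi-\cos(n\epsilon\tau-\phi)\big]$, producing the extra term $p\,\frac{2\lambda r}{\epsilon}[\cos\phi-\cos(n\epsilon\tau-\phi)]$ in~(\ref{number of photons_Hamiltonian-non-g-inv}). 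This is precisely the statement~(\ref{number of photons not gauge}) combined with~(\ref{number of photons_HamiltonianBIS}).

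There is no genuine obstacle here beyond bookkeeping; the one point that should be checked is that the expectations are finite, i.e. that $\rho_{\mathcal{C},r,\phi}$ lies in the domain where the extension of $\cL^{*}$ to polynomials in $b,b^*$ described in Remark~\ref{dual-operator}(c) applies. This is guaranteed by the standing hypothesis $N(0)=\omega_{\mathcal{C},r,\phi}(b^*b)<\infty$, which moreover forces $N(0)\geq r^2$ since $r^2=|\omega_{\mathcal{C},r,\phi}(b)|^2\leq\omega_{\mathcal{C},r,\phi}(b^*b)$ by Cauchy--Schwarz. Everything else is the same linear computation already performed in Lemma~\ref{cL-T(B)}.
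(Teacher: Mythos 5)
Your proposal is correct and follows the same route as the paper: evaluate the dual observable $(\cL^{*})^n(b^*b)$ from Lemma \ref{cL-T(B)} in the state $\rho_{\mathcal{C},r,\phi}$, noting that the terms linear in $b,b^*$ now contribute $p\frac{\lambda}{\epsilon}r\bigl[(1-e^{ni\epsilon\tau})e^{-i\phi}+(1-e^{-ni\epsilon\tau})e^{i\phi}\bigr]=p\frac{2\lambda r}{\epsilon}[\cos\phi-\cos(n\epsilon\tau-\phi)]$. The domain and $N(0)\geq r^2$ remarks are consistent with the paper's discussion around (\ref{number of photons not gauge}).
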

%%%%%%%%%%%%%%%%%%%%%%%%%%%%%%%%%%%%%%%%%%%%%%%%% Remark %%%%%%%%%%%%%%%%%%%%%%%%%%%%%%%%%%%%%%%%%%%%%%%%%%%%%%%%%%%%%%%%
\begin{remark} \label{0<p<1}
Theorem \ref{N of photons} implies that the pumping effect is non-trivial (i.e. $N(t)$ is
increasing to infinity) only for the beam of randomly exited atoms, i.e. $0<p<1$. There is no \textit{unlimited}
pumping of cavity by the non-exited, $p=0$, or totaly exited, $p=1$, atomic beams.
Note that by (\ref{Ham-Model}) the mean value of the photon cavity energy is defined by $\mathcal{E}_{\mathcal{C}}(t):=
\epsilon N(t)$. Hence, its unlimited increasing is due to the kicking by randomly excited chain of rigid atoms,
which are \textit{pushed} through the cavity.
\end{remark}
%%%%%%%%%%%%%%%%%%%%%%%%%%%%%%%%%%%%%%%%%%%%%%%%%%%%%%%%% Section %%%%%%%%%%%%%%%%%%%%%%%%%%%%%%%%%%%%%%%%%%%%%%%%%%%%%%%
\section{Irreversible Quantum Dynamics: The Open Cavity }\label{IQD}
\noindent
In a real physical cavity it is always possible for photons to leak \textit{out} of the cavity as well as to
diffuse \textit{in} from the environment. This realistic situation is modeled by what is known as an
\textit{open} cavity. In the present section we include these effects, via the Kossakowski-Lindblad extension
(\ref{Generator}) of  the Hamiltonian dynamics (\ref{Ham-Model}) with generator (\ref{L-Gen-t}). Below we consider 
the case of the open cavity with \textit{dominating} photon leaking: $\sigma_{-} > \sigma_{+} \geq 0$, although our 
results allow also to study certain limiting cases, see Remark \ref{instructive cases}.

Similar to the ideal cavity, see (\ref{cL(rho)_p}) in Lemma \ref{cL(rho)}, we first find a new expression for the
one-step evolution mapping in the case of the open cavity (\ref{cL-sigma}).
%%%%%%%%%%%%%%%%%%%%%%%%%%%%%%%%%%%%%%%%%%%%%%%% Lemma %%%%%%%%%%%%%%%%%%%%%%%%%%%%%%%%%%%%%%%%%%%%%%%%%%%%%%%%%%%%%%%%%%%
\begin{lemma}\label{L-sigma}
For any state $\rho \in \mathfrak{C}_{1}(\cH_{\mathcal{C}})$ the one-step evolution mapping {\rm{(}}\ref{cL-sigma}{\rm{)}}
for the open cavity has the form:
\begin{equation}\label{cL(rho)_p-sigma}
\cL_{\sigma}(\rho)=p \, S^{-1}(e^{\tau L_{\lambda,\sigma}}(S (\rho))) +(1-p)\, e^{\tau L_{0,\sigma}}(\rho).
\end{equation}
Here $S$ is defined by {\rm{(}}\ref{V}{\rm{)}} and $L_{\lambda,\sigma}$ acts on $\mathfrak{A}(\cH_{\mathcal{C}})$
as follows
\begin{align}\label{L^C_lambda}
L_{\lambda,\sigma}(\rho):=& -i[\epsilon b^*b, \rho] \\
&+ \sigma_- (b-{\lambda}/{\epsilon})\rho
(b^*-{\lambda}/{\epsilon})-\frac{\sigma_-
}{2}\{(b-{\lambda}/{\epsilon})(b^*-{\lambda}/{\epsilon}), \rho\}\nonumber \\
&+\sigma_+ (b^*-{\lambda}/{\epsilon})\rho (b-{\lambda}/{\epsilon})-\frac{\sigma_+
}{2}\{(b^*-{\lambda}/{\epsilon})(b-{\lambda}/{\epsilon}), \rho\}\nonumber,
\end{align}
with $L_{0,\sigma}:=L_{\lambda=0,\sigma}$.
\end{lemma}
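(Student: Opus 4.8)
The plan is to repeat, at the level of generators, the diagonalisation carried out in the proof of Lemma~\ref{cL(rho)}: conjugate the one-step Kossakowski--Lindblad generator $L_{\sigma,k}$ by the unitary shift $\widehat{S}_{k}$ of (\ref{Vn}), and then use the diagonality of $\rho_{k}$ together with $\eta_{k}^{2}=\eta_{k}$ to decouple an ``excited'' and a ``ground'' channel. Since $\widehat{S}_{k}(\cdot)=e^{iV_{k}}(\cdot)e^{-iV_{k}}$ is implemented by a unitary it intertwines the semigroups, $e^{\tau L_{\sigma,k}}=\widehat{S}_{k}^{-1}\circ e^{\tau\widehat{L}_{\sigma,k}}\circ\widehat{S}_{k}$ with $\widehat{L}_{\sigma,k}:=\widehat{S}_{k}\circ L_{\sigma,k}\circ\widehat{S}_{k}^{-1}$, and it acts on a generator of Kossakowski--Lindblad form by transforming the Hamiltonian and each operator entering the dissipative terms. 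Thus by (\ref{Diagonalized Hamiltonian}) the Hamiltonian $H_{k}$ turns into $\widehat{H}_{k}$, and by (\ref{Vn-shift}) and multiplicativity of the $*$-isomorphism $\widehat{S}_{k}$ the operators $b\otimes\idty$, $b^{*}\otimes\idty$, $bb^{*}\otimes\idty$, $b^{*}b\otimes\idty$ turn into $b\otimes\idty-\idty\otimes\frac{\lambda}{\epsilon}\eta_{k}$, $b^{*}\otimes\idty-\idty\otimes\frac{\lambda}{\epsilon}\eta_{k}$ and their products. (For the unbounded $L_{\sigma,k}$ these identities are understood in the sense of the continuous dynamics of Appendix~A.1; restricted to the single copy $\cH_{\mathcal{C}}\otimes\cH_{\mathcal{A}_{k}}$ it is the familiar damped/pumped oscillator generator.)

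The key structural point is that $\idty\otimes\eta_{k}$ commutes with $\widehat{H}_{k}$ and with $b\otimes\idty-\idty\otimes\frac{\lambda}{\epsilon}\eta_{k}$, so $\widehat{L}_{\sigma,k}$ leaves invariant the two complementary ``atomic sectors'' spanned by operators of the form $X\otimes\eta_{k}$ and $X\otimes(I-\eta_{k})$; moreover $\eta_{k}^{2}=\eta_{k}$ keeps the shift $b-\lambda/\epsilon$ alive on the first sector, while $\eta_{k}(I-\eta_{k})=0$ makes it disappear on the second. A short computation then gives
\[
\widehat{L}_{\sigma,k}(X\otimes\eta_{k})=(L_{\lambda,\sigma}(X))\otimes\eta_{k},\qquad \widehat{L}_{\sigma,k}(X\otimes(I-\eta_{k}))=(L_{0,\sigma}(X))\otimes(I-\eta_{k}),
\]
with $L_{\lambda,\sigma}$, $L_{0,\sigma}$ as in (\ref{L^C_lambda}), so that $e^{\tau\widehat{L}_{\sigma,k}}$ restricts to $e^{\tau L_{\lambda,\sigma}}$ on the first sector and to $e^{\tau L_{0,\sigma}}$ on the second.

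Next, since $\rho_{k}$ is diagonal, (\ref{atom-state}) gives $\eta_{k}\rho_{k}=p\,\eta_{k}$ and $(I-\eta_{k})\rho_{k}=(1-p)(I-\eta_{k})$, hence by (\ref{relationSSn})
\[
\widehat{S}_{k}(\rho\otimes\rho_{k})=p\,S(\rho)\otimes\eta_{k}+(1-p)\,\rho\otimes(I-\eta_{k}),
\]
which lies entirely in the two sectors above. Applying $e^{\tau\widehat{L}_{\sigma,k}}$, then $\widehat{S}_{k}^{-1}$ --- which by (\ref{relationSSnBIS}), used now with $\eta_{k}$ and $I-\eta_{k}$ in place of $\rho_{k}$, acts as $S^{-1}$ on the excited sector and as the identity on the ground sector --- and finally the partial trace $\Tr_{\mathcal{A}_{k}}$ (with $\Tr_{\mathcal{A}_{k}}\eta_{k}=\Tr_{\mathcal{A}_{k}}(I-\eta_{k})=1$), exactly as in the chain (\ref{cL(rho)1}), reproduces the claimed formula (\ref{cL(rho)_p-sigma}).

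I expect the only real work to be the bookkeeping in the displayed identity for $\widehat{L}_{\sigma,k}$: one must verify that conjugation by $\widehat{S}_{k}$ carries $\sigma_{-}(b\otimes\idty)(\cdot)(b^{*}\otimes\idty)-\frac{\sigma_{-}}{2}\{bb^{*}\otimes\idty,\cdot\}$ into exactly the $\sigma_{-}$-line of $L_{\lambda,\sigma}$ on the $\eta_{k}$-sector and into the $\sigma_{-}$-line of $L_{0,\sigma}$ on the $(I-\eta_{k})$-sector, and similarly for the $\sigma_{+}$-terms; this follows directly from (\ref{Vn-shift}) and multiplicativity of $\widehat{S}_{k}$, just as the computation (\ref{relationSSn})--(\ref{Diagonalized Hamiltonian}) did for the Hamiltonian part. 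The accompanying analytic subtlety --- that $L_{\sigma,k}$ is unbounded, so that ``$e^{\tau L_{\sigma,k}}$'' and the intertwining above are meaningful --- is of the same nature as in Lemma~\ref{cL(rho)} and is covered by the construction in Appendix~A.1.
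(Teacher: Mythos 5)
Your proposal is correct and follows essentially the same route as the paper's proof: conjugating the generator $L_{\sigma,k}$ by the shift $\widehat{S}_{k}$, observing that the transformed generator decouples into $L_{\lambda,\sigma}$ on the $\eta_k$-sector and $L_{0,\sigma}$ on the $(I-\eta_k)$-sector (the paper's computation (\ref{tilde_L})--(\ref{e_L_tilde})), and then undoing the shift and taking the partial trace. The only difference is cosmetic: you extract the weights $p$, $1-p$ at the start via $\eta_k\rho_k=p\,\eta_k$, $(I-\eta_k)\rho_k=(1-p)(I-\eta_k)$, whereas the paper carries $\eta_k\rho_k$ along and evaluates $p=\Tr_{\mathcal{A}_k}(\eta_k\rho_k)$ at the end.
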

%%%%%%%%%%%%%%%%%%%%%%%%%%%%%%%%%%%%%%%%%%%%%%%%%%% Proof %%%%%%%%%%%%%%%%%%%%%%%%%%%%%%%%%%%%%%%%%%%%%%%%%%%%%%%%%%%%%%%%
\begin{proof} Using $*$-isomorphisms (\ref{Vn}),(\ref{V}) and equations (\ref{relationSSn}),(\ref{relationSSnBIS})
we define instead of generator (\ref{Generator-KL}) the following operator:
\begin{align}\label{tilde_L}
&\widehat{L}_{\sigma,k}(\rho\otimes\rho_{k}):= \widehat{S}_{k}(L_{\sigma,k}(\widehat{S}^{-1}_{k}(\rho\otimes\rho_{k}))
\\
&=-i[\epsilon b^*b\otimes\idty +\idty\otimes(E-{\lambda^2}/{\epsilon})\eta_k, \rho\otimes(\eta_k \rho_{k} +
(I - \eta_k) \rho_{k})]\nonumber\\
&+ \left(\sigma_- (b-{\lambda}/{\epsilon})\rho
(b^*-{\lambda}/{\epsilon}) -\frac{\sigma_- }{2}\{(b-{\lambda}/{\epsilon})(b^*-{\lambda}/{\epsilon}),
\rho\}\right)\otimes \eta_k\rho_{k}\nonumber\\
&+ \left(\sigma_+ (b^*-{\lambda}/{\epsilon})\rho(b-{\lambda}/{\epsilon})-\frac{\sigma_+}{2}\{(b^*-{\lambda}/{\epsilon})
(b-{\lambda}/{\epsilon}), \rho\}\right)\otimes \eta_k\rho_{k}\nonumber\\
&+\left(\sigma_- b\rho b^*-\frac{\sigma_-}{2}\{bb^*, \rho\}\right)\otimes(I-\eta_k)\rho_{k}
+\left(\sigma_+ b^*\rho b-\frac{\sigma_-}{2}
\{b^*b, \rho\}\right)\otimes(I-\eta_k)\rho_{k} \nonumber \\
&= L_{\lambda,\sigma}(\rho)\otimes \eta_k\rho_{k}+L_{0,\sigma}(\rho)\otimes(I -\eta_k)\rho_{k} \nonumber \ .
\end{align}
With help of (\ref{tilde_L}) and (\ref{relationSSn}),(\ref{relationSSnBIS}) we obtain the representation
\begin{align}
&e^{\tau L_{\sigma,k}}(\rho\otimes\rho_{k})=\widehat{S}^{-1}_{k}(e^{\tau\widehat{L}_{\sigma,k}}
(\widehat{S}_{k}(\rho\otimes\rho_{k})))
\label{e_L_tilde}\\
&=\widehat{S}^{-1}_{k}(e^{\tau\widehat{L}_{\sigma,k}}(S(\rho)
\otimes {\eta_{k}}\rho_{k}+\rho\otimes(I-\eta_k)\rho_{k}))\nonumber\\
&=\widehat{S}^{-1}_{k}(e^{\tau L_{\lambda,\sigma}}(S(\rho)
\otimes {\eta_{k}}\rho_{k}))
+\widehat{S}^{-1}_{k}(e^{\tau L_{0,\sigma}}(\rho)\otimes(I-\eta_k) \rho_{k})\nonumber\\
&=S^{-1}(e^{\tau L_{\lambda,\sigma}} (S(\rho)))\otimes {\eta_{k}}\rho_{k}
+e^{\tau L_{0,\sigma}}(\rho)\otimes(I-\eta_k)\rho_{k}\nonumber  \ .
\end{align}
Therefore, the one-step mapping (\ref{cL-sigma}) takes the form
\begin{align*}
&\cL_{\sigma}(\rho)=\Tr_{{\mathcal{A}_k}} (e^{\tau L_{\sigma,k}}(\rho\otimes\rho_{k}))\\
&=p \, S^{-1}(e^{\tau L_{\lambda,\sigma}}(S(\rho))) +(1-p)\, e^{\tau L_{0,\sigma}}(\rho) \ ,
\end{align*}
which is claimed by Lemma in (\ref{cL(rho)_p-sigma}).
\end{proof}
%%%%%%%%%%%%%%%%%%%%%%%%%%%%%%%%%%%%%%%%%%%%%%%%%%%%% Corollary %%%%%%%%%%%%%%%%%%%%%%%%%%%%%%%%%%%%%%%%%%%%%%%%%%%%%%%%%
\begin{corollary}\label{dual-operator-sigma1}
To define the adjoint mapping $\cL_{\sigma}^{*}$ we use duality relation (\ref{cL-dual-0}) for (\ref{cL(rho)_p-sigma})
and any bounded operator $A\in\cB(\cH_{\mathcal{C}})$:
\begin{equation}\label{cL-dual-sigma}
\Tr_{{\mathcal{C}}}(\cL^{*}_{\sigma}(A) \rho): = \Tr_{{\mathcal{C}}}( A \ \cL_{\sigma}(\rho)) \ .
\end{equation}
Then by definitions (\ref{V}), (\ref{cL-dual-sigma}) and by explicit form (\ref{cL(rho)_p-sigma}) of the operator
$\cL_{\sigma}(\rho)$ one gets that
\begin{equation}\label{cL-dual-sigma1}
\cL^{*}_{\sigma}(A) = p \, S^{-1}((e^{\tau L_{\lambda,\sigma}})^{*}(S(A))) +(1-p)\, (e^{\tau L_{0,\sigma}})^{*}(A) \ .
\end{equation}
If for adjoint operators $(e^{\tau L_{\lambda,\sigma}})^{*}$ and $(e^{\tau L_{0,\sigma}})^{*}$ we define the
corresponding adjoint generators by
\begin{equation}\label{cL-dual-sigma2}
e^{\tau L_{\lambda,\sigma}^{*}}:=(e^{\tau L_{\lambda,\sigma}})^{*} \ \ {\rm{and}} \ \
e^{\tau L_{0,\sigma}^{*}}:=(e^{\tau L_{0,\sigma}})^{*} \ ,
\end{equation}
then (\ref{cL(rho)_p-sigma}),(\ref{L^C_lambda}) and (\ref{cL-dual-sigma})-(\ref{cL-dual-sigma2}) allow to find them
explicitly:
\begin{align}
L^*_{\lambda,\sigma}(A)&=i[\epsilon b^*b, A]+\frac{\sigma_-}{2}(b^*-{\lambda}/{\epsilon})[A,b]+
\frac{\sigma_-}{2}[b^*, A](b-{\lambda}/{\epsilon}) \label{L_lambda(B)}\\
&+\frac{\sigma_+}{2}(b-{\lambda}/{\epsilon})[A,b^*]+\frac{\sigma_+}{2}[b, A](b^*-{\lambda}/{\epsilon}) \ ,\nonumber\\
L^*_{0,\sigma}(A)&=i[\epsilon b^*b, A]+\frac{\sigma_-}{2}b^*[A,b]+\frac{\sigma_-}{2}[b^*, A]b \label{L_0_lambda(B)}\\
&+\frac{\sigma_+}{2}b[A,b^*]+\frac{\sigma_+}{2}[b, A]b^* \ , \nonumber
\end{align}
by straightforward calculations.
\end{corollary}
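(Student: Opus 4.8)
The plan is to get all three assertions of Corollary~\ref{dual-operator-sigma1} directly from Lemma~\ref{L-sigma} by transposing, across the trace pairing $(A,\rho)\mapsto\Tr_{\mathcal{C}}(A\rho)$, each factor of the decomposition (\ref{cL(rho)_p-sigma}) of $\cL_\sigma$. The one fact I would record first is that the $*$-automorphism $S$ of (\ref{V}) is implemented by the unitary $e^{iV}$, hence is trace-preserving and \emph{self-dual}: for $A\in\cB(\cH_{\mathcal{C}})$ and $\rho\in\mathfrak{C}_1(\cH_{\mathcal{C}})$ cyclicity of the trace gives $\Tr_{\mathcal{C}}(A\,S^{-1}(\rho))=\Tr_{\mathcal{C}}(e^{iV}Ae^{-iV}\rho)=\Tr_{\mathcal{C}}(S(A)\,\rho)$, i.e. $S^{*}=S^{-1}$ with respect to this pairing.

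Granting that, I would insert (\ref{cL(rho)_p-sigma}) into the right-hand side of the defining relation (\ref{cL-dual-sigma}) and split off the $p$- and $(1-p)$-terms. For the $(1-p)$-term, $\Tr_{\mathcal{C}}(A\,e^{\tau L_{0,\sigma}}(\rho))=\Tr_{\mathcal{C}}((e^{\tau L_{0,\sigma}})^{*}(A)\,\rho)$ is simply the definition of the Banach-space adjoint of $e^{\tau L_{0,\sigma}}$. For the $p$-term I would peel off the outer $S^{-1}$ by self-duality, writing $\Tr_{\mathcal{C}}(A\,S^{-1}(e^{\tau L_{\lambda,\sigma}}(S(\rho))))=\Tr_{\mathcal{C}}(S(A)\,e^{\tau L_{\lambda,\sigma}}(S(\rho)))$, then move $e^{\tau L_{\lambda,\sigma}}$ onto the observable, $=\Tr_{\mathcal{C}}((e^{\tau L_{\lambda,\sigma}})^{*}(S(A))\,S(\rho))$, then peel off the inner $S$ the same way, $=\Tr_{\mathcal{C}}(S^{-1}((e^{\tau L_{\lambda,\sigma}})^{*}(S(A)))\,\rho)$. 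Reassembling, $\Tr_{\mathcal{C}}(\cL_{\sigma}^{*}(A)\rho)=\Tr_{\mathcal{C}}([\,p\,S^{-1}((e^{\tau L_{\lambda,\sigma}})^{*}(S(A)))+(1-p)(e^{\tau L_{0,\sigma}})^{*}(A)\,]\,\rho)$ for every trace-class $\rho$, so the bracket equals $\cL_{\sigma}^{*}(A)$, which is (\ref{cL-dual-sigma1}). Abstractly this is nothing but $(\Phi\circ\Psi)^{*}=\Psi^{*}\circ\Phi^{*}$ combined with $S^{*}=S^{-1}$.

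Next I would identify the generators $L_{\lambda,\sigma}^{*}$ and $L_{0,\sigma}^{*}$. The operator $L_{\lambda,\sigma}$ in (\ref{L^C_lambda}) is in standard Lindblad form with Hamiltonian $\epsilon b^{*}b$ and Lindblad operators $\sqrt{\sigma_-}\,(b-\lambda/\epsilon)$ and $\sqrt{\sigma_+}\,(b^{*}-\lambda/\epsilon)$; the trace-dual of a dissipative term $V(\cdot)V^{*}-\ff12\{V^{*}V,\cdot\}$ is $V^{*}(\cdot)V-\ff12\{V^{*}V,\cdot\}$, while $-i[\epsilon b^{*}b,\cdot]$ becomes $+i[\epsilon b^{*}b,\cdot]$. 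Using the algebraic identity $V^{*}AV-\ff12\{V^{*}V,A\}=\ff12 V^{*}[A,V]+\ff12[V^{*},A]V$, together with $[A,b-\lambda/\epsilon]=[A,b]$ and $[b^{*}-\lambda/\epsilon,A]=[b^{*},A]$ (and the analogous identities with $b\leftrightarrow b^{*}$), puts the dual into exactly the symmetric shape displayed in (\ref{L_lambda(B)}); setting $\lambda=0$ gives (\ref{L_0_lambda(B)}). Finally $(e^{\tau L_{\lambda,\sigma}})^{*}=e^{\tau L_{\lambda,\sigma}^{*}}$ because adjunction reverses composition, so $(L_{\lambda,\sigma}^{n})^{*}=(L_{\lambda,\sigma}^{*})^{n}$ and the exponential series transposes term by term — this is precisely the notational identification (\ref{cL-dual-sigma2}).

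The only genuine point of care — the reason ``straightforward'' is not ``trivial'' — is that these generators are unbounded, so a priori the transpositions above are formal. As in Remark~\ref{dual-operator}(c), I would run the whole argument on the class of observables that are polynomials in $b,b^{*}$ (in particular $b^{*}b$, $b$, $b^{*}$), on which all the traces are finite and on which, crucially, $e^{\tau L_{\lambda,\sigma}}$ and $e^{\tau L_{0,\sigma}}$ do not raise the polynomial degree; this is exactly what makes $\cL_{\sigma}^{*}$ effectively computable on the observables actually used in Theorems~\ref{cavity-lim-state-sigma} and~\ref{number of photons-sigma}. The abstract domain bookkeeping for the Kossakowski--Lindblad semigroup itself I would relegate to Appendix~A.1, which is where the existence and complete positivity of $T^{\sigma}_{t,0}$ are treated.
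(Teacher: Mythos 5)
Your proof is correct and follows essentially the same route the paper intends: transpose the convex decomposition (\ref{cL(rho)_p-sigma}) across the trace pairing using $S^{*}=S^{-1}$ (cyclicity of the trace for the unitary $e^{iV}$) together with the standard rule for adjoints of Lindblad dissipators, restricting to polynomial observables to handle unboundedness as in Remark \ref{dual-operator}(c); this is exactly the ``straightforward calculation'' the Corollary alludes to, and is equivalent to the paper's alternative derivation via $\widehat{S}_{k}^{\ast}=\widehat{S}_{k}^{-1}$ in Remark \ref{dual-operator-sigma-bis}. The one caveat is that you (rightly) read (\ref{L^C_lambda}) in the standard form $V\rho V^{*}-\frac{1}{2}\{V^{*}V,\rho\}$, whereas its anticommutators as printed are of the form $\{VV^{*},\rho\}$ (an apparent typo, cf.\ the trace-preserving version (\ref{Damp-Pump-Cav})), so a literal transposition would differ from (\ref{L_lambda(B)}) by a multiple of $A$.
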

%%%%%%%%%%%%%%%%%%%%%%%%%%%%%%%%%%%%%%%%%%%%%%%%%%%%% Remark %%%%%%%%%%%%%%%%%%%%%%%%%%%%%%%%%%%%%%%%%%%%%%%%%%%%%%%%%%%%%
\begin{remark}\label{dual-operator-sigma}
As we indicated in Remark \ref{dual-operator} {\rm{(c)}}, one can extend the adjoint one-step mapping $\cL_{\sigma}^{*}$
to the algebra generated by polynomials in the annihilation and creation operators.
\end{remark}
%%%%%%%%%%%%%%%%%%%%%%%%%%%%%%%%%%%%%%%%%%%%%%%%%%%%%%%%%%%%%%%%%%%%%%%%%%%%%%%%%%%%%%%%%%%%%%%%%%%%%%%%%%%%%%%%%%%%%%%%%%
%%%%%%%%%%%%%%%%%%%%%%%%%%%%%%%%%%%%%%%%%%%%%%%%%%%%% Remark %%%%%%%%%%%%%%%%%%%%%%%%%%%%%%%%%%%%%%%%%%%%%%%%%%%%%%%%%%%%%
\begin{remark}\label{dual-operator-sigma-bis}
Duality allows us to define by the relation
\begin{equation}\label{dual-S}
\Tr (\widehat{S}_{k}^\ast (\widehat{A}) \ \rho\otimes\rho_{k}):=
\Tr (\widehat{A} \ \widehat{S}_{k}(\rho\otimes\rho_{k})) \ ,
\end{equation}
the operator $\widehat{S}_{k}^\ast$ with domain consisting of operators
$\widehat{A}= A \otimes a \in \mathfrak{A}(\cH_{\mathcal{C}})\otimes \mathfrak{A}(\cH_{\mathcal{A}_{k}})$.
By definition (\ref{Vn}) and by cyclicity of the trace, one obtains from (\ref{dual-S}) that
$\widehat{S}_{k}^\ast = \widehat{S}_{k}^{-1}$, cf (\ref{relationSSn}), (\ref{relationSSnBIS}).
Then (\ref{tilde_L}) and (\ref{dual-S}) yield for the adjoint of the operator defined in (\ref{tilde_L})
the following expression
\begin{equation}\label{hat-L-dual}
\widehat{L}^*_{\sigma,k}(\cdot) = \widehat{S}_{k}({L}^*_{\sigma,k}(\widehat{S}_{k}^{-1}(\cdot))) \ .
\end{equation}
Here ${L}^*_{\sigma,k}$ is defined in (\ref{S-state-evol-adj-sigma}). It can be explicitly calculated on
operators $A \otimes a$ with help of (\ref{Generator-KL}). Since (\ref{tilde_L}) and (\ref{hat-L-dual}) imply
\begin{equation}\label{hat-L-dual-BIS}
\widehat{L}^*_{\sigma,k}(A \otimes a)=
L_{\lambda,\sigma}^*(A)\otimes \eta_k a + L_{0,\sigma}^*(A)\otimes(I -\eta_k) a  \ ,
\end{equation}
this equation gives an alternative way to establish (\ref{L_lambda(B)}) and (\ref{L_0_lambda(B)}).
Similar to (\ref{e_L_tilde}), the representation (\ref{hat-L-dual-BIS}) indicates that for each $k\geq 1$
the mapping $\{e^{\tau \widehat{L}^*_{\sigma,k}}\}$ is reducible by two sub-algebras of
$\mathfrak{A}(\cH_{\mathcal{C}})\otimes \mathfrak{A}(\cH_{\mathcal{A}_{k}})$:
\begin{align}\label{reduc1}
& e^{\tau \widehat{L}^*_{\sigma,k}}: A \otimes \eta_k a \mapsto e^{\tau L_{\lambda,\sigma}^*}(A)\otimes \eta_k a \ ,\\
& e^{\tau \widehat{L}^*_{\sigma,k}}: A \otimes (I -\eta_k) a \mapsto
e^{\tau L_{0,\sigma}^*}(A)\otimes (I -\eta_k) a \ . \label{reduc2}
\end{align}
\end{remark}
%%%%%%%%%%%%%%%%%%%%%%%%%%%%%%%%%%%%%%%%%%%%%%%%%%%%%%%%%%%%%%%%%%%%%%%%%%%%%%%%%%%%%%%%%%%%%%%%%%%%%%%%%%%%%%%%%%%%%%%%%%
\begin{proof}(of Theorem \ref{cavity-lim-state-sigma})
Using the Baker-Campbell-Hausdorff formula one can rewrite the Weyl operator in the form
\begin{equation}\label{B-C-H-formula}
W(\zeta)=e^{{i}(\overline{\zeta} b+\zeta b^*)/{\sqrt{2}}}=e^{{i}\zeta b^*/{\sqrt{2}}}
e^{{i}\overline{\zeta} b/{\sqrt{2}}}e^{-{|\zeta|^2}/{4}} \ .
\end{equation}
{From} the CCR we find that for any $\beta, \gamma\in\bC$
\begin{equation}\label{commutation_Weyl}
e^{\beta b^*}b=(b-\beta)e^{\beta b^*} \text{ and \ }e^{\gamma b}b^*=(b^*+\gamma)e^{\gamma b}.
\end{equation}
By (\ref{B-C-H-formula}) and (\ref{commutation_Weyl}) the action of $L_{\lambda,\sigma}^{*}$ is
\begin{align}
&L_{\lambda,\sigma}^{*}(W(\zeta))=-\frac{i}{\sqrt{2}}\mu \, \overline{\zeta} \, W(\zeta)\,
b-\frac{i}{\sqrt{2}}\overline{\mu} \, \zeta \, b^* W(\zeta)-\frac{\sigma_+}{2}|\zeta|^2W(\zeta)\label{LW}\\
&+\frac{i}{\sqrt{2}}
\frac{\lambda(\sigma_--\sigma_+)}{2\epsilon}(\zeta+\overline{\zeta})
W(\zeta)\nonumber \ .
\end{align}
Recall that $\mu = i\epsilon + (\sigma_--\sigma_+ )/{2}$ . Therefore, the mapping generated by (\ref{L_lambda(B)})
(and correspondingly by (\ref{L_0_lambda(B)})):
\begin{equation}\label{Dyn-C}
\gamma_{\lambda,\tau}:=e^{\tau L_{\lambda,\sigma}^{*}}
\end{equation}
is \textit{quasi-free} \cite{DVV1,DVV2}. In particular, there exist functions $\zeta(\tau)$ and
$\Omega_{\lambda,\tau}(\zeta)$ such that
\begin{equation}\label{gammaW}
\gamma_{\lambda,\tau}(W(\zeta))=e^{-\Omega_{\lambda,\tau}(\zeta)}W(\zeta(\tau)).
\end{equation}
See Appendix A.4 for definitions and details.

To check this claim we calculate explicitly  $\zeta(\tau)$ and $\Omega_{\lambda, \tau}(\zeta)$ using differential equation
for (\ref{gammaW}):
\begin{equation}\label{Diff_W}
\frac{d\gamma_{\lambda,\tau}(W(\zeta))}{d\tau}=L_{\lambda,\sigma}^{*}(\gamma_{\lambda,\tau}(W(\zeta))).
\end{equation}
By (\ref{LW}) the right-hand side of (\ref{Diff_W}) is given by
\begin{align}
&L_{\lambda,\sigma}^{*}(\gamma_{\lambda,\tau}(W(\zeta)))\nonumber\\
&=e^{-\Omega_{\lambda,\tau}(\zeta)}(-\frac{i}{\sqrt{2}}\mu\overline{\zeta(\tau)}W(\zeta(\tau))
b-\frac{i}{\sqrt{2}}\overline{\mu}\zeta(\tau) b^*W(\zeta(\tau))\nonumber\\
&-\frac{\sigma_+}{2}|\zeta(\tau)|^2W(\zeta(\tau))+\frac{i}{\sqrt{2}}\frac{\lambda(\sigma_--\sigma_+)}
{2\epsilon}(\zeta(\tau)+\overline{\zeta(\tau)})W(\zeta(\tau)))\label{LW_tau}.
\end{align}
The derivative of the $\tau$-dependent Weyl operator $W(\zeta(\tau))$ is calculated using Baker-Campbell-Hausdorff
formula (\ref{B-C-H-formula})
\begin{align*}
&\frac{dW(\zeta(\tau))}{d\tau}\\
&=\frac{i}{\sqrt{2}}\frac{d\zeta(\tau)}{d\tau}b^*W(\zeta(\tau))+\frac{i}{\sqrt{2}}
\frac{d\overline{\zeta(\tau)}}{d\tau}W(\zeta(\tau))b-\frac{1}{4}W(\zeta(\tau))\frac{d(\zeta(\tau)
\overline{\zeta(\tau)})}{d\tau}.
\end{align*}
Therefore, $\gamma_{\lambda,\tau}(W(\zeta))$ satisfies the following differential equation
\begin{align}
&\frac{d\gamma_{\lambda,\tau}(W(\zeta))}{d\tau}\nonumber\\
&=e^{-\Omega_{\lambda,\tau}(\zeta)}\left(\frac{i}{\sqrt{2}}\frac{d\zeta(\tau)}{d\tau}b^*W(\zeta(\tau))+
\frac{i}{\sqrt{2}}\frac{d\overline{\zeta(\tau)}}{d\tau}W(\zeta(\tau))b\right)\label{dW}\\
&+e^{-\Omega_{\lambda,\tau}(\zeta)}\left(-\frac{d\Omega_{\lambda, \tau}
(\zeta)}{d\tau}W(\zeta(\tau))-\frac{1}{4}W(\zeta(\tau))\frac{d(\zeta(\tau)\overline{\zeta(\tau)})}{d\tau}
\right)\nonumber.
\end{align}

Due to (\ref{Diff_W}), we can match the right-hand side of  (\ref{dW})  with (\ref{LW_tau}) and
obtain the following system of differential equations for the functions
$\zeta(\tau)$ and $\Omega_{\lambda, \tau}(\zeta)$:
\begin{equation*}
\frac{d\zeta(\tau)}{d\tau}=-\overline{\mu} \zeta(\tau)
\end{equation*}
and
\begin{equation}
\frac{d\Omega_{\lambda, \tau}(\zeta)}{d\tau}=-\frac{1}{4}\frac{d(\zeta(\tau)\overline{\zeta(\tau)})}{d\tau}+
\frac{\sigma_+}{2}|\zeta(\tau)|^2-\frac{i}{\sqrt{2}} \frac{\lambda(\sigma_--\sigma_+)}{2\epsilon}(\zeta(\tau)+
\overline{\zeta(\tau)}) .\label{diff_eq_Omega}
\end{equation}
The solution to the first differential equation is
\begin{equation*}
\zeta(\tau)= e^{-\overline{\mu}\tau}\zeta,
\end{equation*}
and using it in the second equation we find
\begin{equation*}
\frac{d\Omega_{\lambda,\tau}(\zeta)}{d\tau}=\frac{\sigma_-+\sigma_+}{4}e^{-(\sigma_--\sigma_+)\tau}|\zeta|^2-
\frac{i}{\sqrt{2}}\frac{\lambda(\sigma_--\sigma_+)}{2\epsilon}(e^{-\overline{\mu}\tau}\zeta+e^{-\mu\tau}\overline{\zeta}).
\end{equation*}
Therefore, the solution of (\ref{diff_eq_Omega}) is
\begin{align*}
\Omega_{\lambda,\tau}(\zeta)&=\frac{|\zeta|^2}{4}\frac{\sigma_-+\sigma_+}{\sigma_--\sigma_+}
(1-e^{-(\sigma_--\sigma_+)\tau})\\
&-\frac{i}{\sqrt{2}} \frac{\lambda(\sigma_--\sigma_+)}{2\epsilon\mu}(1-e^{-\mu\tau})\overline{\zeta}-
\frac{i}{\sqrt{2}}\frac{\lambda(\sigma_--\sigma_+)}
{2\epsilon\overline{\mu}} (1-e^{-\overline{\mu}\tau}){\zeta}.
\end{align*}
Combining the solutions $\zeta(\tau)$ and $\Omega_{\lambda,\tau}(\zeta)$ with the expression for
$\gamma_{\lambda,\tau}(W(\zeta))$ (\ref{gammaW}) we get
\begin{align*}
&\gamma_{\lambda,\tau}(W(\zeta))=W(e^{-\overline{\mu}\tau}
\zeta)\exp\left(-\frac{|\zeta|^2}{4}\frac{\sigma_-+\sigma_+}{\sigma_--\sigma_+}(1-e^{-(\sigma_--\sigma_+)\tau})\right)\\
&\times\exp\left(\frac{i}{\sqrt{2}}
\frac{\lambda(\sigma_--\sigma_+)}{2\epsilon\mu}(1-e^{-\mu\tau})\overline{\zeta}+\frac{i}{\sqrt{2}}
\frac{\lambda(\sigma_--\sigma_+)}
{2\epsilon\overline{\mu}} (1-e^{-\overline{\mu}\tau})\zeta\right).
\end{align*}

To calculate $\cL_{\sigma}^{*}(W(\zeta))$ we use (\ref{cL-dual-sigma1}). Then for the first term one gets
\begin{align*}
&p \ S^{-1}(\gamma_{\lambda,\tau}(S(W(\zeta)))\\
&=p \ e^{-{i}{\lambda}(\zeta+\overline{\zeta})/{\sqrt{2}}{\epsilon}} \
e^{-{\lambda}(b^*-b)/{\epsilon}}\gamma_{\lambda,\tau}(W(\zeta))e^{{\lambda}(b^*-b)/{\epsilon}}\\
&=p \ \exp\left(-\frac{|\zeta|^2}{4}\frac{\sigma_-+\sigma_+}{\sigma_--\sigma_+}(1-e^{-(\sigma_--\sigma_+)\tau})\right)\\
&\times \exp\left(\frac{1}{\sqrt{2}}\left(\frac{\lambda}{{\mu}}(1-e^{-{\mu}\tau})\overline{\zeta}-\frac{\lambda}
{\overline{\mu}}(1-e^{-\overline{\mu}\tau}){\zeta}\right)\right) W(e^{-\overline{\mu}\tau}\zeta) \ .
\end{align*}
If we put $\lambda=0$ in this expression and substitute $p$ for $1-p$, then we obtain the second term of
(\ref{cL-dual-sigma1}). Together this yields the one-step evolution for the Weyl operator:
\begin{align}\label{W-sigma-n=1}
&\cL_{\sigma}^{*}(W(\zeta))=W(e^{-\overline{\mu}\tau}\zeta) \, \exp\left(-\frac{|\zeta|^2}{4}\frac{\sigma_- + \sigma_+}
{\sigma_--\sigma_+}(1-e^{-(\sigma_--\sigma_+)\tau})\right)\\
&\times \Big\{p\exp\left(\frac{1}{\sqrt{2}}\left(\frac{\lambda}{{\mu}}(1-e^{-{\mu}\tau})\overline{\zeta}-\frac{\lambda}
{\overline{\mu}}(1-e^{-\overline{\mu}\tau}){\zeta}\right)\right)+1-p\Big\} \ .\nonumber
\end{align}
Note that operator $\cL_{\sigma}^{*}$ is a convex combination (\ref{cL-dual-sigma1}) of quasi-free (completely positive)
maps and hence, in general, it is \textit{not} quasi-free itself.
Evolution of the Weyl operator for the time $t=n\tau$ ($n$-step evolution) follows from (\ref{W-sigma-n=1}) by induction:
\begin{align}\label{nth power on W}
&(\cL_{\sigma}^{*})^n(W(\zeta))=W(e^{-n\overline{\mu}\tau}\zeta)\exp\left(-\frac{|\zeta|^2}{4}\frac{\sigma_-+\sigma_+}
{\sigma_--\sigma_+}(1-e^{-n(\sigma_--\sigma_+)\tau})\right)\\
&\times\prod_{k=0}^{n-1}\Big\{p\exp\left(\frac{1}{\sqrt{2}}\left(\frac{\lambda}{{\mu}}(1-e^{-{\mu}\tau})
e^{-k{\mu}\tau}\overline{\zeta}-\frac{\lambda}{\overline{\mu}}(1-e^{-\overline{\mu}\tau})e^{-k\overline{\mu}\tau}
{\zeta}\right)\right)+1-p\Big\}.\nonumber
\end{align}

If $\sigma_{-} > \sigma_{+}$, then ${\rm{Re}}(\mu) <0$, which implies
\begin{equation}\label{w-lim of W}
w^*-\lim_{n \rightarrow \infty} W(e^{-n\overline{\mu}\tau}\zeta) = \idty \ ,
\end{equation}
see Appendix A.3. To prove the convergence of the product (\ref{nth power on W}) for the limit $n \rightarrow \infty$,
we denote
\begin{equation*}
h_k(\zeta):=p(\exp{\frac{1}{\sqrt{2}}\left(\frac{\lambda}{\mu}(1-e^{-\mu\tau})e^{-k\mu\tau}\overline{\zeta}-
\frac{\lambda}{\overline{\mu}}(1-e^{-\overline{\mu}\tau})
e^{-k\overline{\mu}\tau}\zeta\right)}-1) \ .
\end{equation*}
Since the product $\prod_{k=0}^\infty (1+h_k(\zeta))$ converges if and only if we establish convergence of the
series $\sum_{k=0}^\infty |h_k(\zeta)|$, we have to estimate the terms $\{|h_k(\zeta)|\}_{k\geq 1}$. Note that
\begin{equation}\label{h-k}
|h_k(\zeta)|\leq 2\sqrt{2}p\frac{\lambda|\zeta|}{|\mu|^2}(\frac{\sigma_--\sigma_+}{2}+\epsilon)
(1+e^{-(\sigma_{-} - \sigma_{+})\tau/2})e^{-k(\sigma_{-} - \sigma_{+})\tau/2} \ .
\end{equation}
Hence, (\ref{h-k}) ensures the convergence of the series and the infinite product for $\sigma_{-} - \sigma_{+} >0$.

Summarizing, we obtain that for the open cavity with parameters $\sigma_{-} > \sigma_{+} \geq 0$ the characteristic
functional (see Appendix A.5) for the limiting state $\omega_{C,\sigma}(\cdot)$ (\ref{lim-cavity-sigma}) exists and
is given by
\begin{align}\label{lim-cavity}
&\omega_{C,\sigma}(W(\zeta))=\exp\left(-\frac{|\zeta|^2}{4}\frac{\sigma_-+\sigma_+}
{\sigma_--\sigma_+}\right)\\
&\times\prod_{k=0}^{\infty}\Big\{p\exp\left(\frac{1}{\sqrt{2}}\left(\frac{\lambda}{{\mu}}(1-e^{-{\mu}\tau})
e^{-k{\mu}\tau}\overline{\zeta}-\frac{\lambda}{\overline{\mu}}(1-e^{-\overline{\mu}\tau})e^{-k\overline{\mu}\tau}
{\zeta}\right)\right)+1-p\Big\}, \nonumber
\end{align}
which is independent of the initial cavity state $\rho_{\mathcal{C}}$.
\end{proof}
%%%%%%%%%%%%%%%%%%%%%%%%%%%%%%%%%%%%%%%%%%%%%%%% Remark %%%%%%%%%%%%%%%%%%%%%%%%%%%%%%%%%%%%%%%%%
\begin{remark}
Notice that the evolution of the Weyl operator for the time $t=n\tau$ can be written as a convex linear
combination of quasi-free completely positive maps (see Appendix A.4):
\begin{align*}
&(\cL_{\sigma}^{*})^n(W(\zeta))=W(e^{-n\overline{\mu}\tau}\zeta)\exp\left(-\frac{|\zeta|^2}{4}\frac{\sigma_-+\sigma_+}
{\sigma_--\sigma_+}
(1-e^{-n(\sigma_--\sigma_+)\tau})\right)\\
&\times\sum_{m=0}^{n-1}p^m(1-p)^{n-1-m}\sum_{1\leq {k_1}<...<{k_m}\leq n-1}\\
&\times\exp\left(\frac{1}{\sqrt{2}}
\frac{\lambda}{|\mu|^2}\left( \overline{\mu}(1-e^{-{\mu}\tau})(e^{-k_1{\mu}\tau}+...+
e^{-k_m{\mu}\tau}) \overline{\zeta} - {\rm c.c.}\right)\right).
\end{align*}
Here c.c. stands for the complex conjugate of the first term. Then (\ref{lim-cavity}) and the last formula suggest that
the limiting state $\omega_{C,\sigma}(\cdot)$ is \textit{not} quasi-free.
\end{remark}
%%%%%%%%%%%%%%%%%%%%%%%%%%%%%%%%%%%%%%%%%%%%%%%%%%%%%%%%%%%%%%%%%%%%%%%%%%%%%%%%%%%%%%%%%%%%%%%%%%%%%%%%%%%%%%%

The rest of the chapter is devoted to the proof of Theorem \ref{number of photons-sigma}, which is similar to the
case $\sigma_\mp=0$. The number of photons (\ref{N(t)_s}) for the time $t=n\tau$ can be calculated using the
adjoint operator
\begin{equation}\label{n-number-sigma}
N_{\sigma}(n\tau)= \Tr_{{\mathcal{C}}}(b^*b  \ \cL_{\sigma}^n(\rho_{\mathcal{C}})) = \Tr_{{\mathcal{C}}}
((\cL_{\sigma}^{*})^n(b^*b)\ \rho_{\mathcal{C}}) \ ,
\end{equation}
where $\rho_{\mathcal{C}}$ is the initial gauge-invariant state of the cavity.
Note that by (\ref{cL-dual-sigma1}) and (\ref{cL-dual-sigma2}) we obtain for $A=b^*b$
\begin{align}\label{cL^t(b^*b)+s}
\cL_{\sigma}^{*}(b^*b) = p \ S^{-1}(e^{\tau L_{\lambda,\sigma}^{*}} S(b^*b))
+(1-p)e^{\tau L_{0,\sigma}^{*}}(b^*b) \ .
\end{align}
%%%%%%%%%%%%%%%%%%%%%%%%%%%%%%%%%%% Lemma %%%%%%%%%%%%%%%%%%%%%%%%%%%%%%%%%%%%%
\begin{lemma}\label{number-photon-sigma}
The action of the adjoint operator $\cL_{\sigma}^{*}$ on the photon number operator
is explicitly given by
\begin{align}
&\cL_{\sigma}^{*}(b^*b)\label{action on b^*b}\\
&=e^{-(\sigma_--\sigma_+)\tau}b^*b+p\frac{i\lambda}{\mu}e^{-(\sigma_--\sigma_+)\tau}
(1-e^{\mu\tau})b^*-p\frac{i\lambda}{\overline{\mu}}e^{-(\sigma_--\sigma_+)\tau}
(1-e^{\overline{\mu}\tau})b\nonumber\\
&+p\frac{\lambda^2}{|\mu|^2}e^{-(\sigma_--\sigma_+)\tau}(1-e^{\mu\tau})(1-e^{\overline{\mu}\tau})+
\frac{\sigma_+}{\sigma_--\sigma_+}(1-e^{-(\sigma_--\sigma_+)\tau}),\nonumber
\end{align}
where $\mu = i\epsilon + ( \sigma_--\sigma_+ )/{2}$ .
\end{lemma}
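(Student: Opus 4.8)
The plan is to evaluate the two terms on the right-hand side of (\ref{cL^t(b^*b)+s}) explicitly; the only nontrivial ingredient is the action of the flow $e^{\tau L^*_{\lambda,\sigma}}$ on $b$, $b^*$ and $b^*b$. The structural observation that makes this elementary is that the four-dimensional subspace $V:=\mathrm{span}_{\mathbb{C}}\{\idty,\,b,\,b^*,\,b^*b\}\subset\mathfrak{A}(\cH_{\mathcal{C}})$ is invariant under the adjoint generator $L^*_{\lambda,\sigma}$ of (\ref{L_lambda(B)}), so that $e^{\tau L^*_{\lambda,\sigma}}$ restricted to $V$ is the exponential of a $4\times4$ matrix. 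Indeed, a direct computation using the CCR gives $L^*_{\lambda,\sigma}(\idty)=0$ and
\begin{equation*}
L^*_{\lambda,\sigma}(b)=-\mu\,b+\frac{\lambda(\sigma_--\sigma_+)}{2\epsilon},\qquad L^*_{\lambda,\sigma}(b^*)=-\overline{\mu}\,b^*+\frac{\lambda(\sigma_--\sigma_+)}{2\epsilon},
\end{equation*}
\begin{equation*}
L^*_{\lambda,\sigma}(b^*b)=-(\sigma_--\sigma_+)\,b^*b+\frac{\lambda(\sigma_--\sigma_+)}{2\epsilon}(b+b^*)+\sigma_+,
\end{equation*}
with $\mu=i\epsilon+(\sigma_--\sigma_+)/2$. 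Note that the induced matrix is block triangular: the $b$- and $b^*$-components evolve autonomously, and the $b^*b$-component is driven by $\idty$, $b$ and $b^*$.

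Next I would integrate the linear system $\tfrac{d}{d\tau}X(\tau)=L^*_{\lambda,\sigma}(X(\tau))$. The $b$- and $b^*$-equations are solved immediately, $e^{\tau L^*_{\lambda,\sigma}}(b)=e^{-\mu\tau}b+\frac{\lambda(\sigma_--\sigma_+)}{2\epsilon\mu}(1-e^{-\mu\tau})$ and, with $\mu$ replaced by $\overline{\mu}$, the corresponding formula for $b^*$. Writing $e^{\tau L^*_{\lambda,\sigma}}(b^*b)=\alpha(\tau)\,b^*b+\beta(\tau)\,b+\gamma(\tau)\,b^*+\delta(\tau)$, one finds $\alpha(\tau)=e^{-(\sigma_--\sigma_+)\tau}$ and then obtains $\beta,\gamma,\delta$ by variation of parameters (with integrating factors $e^{\mu\tau}$, $e^{\overline{\mu}\tau}$, $e^{(\sigma_--\sigma_+)\tau}$). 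The algebraic simplifications rely on the elementary identities $\mu+\overline{\mu}=\sigma_--\sigma_+$, $\mu-(\sigma_--\sigma_+)=-\overline{\mu}$, $\mu-(\sigma_--\sigma_+)/2=i\epsilon$, $\frac1\mu+\frac1{\overline{\mu}}=\frac{\sigma_--\sigma_+}{|\mu|^2}$ and $|\mu|^2-\big(\tfrac{\sigma_--\sigma_+}{2}\big)^2=\epsilon^2$.

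With these formulas at hand I would expand $S(b^*b)=(b^*-\lambda/\epsilon)(b-\lambda/\epsilon)=b^*b-\frac{\lambda}{\epsilon}(b^*+b)+\frac{\lambda^2}{\epsilon^2}$, apply $e^{\tau L^*_{\lambda,\sigma}}$ term by term, and then apply $S^{-1}$, which on the generators is the shift $b\mapsto b+\lambda/\epsilon$, $b^*\mapsto b^*+\lambda/\epsilon$ (so $S^{-1}(b^*b)=(b^*+\lambda/\epsilon)(b+\lambda/\epsilon)$). After collecting terms, the coefficients of $b$ and $b^*$ collapse using $\mu-(\sigma_--\sigma_+)/2=i\epsilon$ (resp. its conjugate), and the scalar part telescopes — using $(1-e^{-\mu\tau})(1-e^{-\overline{\mu}\tau})=e^{-(\sigma_--\sigma_+)\tau}(1-e^{\mu\tau})(1-e^{\overline{\mu}\tau})$ — to $\frac{\sigma_+}{\sigma_--\sigma_+}(1-e^{-(\sigma_--\sigma_+)\tau})+\frac{\lambda^2}{|\mu|^2}e^{-(\sigma_--\sigma_+)\tau}(1-e^{\mu\tau})(1-e^{\overline{\mu}\tau})$. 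The second term $e^{\tau L^*_{0,\sigma}}(b^*b)$ in (\ref{cL^t(b^*b)+s}) is the $\lambda=0$ specialisation, namely $e^{-(\sigma_--\sigma_+)\tau}b^*b+\frac{\sigma_+}{\sigma_--\sigma_+}(1-e^{-(\sigma_--\sigma_+)\tau})$, with no $b$ or $b^*$ contribution. Forming $p\times(\text{first term})+(1-p)\times(\text{second term})$ then reproduces (\ref{action on b^*b}).

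I expect the only real difficulty to be the bookkeeping: the scalar ($\idty$) component must be tracked carefully through the chain $S$, $e^{\tau L^*_{\lambda,\sigma}}$, $S^{-1}$, since it receives contributions from $\delta(\tau)$, from the constants generated by the $b$- and $b^*$-flows, and from both shifts, and it collapses to the stated closed form only after invoking the identities listed above. A variant that avoids carrying $S$ through the computation is to observe that $S^{-1}L^*_{\lambda,\sigma}S=L^*_{0,\sigma}+i\lambda[\,b^*+b\,,\,\cdot\,]$, which still leaves $V$ invariant, and to integrate this conjugated generator directly on $V$; this leads to the same final computation.
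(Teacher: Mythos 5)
Your proposal is correct and follows essentially the same route as the paper's proof: starting from (\ref{cL^t(b^*b)+s}), one integrates the linear (quasi-free) flow $e^{\tau L^{*}_{\lambda,\sigma}}$ on the invariant span of $\{\idty,b,b^{*},b^{*}b\}$ — the paper does this by solving the ODE system for $\gamma_{\lambda,\tau}(b)$, $\gamma_{\lambda,\tau}(b^{*})$ and $\gamma_{\lambda,\tau}\bigl((b^{*}-\lambda/\epsilon)(b-\lambda/\epsilon)\bigr)$, which is your triangular $4\times4$ system written for $S(b^{*}b)$ instead of $b^{*}b$ — and then applies the shift $S^{-1}$ and the $\lambda=0$ specialisation before forming the $p,\,1-p$ combination. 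Your generator formulas, the solved flows, the identity $(1-e^{-\mu\tau})(1-e^{-\overline{\mu}\tau})=e^{-(\sigma_--\sigma_+)\tau}(1-e^{\mu\tau})(1-e^{\overline{\mu}\tau})$, and the conjugation remark $S^{-1}L^{*}_{\lambda,\sigma}S=L^{*}_{0,\sigma}+i\lambda[b+b^{*},\,\cdot\,]$ all check out against the paper's computation.
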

%%%%%%%%%%%%%%%%%%%%%%%%%%%%%%%%%%%%%%%%%%%%% Proof %%%%%%%%%%%%%%%%%%%%%%%%%%%%%%%%%%%%%%%%%%%%%%%%%%%%%%%%%%%%%%%%%
\begin{proof} We start with the first term in the right-hand side of (\ref{cL^t(b^*b)+s}). Since
$\gamma_{\lambda,\tau}(A)=
e^{\tau L_{\lambda,\sigma}^{*}}(A)$ (\ref{Dyn-C}), one can calculate $\gamma_{\lambda,\tau}(S(b^*b))$ by taking into
account (\ref{L_lambda(B)}). Then
\begin{align*}
&L_{\lambda,\sigma}^{*}((b^*-{\lambda}/{\epsilon})(b-{\lambda}/{\epsilon}))=i\lambda b-i\lambda b^*-(\sigma_--\sigma_+)
(b^*-{\lambda}/{\epsilon})(b-{\lambda}/{\epsilon})+\sigma_+
\end{align*}
and
\begin{align*}
L_{\lambda,\sigma}^{*}(b^*)=-\overline{\mu}b^*+\frac{\lambda(\sigma_--\sigma_+)}{2\epsilon} \ , \
L_{\lambda,\sigma}^{*}(b)=-\mu b+\frac{ \lambda(\sigma_--\sigma_+)}{2\epsilon} \ .
\end{align*}
Therefore, we obtain for mapping $\gamma_{\lambda,\tau}(\cdot)$ the following system of differential equations:
\begin{align*}
&\frac{d\gamma_{\lambda,\tau}((b^*-{\lambda}/{\epsilon})(b-{\lambda}/{\epsilon})}{d\tau}\\
&=
-(\sigma_--\sigma_+)\gamma_{\lambda,\tau}
((b^*-{\lambda}/{\epsilon})(b-{\lambda}/{\epsilon}))+i\lambda\gamma_{\lambda,\tau}(b)-i\lambda\gamma_{\lambda,\tau}
(b^*)+\sigma_+\\
&\frac{d\gamma_{\lambda,\tau}(b)}{d\tau}=-\mu\gamma_{\lambda,\tau}(b)+\frac{\lambda(\sigma_--\sigma_+)}{2\epsilon}\\
&\frac{d\gamma_{\lambda,\tau}(b^*)}{d\tau}=-\overline{\mu}\gamma_{\lambda,\tau}(b^*)+\frac{\lambda(\sigma_--\sigma_+)}
{2\epsilon}.
\end{align*}
The solution of this system is
\begin{align}
&\gamma_{\lambda,\tau}(b^*)=e^{-\overline{\mu}\tau}b^*+\frac{\lambda}{\epsilon}(1-e^{-\overline{\mu}\tau})+
i \ \frac{\lambda}{\overline{\mu}}(1-e^{-\overline{\mu}\tau}) \ ,
\label{gamma_b^*}\\
&\gamma_{\lambda,\tau}(b)= e^{-\mu\tau}b+\frac{\lambda}{\epsilon}(1-e^{-{\mu}\tau})
-i \ \frac{\lambda}{{\mu}}(1-e^{-{\mu}\tau}) \ , \label{gamma_b}\\
&\gamma_{\lambda,\tau}((b^*-{\lambda}/{\epsilon})(b-{\lambda}/{\epsilon}))=e^{-(\sigma_--\sigma_+)\tau}b^*b+
(\frac{i\lambda}{\mu}(1-e^{\mu\tau})-\frac{\lambda}{\epsilon})e^{-(\sigma_--\sigma_+)\tau}b^*\label{gamma_N_lambda}
\nonumber\\
&+(-\frac{i\lambda}{\overline{\mu}}(1-e^{\overline{\mu}\tau})-\frac{\lambda}{\epsilon})e^{-(\sigma_--\sigma_+)\tau}b+
\frac{\lambda^2}{|\mu|^2}(1-e^{-(\sigma_--\sigma_+)\tau})\nonumber\\
&+\frac{\sigma_+}{\sigma_--\sigma_+}(1-e^{-(\sigma_--\sigma_+)
\tau})+\frac{\lambda^2}{\epsilon^2}e^{-(\sigma_--\sigma_+)\tau}\nonumber\\
&-\frac{\lambda^2(\sigma_--\sigma_+)\sin\epsilon\tau}{\epsilon|\mu|^2}e^{-(\sigma_{-} - \sigma_{+})\tau/2} \ .
\end{align}
Making the shift transformation (\ref{Vn}) of $\gamma_{\lambda,\tau}((b^*-{\lambda}/{\epsilon})(b-{\lambda}/{\epsilon}))$
and calculating the second term in (\ref{cL^t(b^*b)+s}) by setting $\lambda=0$, we obtain (\ref{action on b^*b}).
\end{proof}
%%%%%%%%%%%%%%%%%%%%%%%%%%%%%%%%%%%%%%%%%%%%%%%%%%%%%%%%%%%%%%%%%%%%%%%%%%%%%%%%%%%%%%%%%%%%%%%%%%%%%%%%%%%%%%%%%%%%%%%
If one plugs in (\ref{cL-dual-sigma1}) and in (\ref{L_lambda(B)}),(\ref{L_0_lambda(B)}) operators $A=b^*$ or $=b$,
then (\ref{gamma_b^*}), (\ref{gamma_b}) yield for $k\geq1$
\begin{align}\label{action on b^*}
(\cL_{\sigma}^{*})^{k}(b^*)=e^{-k\bar{\mu}\tau}b^*+p\frac{i\lambda}{\bar{\mu}}(1-e^{-k\bar{\mu}\tau}) \ ,
\end{align}
\begin{align}\label{action on b}
(\cL_{\sigma}^{*})^{k}(b)=e^{-k\mu\tau}b-p\frac{i\lambda}{\mu}(1-e^{-k\mu\tau}) \ .
\end{align}
Since $\cL_{\sigma}^{*}(\cdot)\mid_{\sigma_{-} = \sigma_{+} =0} =\cL^{*}(\cdot)$, formulae (\ref{action on b^*}),
(\ref{action on b}) coincide for $\sigma_{-} = \sigma_{+} =0$ with (\ref{cL(b^*-n)}), (\ref{cL(b-n)}) for the ideal
cavity.

Notice also that (\ref{cL-dual-sigma1}),(\ref{cL-dual-sigma2}) and (\ref{L_lambda(B)}),(\ref{L_0_lambda(B)}) imply
\begin{equation}\label{action on I}
(\cL_{\sigma}^{*})(\idty) = 0  \ .
\end{equation}

%%%%%%%%%%%%%%%%%%%%%%%%%%%%%%%%%%%%%%%%%%%%%%%% Proof %%%%%%%%%%%%%%%%%%%%%%%%%%%%%%%%%%%%%%%%%%%%%%%%%%%%%%%%%%%%%%%%
\begin{proof}(of Theorem \ref{number of photons-sigma})
To this end, we construct first a $4\times4$ matrix $\widehat{\cL}_{\sigma}^{*}$ acting on the complex linear space
spanned by the operator-valued vectors $(b^*b,0,0,0)$, $(0,b^*,0,0)$, $(0,0,b,0)$, $(0,0,0,\idty))$, according to the
formulae (\ref{action on b^*b}), (\ref{action on b^*}), (\ref{action on b}) and (\ref{action on I}), for $k=1$.
Then diagonalisation of $\widehat{\cL}_{\sigma}^{*}$ allows to calculate powers $(\widehat{\cL}_{\sigma}^{*})^n$ and to
find explicit expressions for the $n$-step mapping
\begin{align}
&(\cL_{\sigma}^{*})^n(b^*b)=e^{-n(\sigma_--\sigma_+)\tau}b^*b+p\frac{i\lambda}{\mu}(e^{-n(\sigma_--\sigma_+)\tau}-
e^{-n\bar{\mu}\tau})b^*\label{n-th
power on b^*b}\\
&-p\frac{i\lambda} {\bar{\mu}}(e^{-n(\sigma_--\sigma_+)\tau}-e^{-n\mu\tau})b\nonumber\\
&+p\frac{\lambda^2}{|\mu|^2}e^{-(\sigma_--\sigma_+)\tau}(1-e^{\mu\tau})(1-e^{\bar{\mu}\tau})
\frac{1-e^{-n(\sigma_--\sigma_+)\tau}}
{1-e^{-(\sigma_--\sigma_+)\tau}}\nonumber\\
&-p^2\frac{2\lambda^2}{|\mu|^2}\frac{1-e^{-n(\sigma_--\sigma_+)\tau}}
{1-e^{-(\sigma_--\sigma_+)\tau}}(1-e^{-\frac{\sigma_--\sigma_+}{2}
\tau}\cos\epsilon\tau)\nonumber\\
&+p^2\frac{2\lambda^2}{|\mu|^2}(1-e^{-n(\sigma_{-} - \sigma_{+})\tau/2}\cos n\epsilon\tau)+\frac{\sigma_+}
{\sigma_--\sigma_+}(1-e^{-n(\sigma_--\sigma_+)\tau})\nonumber.
\end{align}
Note that (\ref{n-th power on b^*b}) reduces to (\ref{action on b^*b}) for $n=1$. If $\tau=0$,
then (\ref{n-th power on b^*b}) yields $b^*b$ for any $n$. By (\ref{n-th power on b^*b}) we obtain
for a \textit{gauge-invariant} initial state the mean-value of the photon number (\ref{N(t)_s}) in the open cavity
at $t=n\tau$:
\begin{align}\label{mean-value}
&N_\sigma(n\tau) := \Tr_\cC(\rho_C(\cL_{\sigma}^{*})^n(b^*b))=e^{-n(\sigma_--\sigma_+)\tau}N_\sigma(0)\\
&+p(1-p)\frac{2\lambda^2}{|\mu|^2}(1-e^{-(\sigma_{-} - \sigma_{+})\tau/2}\cos \epsilon\tau)
\frac{1-e^{-n(\sigma_--\sigma_+)\tau}}{1-e^{-(\sigma_--\sigma_+)\tau}}\nonumber\\
&+p^2\frac{2\lambda^2}{|\mu|^2}(1-e^{-n(\sigma_{-} - \sigma_{+})\tau/2}\cos n\epsilon\tau)
+\frac{\sigma_+}{\sigma_--\sigma_+}(1-e^{-n(\sigma_--\sigma_+)\tau})\nonumber \ .
\end{align}
Note that (\ref{mean-value}) coincides with (\ref{mean-valueBIS}) for $t=n\tau$.
Finally, by virtue of (\ref{n-th power on b^*b}) and (\ref{mean-value}) one gets for the $w^*$-limit (see Appendix A.3)
\begin{align}\label{w-lim-sigma}
&w^*-\lim_{n\rightarrow\infty}(\cL_{\sigma}^{*})^n(b^*b) \\
&= p(1-p) \frac{2\lambda^2}{|\mu|^2}
\frac{1 - e^{-(\sigma_--\sigma_+)\tau/2} \cos\epsilon\tau}{1-e^{-(\sigma_--\sigma_+)\tau}}
+ p (2p-1)\frac{\lambda^2}{|\mu|^2} + \frac{\sigma_+}{\sigma_--\sigma_+} \ . \nonumber
\end{align}
Therefore, $\lim_{n\rightarrow\infty}\omega_{\mathcal{C},\sigma}^{n\tau}(b^*b)=
\lim_{n\rightarrow\infty}\Tr_\cC(\rho_C(\cL_{\sigma}^{*})^n(b^*b))$ and (\ref{w-lim-sigma}) yield
the proof of (\ref{lim-photons-number-sigma}).
\end{proof}

Notice that in the limit of the ideal cavity: $\sigma_{-}\rightarrow +0, \, \sigma_{-} > \sigma_{+} \geq 0$,
(\ref{mean-value}) gives $\lim_{\sigma_{-}\rightarrow +0} N_\sigma(t) = N(t)$, which is
(\ref{number of photons_Hamiltonian}). For other limiting cases see Remark \ref{instructive cases}.

%%%%%%%%%%%%%%%%%%%%%%%%%%%%%%%%%%%%%%%%%%%%%%%%%%%%%% Section %%%%%%%%%%%%%%%%%%%%%%%%%%%%%%%%%%%%%%%%%%%%%%%%%%%%%%%%%%
\section{Energy and Entropy Variations}\label{EEP}
\subsection{Energy variation in the ideal cavity}\label{EEP_0}
Since the time-dependent interaction in (\ref{W-int}) is piecewise constant, the system is {autonomous}
on each interval $[(n-1)\tau, n\tau)$. Therefore, there is no
variation of energy on this interval, but it may \textit{jump}, when a new atom enters into the cavity.
Note that although the \textit{total} energy corresponding to the infinite system (\ref{Ham-Model}) is undefined
its variation is well-defined \cite{BJM06}-\cite{BJM10}.

The times when the $n$-th atom is actually traveling in the cavity are of the
form  $t = n(t)\tau + \nu(t)$, with $n(t)=n-1$ and $\nu(t)\in[0, \tau)$, see (\ref{t}).
To calculate the energy variation we compare two total energy expectations: for the moment
$t_{n}=(n-1)\tau + \nu(t_{n})$, when the $n$-th atom is present in the cavity, and for the moment
$t_{n-1}=(n-2)\tau + \nu(t_{n-1})$, when the $(n-1)$-th atom was in the cavity.
Then by (\ref{Ham-Model}), (\ref{Ham-n}) and by (\ref{Sol-Liouv-Eq}) we obtain
for the total energy variation in the system (\ref{Ham-Model}) between two moments $t_{n-1}$ and $t_{n}$
the following expression:
\begin{align}\label{DEn-1,n1}
&\Delta\mathcal{E}(t_{n},t_{n-1}):= \omega^{t_{n}}_{\mathcal{S}}(H(t_{n})) -
\omega^{t_{n-1}}_{\mathcal{S}}(H(t_{n-1})) \\
&=\Tr(e^{-i \nu(t_{n})H_{n}}\rho_{\mathcal{S}}((n-1)\tau)e^{-i \nu(t_{n})H_{n}} H_{n})
\nonumber \\
&- \, \Tr
(e^{-i \nu(t_{n-1})H_{n-1}}\rho_{\mathcal{S}}((n-2)\tau)e^{i \nu(t_{n-1})H_{n-1}}
H_{n-1}) \nonumber \\
&=\Tr(\rho_{\mathcal{S}}((n-1)\tau) H_{n}) \nonumber \\
&- \, \Tr(e^{-i \tau H_{n-1}}\rho_{\mathcal{S}}((n-2)\tau)
e^{i \tau H_{n-1}} H_{n-1})
\nonumber \\
&=\Tr(T_{(n-1)\tau,0}(\rho_{\mathcal{C}} \otimes \rho_{\mathcal{A}}) [H_{n} - H_{n-1}]) \nonumber \,.
\end{align}
Here we used that the system (\ref{Ham-Model}), (\ref{Ham-n}) is piecewise autonomous with
$H(t+0) = H(t)$, and the state $\rho_{\mathcal{S}}(t)$ is $w^\ast$-time-continuous (Appendix A.2).

Recall that by duality (\ref{S-state-evol-adj}) we have
\begin{align}\label{PI}
&{\rm{Tr}}\, (T_{(n-1)\tau,0}(\rho_{\mathcal{C}} \otimes \rho_{\mathcal{A}})[H_{n} - H_{n-1}])\\
&={\rm{Tr}}\, (\rho_{\mathcal{C}} \otimes \rho_{\mathcal{A}} \ T_{(n-1)\tau,0}^{\ast}(H_{n} - H_{n-1})) \nonumber
 \ .
\end{align}
Since (\ref{Ham-n}) implies
\begin{equation}\label{H-n-H-n-1}
H_{n} - H_{n-1} = \lambda \, (b^*+b)\otimes (\eta_n -\eta_{n-1})\ ,
\end{equation}
by (\ref{DEn-1,n1}), (\ref{PI}) and by $[H_{k'}, \eta_k] = 0$ we obtain
\begin{align}\label{DEn-1,n2}
&\Delta\mathcal{E}(t_{n},t_{n-1}) \\
&= \Tr\{\rho_{\mathcal{C}}\otimes\rho_{\mathcal{A}} \ T_{(n-1)\tau,0}^{\ast}
(\lambda \,(b^*+b)\otimes \idty) [\idty\otimes (\eta_n -\eta_{n-1})]\} \nonumber
\end{align}
%%%%%%%%%%%%%%%%%%%%%%%%%%%%%%%%%%%%%%%%%%%%%%%%%%%%% Lemma %%%%%%%%%%%%%%%%%%%%%%%%%%%%%%%%%%%%%%%%%%%%%%%%%%%%%%%%%%%%%%
\begin{lemma}\label{n-dual-on-b} For any $n\geq 1$ one gets:
\begin{align}\label{k-iter}
&T_{n\tau,0}^{\ast} (b^* \otimes \idty) = e^{n i \tau \epsilon} b^* \otimes \idty -
\frac{\lambda}{\epsilon} (1- e^{i \tau \epsilon})\sum_{k=1}^{n} e^{(n-k) i \tau \epsilon} \idty \otimes \eta_k  \ , \\
&T_{n\tau,0}^{\ast} (b \otimes \idty) = e^{- n i \tau \epsilon} b\otimes \idty -
\frac{\lambda}{\epsilon} (1- e^{- i \tau \epsilon})\sum_{k=1}^{n} e^{-(n-k) i \tau \epsilon} \idty \otimes \eta_k
\label{k-iter-b}\ .
\end{align}
\end{lemma}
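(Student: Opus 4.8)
The plan is to prove both identities (\ref{k-iter}) and (\ref{k-iter-b}) simultaneously by induction on $n$, with the one-step computation (\ref{b-evol}) as the engine. Recall that $T_{n\tau,0}^{\ast}$ is the composition $e^{\tau L_1^{\ast}}\circ\cdots\circ e^{\tau L_n^{\ast}}$ read right-to-left (so $e^{\tau L_n^{\ast}}$ is applied first), where $e^{\tau L_k^{\ast}}(\cdot)=e^{i\tau H_k}(\cdot)e^{-i\tau H_k}$, see (\ref{T-adj}), (\ref{S-state-evol-adj}).

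\emph{Step 1: the single dual step.} First I would record the action of $e^{\tau L_k^{\ast}}$ on the cavity generators. Conjugating $H_k$ by the unitary shift $\widehat S_k$ of (\ref{Vn}) turns it into the diagonal Hamiltonian $\widehat H_k$ of (\ref{Diagonalized Hamiltonian}), whose cavity part is just $\epsilon\,b^*b$ and whose atomic part commutes with every cavity operator and with $\idty\otimes\eta_k$. Using $\widehat S_k(b\otimes\idty)=b\otimes\idty-\idty\otimes(\lambda/\epsilon)\eta_k$ from (\ref{Vn-shift}), together with $e^{i\tau\epsilon b^*b}b\,e^{-i\tau\epsilon b^*b}=e^{-i\tau\epsilon}b$ and $\widehat S_k(\idty\otimes\eta_k)=\idty\otimes\eta_k$, one gets
\[
e^{\tau L_k^{\ast}}(b\otimes\idty)=e^{-i\tau\epsilon}\,b\otimes\idty-\frac{\lambda}{\epsilon}\,(1-e^{-i\tau\epsilon})\,\idty\otimes\eta_k ,
\]
and likewise $e^{\tau L_k^{\ast}}(b^*\otimes\idty)=e^{i\tau\epsilon}\,b^*\otimes\idty-\frac{\lambda}{\epsilon}(1-e^{i\tau\epsilon})\,\idty\otimes\eta_k$; this is exactly the $k$-indexed version of (\ref{b-evol}) at $t=\tau$. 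Moreover $e^{\tau L_k^{\ast}}(\idty\otimes\eta_j)=\idty\otimes\eta_j$ for every $j$, since $[H_k,\eta_j]=0$.

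\emph{Step 2: induction.} For the inductive step I would use the cocycle factorization $T_{n\tau,0}=T_{n\tau,(n-1)\tau}\,T_{(n-1)\tau,0}$ from (\ref{cocycle}), (\ref{One-step-T}), which dualises to $T_{n\tau,0}^{\ast}=T_{(n-1)\tau,0}^{\ast}\circ e^{\tau L_n^{\ast}}$. Applying Step 1 with $k=n$, then $T_{(n-1)\tau,0}^{\ast}$, and using that $T_{(n-1)\tau,0}^{\ast}$ fixes $\idty\otimes\eta_n$ (it is a composition of maps $e^{i\tau H_k}\cdot e^{-i\tau H_k}$ with $k\le n-1$, each commuting with $\idty\otimes\eta_n$) together with the induction hypothesis for $T_{(n-1)\tau,0}^{\ast}(b\otimes\idty)$, I obtain
\[
T_{n\tau,0}^{\ast}(b\otimes\idty)=e^{-ni\tau\epsilon}\,b\otimes\idty-\frac{\lambda}{\epsilon}(1-e^{-i\tau\epsilon})\Big(e^{-i\tau\epsilon}\sum_{k=1}^{n-1}e^{-(n-1-k)i\tau\epsilon}\idty\otimes\eta_k+\idty\otimes\eta_n\Big),
\]
and the parenthesised sum telescopes to $\sum_{k=1}^{n}e^{-(n-k)i\tau\epsilon}\idty\otimes\eta_k$, giving (\ref{k-iter-b}). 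The computation for $b^*\otimes\idty$ is identical with $\epsilon$ replaced by $-\epsilon$ in all exponents, yielding (\ref{k-iter}); the base case $n=1$ is Step 1 with $k=1$, i.e. (\ref{b-evol}).

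There is no genuine obstacle here; the points that require care are purely bookkeeping: the dual of a composition reverses the order of the factors, the ``frozen'' atomic operators $\idty\otimes\eta_j$ must be verified to survive each earlier one-step evolution unchanged (this is where $[H_{k'},\eta_k]=0$ is used), and the index shift in the geometric sum must be matched against the claimed formula.
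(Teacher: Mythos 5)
Your proposal is correct and follows essentially the same route as the paper: establish the one-step formula $e^{\tau L_k^{\ast}}(b^{\#}\otimes\idty)$ and then iterate, using that the earlier steps fix $\idty\otimes\eta_n$ because $[H_{k'},\eta_k]=0$. The only (immaterial) difference is that the paper obtains the one-step formula by solving the Heisenberg ODE $\partial_s B_k^{\ast}(s)=i[H_k,B_k^{\ast}(s)]$ directly, whereas you derive it by conjugating with the diagonalizing shift $\widehat S_k$ — a computation the paper itself performs elsewhere, so the two are interchangeable.
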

%%%%%%%%%%%%%%%%%%%%%%%%%%%%%%%%%%%%%%%%%%%%%%%%%%%%%%%%%%%%%%%%%%%%%%%%%%%%%%%%%%%%%%%%%%%%%%%%%%%%%%%%%%%%%%%%%%%%%%%%%%
\begin{proof} Let us define
\begin{equation}\label{B-k}
B_{k}^*(\tau) : = T_{k}^{\ast}(b^* \otimes \idty)= e^{i \tau H_{k}} (b^* \otimes \idty)e^{- i \tau H_{k}} \ ,
\  k \geq 1 \ .
\end{equation}
Then by (\ref{Ham-n}) and (\ref{b-evol}), (\ref{T-adj}) the operator (\ref{B-k}) is solution of equation
\begin{equation*}
\partial_{s}B_{k}^*(s)= i [H_{k}, B_{k}^*(s)] = i \epsilon B_{k}^*(s) + \lambda \idty \otimes \eta_k \ ,
\ B_{k}^*(0) = b^* \otimes \idty \ ,
\end{equation*}
which has the following explicit form:
\begin{equation}\label{B*-k-sol}
B_{k}^*(\tau) = e^{i \tau \epsilon} (b^* \otimes \idty) - \frac{\lambda}{\epsilon}
(1- e^{i \tau \epsilon}) \idty \otimes \eta_k \ .
\end{equation}
Similarly one obtains
\begin{equation}\label{B-k-sol}
B_{k}(\tau) = e^{- i \tau \epsilon} (b \otimes \idty) - \frac{\lambda}{\epsilon}
(1- e^{- i \tau \epsilon}) \idty \otimes \eta_k \ .
\end{equation}
Note that by iteration of (\ref{B*-k-sol}) for $k=1,2$ one gets
\begin{align}\label{2-iter}
T_{2\tau,0}^{\ast}(b^* \otimes \idty) &= e^{2 i \tau \epsilon} (b^* \otimes \idty) -
\frac{\lambda}{\epsilon} (1- e^{i \tau \epsilon})e^{i \tau \epsilon} \idty \otimes \eta_1   \\
&- \frac{\lambda}{\epsilon} (1- e^{i \tau \epsilon}) \idty \otimes \eta_2  \nonumber \ .
\end{align}
Proceeding with iteration of (\ref{2-iter}) we obtain (\ref{k-iter}). Using (\ref{B-k-sol}) one
proves in a similar way (\ref{k-iter-b}).
\end{proof}
%%%%%%%%%%%%%%%%%%%%%%%%%%%%%%%%%%%%%%%%%%%%%%%%%%%%%%%%%%%%%%%%%%%%%%%%%%%%%%%%%%%%%%%%%%%%%%%%%%%%%%%%%%%%%%%%%%%%%%%%%%
Recall that in the present paper we suppose that atomic beam is homogeneous (\ref{atom-state}), i.e.
$p = \Tr\{\rho_{\mathcal{C}}\otimes\rho_{\mathcal{A}}
(\idty \otimes \eta_n)\}$ is independent of $n$ probability that the $n$-th atom is in the excited state, and
that the atomic beam is Bernoulli (Remark \ref{Rem-Q-El}):
\begin{equation}\label{p-2p}
\Tr_{{\mathcal{A}}}\{\rho_{\mathcal{A}} (\eta_{n_1} \eta_{n_2})\} = \delta_{{n_1},{n_2}}\, p +
(1-\delta_{{n_1},{n_2}})\, p^2 \ .
\end{equation}
Let the initial cavity state $\rho_{\mathcal{C}}$ be gauge-invariant. Then by Lemma \ref{n-dual-on-b}
and (\ref{p-2p}) one obtains:
\begin{align}\label{1st-term}
&\Tr\{\rho_{\mathcal{C}}\otimes\rho_{\mathcal{A}} \ T_{(n-1)\tau,0}^{\ast}
(\lambda \,(b^*+b)\otimes \idty)[\idty\otimes (\eta_n -\eta_{n-1})]\} \\
&=- \frac{\lambda^2}{\epsilon}  (1- e^{i \tau \epsilon}) \sum_{k=1}^{n-1} e^{(n-k-1) i \tau \epsilon}
\Tr_{{\mathcal{A}}}  \rho_\cA \eta_k (\eta_n -\eta_{n-1})                                      \nonumber \\
& - \frac{\lambda^2}{\epsilon}  (1- e^{-i \tau \epsilon}) \sum_{k=1}^{n-1} e^{-(n-k-1) i \tau \epsilon}
\Tr_{{\mathcal{A}}}  \rho_\cA\eta_k (\eta_n -\eta_{n-1}) \nonumber \\
&= p(1-p) \ \frac{2\lambda^2}{\epsilon} \  (1-\cos\tau\epsilon) \nonumber  \ .
\end{align}
Hence, formulae (\ref{DEn-1,n2}), (\ref{1st-term}) prove for the energy variation on the interval $(t_{n-1},t_{n})$
the following statement.
%%%%%%%%%%%%%%%%%%%%%%%%%%%%%%%%%%%%%%%%%%%%%%%%%% Theorem %%%%%%%%%%%%%%%%%%%%%%%%%%%%%%%%%%%%%%%%%%%%%%%%%%%%
\begin{theorem}\label{ThDEn-1,n1}
For the case of the ideal cavity the total energy variation {\rm{(}}\ref{DEn-1,n1}{\rm{)}} between two moments
$t_{n-1}$ and $t_{n}$, where $n\geq 1$ is
\begin{equation}\label{DEn-1,n3}
\Delta\mathcal{E}(t_{n},t_{n-1}) = p(1-p) \ \frac{2\lambda^2}{\epsilon}(1-\cos\tau\epsilon)\ ,
\end{equation}
i.e. for the variation between $t_{0}= \nu(t_{0})$ and $t_{n} \geq t_{0}$ we obtain:
\begin{equation}\label{DEn-1,n1-fin}
\Delta\mathcal{E}(t_{n},t_{0}) = \sum_{k=1}^{n} \Delta\mathcal{E}(t_{k},t_{k-1}) =
(n-1)\ p(1-p) \ \frac{2\lambda^2}{\epsilon}(1-\cos\tau\epsilon) \ .
\end{equation}
\end{theorem}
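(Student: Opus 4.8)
The plan is to combine three ingredients that are already in place: the identity (\ref{DEn-1,n2}) for the single-step energy jump, the closed form of the dual propagator from Lemma \ref{n-dual-on-b}, and the Bernoulli two-point function (\ref{p-2p}) of the beam.

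First, starting from (\ref{DEn-1,n2}) — which expresses $\Delta\mathcal{E}(t_{n},t_{n-1})$ as the expectation of $T_{(n-1)\tau,0}^{\ast}\bigl(\lambda(b^{*}+b)\otimes\idty\bigr)\,\bigl[\idty\otimes(\eta_{n}-\eta_{n-1})\bigr]$ in the product state $\rho_{\mathcal{C}}\otimes\rho_{\mathcal{A}}$ — I would substitute the formulas (\ref{k-iter})--(\ref{k-iter-b}) of Lemma \ref{n-dual-on-b}, writing $T_{(n-1)\tau,0}^{\ast}\bigl((b^{*}+b)\otimes\idty\bigr)$ as an explicit linear combination of $b^{*}\otimes\idty$, $b\otimes\idty$, and the $\idty\otimes\eta_{k}$ for $1\le k\le n-1$, with pure-phase coefficients. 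Upon taking the expectation in $\rho_{\mathcal{C}}\otimes\rho_{\mathcal{A}}$, the terms proportional to $b^{*}\otimes\idty$ and $b\otimes\idty$ vanish (by gauge invariance of $\rho_{\mathcal{C}}$, and in fact already by homogeneity of the beam, since $\Tr_{\mathcal{A}}(\rho_{\mathcal{A}}\eta_{n})=\Tr_{\mathcal{A}}(\rho_{\mathcal{A}}\eta_{n-1})=p$), leaving a finite sum whose $k$-th summand carries the atomic correlation $\Tr_{\mathcal{A}}\{\rho_{\mathcal{A}}\,\eta_{k}(\eta_{n}-\eta_{n-1})\}$.

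Second, I would evaluate these correlations via (\ref{p-2p}): for $k<n-1$ one gets $p^{2}-p^{2}=0$, while for $k=n-1$, using $\eta_{n-1}^{2}=\eta_{n-1}$, one gets $p^{2}-p=-p(1-p)$. Hence only the $k=n-1$ term survives; its phase coefficient is $e^{0}=1$, so collecting the $b^{*}$- and $b$-pieces gives $\frac{\lambda^{2}}{\epsilon}\,p(1-p)\,(2-e^{i\tau\epsilon}-e^{-i\tau\epsilon})=p(1-p)\,\frac{2\lambda^{2}}{\epsilon}(1-\cos\tau\epsilon)$, which is (\ref{DEn-1,n3}); this is precisely the computation recorded as (\ref{1st-term}). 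For the cumulative statement (\ref{DEn-1,n1-fin}) I would use that the definition of $\Delta\mathcal{E}$ telescopes, so $\Delta\mathcal{E}(t_{n},t_{0})=\sum_{k=1}^{n}\Delta\mathcal{E}(t_{k},t_{k-1})$; the term $k=1$ is special, because $t_{0}=\nu(t_{0})$ and $t_{1}$ both lie in $[0,\tau)$, where the system is autonomous with Hamiltonian $H_{1}$, whence $\Delta\mathcal{E}(t_{1},t_{0})=0$, and each of the remaining $n-1$ terms equals $p(1-p)\frac{2\lambda^{2}}{\epsilon}(1-\cos\tau\epsilon)$ by the first part.

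There is no genuine obstacle here: the analytic input (the autonomy and $w^{*}$-continuity behind (\ref{DEn-1,n2}), and the propagator identities of Lemma \ref{n-dual-on-b}) is already established. The points demanding care are bookkeeping ones: tracking which terms are annihilated by gauge invariance versus by homogeneity, evaluating the two-point atomic function so that the finite sum collapses to its single $k=n-1$ term, and not counting the vanishing boundary term $\Delta\mathcal{E}(t_{1},t_{0})$ twice — this is what produces the prefactor $n-1$ rather than $n$ in (\ref{DEn-1,n1-fin}).
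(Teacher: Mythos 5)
Your proposal is correct and follows essentially the same route as the paper: it derives the single-step variation by inserting the propagator formulas of Lemma \ref{n-dual-on-b} into (\ref{DEn-1,n2}), kills the $b,b^{*}$ terms by gauge invariance/homogeneity, and collapses the sum via the Bernoulli correlation (\ref{p-2p}) to the $k=n-1$ term, exactly as in (\ref{1st-term}). Your explicit observation that $\Delta\mathcal{E}(t_{1},t_{0})=0$ by autonomy on $[0,\tau)$, which accounts for the prefactor $n-1$ rather than $n$ in (\ref{DEn-1,n1-fin}), is a correct reading of a point the paper leaves implicit.
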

%%%%%%%%%%%%%%%%%%%%%%%%%%%%%%%%%%%%%%%%%%%%%%%%%%%%%%% Remark %%%%%%%%%%%%%%%%%%%%%%%%%%%%%%%%%%%%%%%%%%%%%%%%%%%%%%%
\begin{remark}\label{RemEn-1,n1}
The total energy variation (\ref{DEn-1,n1}), when the $n$-th atom is traveling
through the cavity between the moments $t^{\prime} = (n-1) \tau$ and $t^{\prime \prime}=n \tau - 0$, can be written as
\begin{equation}\label{EnVar1}
\Delta\mathcal{E}(t^{\prime \prime},t^{\prime}) =
\omega^{n \tau}_{\mathcal{S}}(H_n) - \omega^{(n-1) \tau}_{\mathcal{S}}(H_n) \ .
\end{equation}
Here again we used that for $t\in[(n-1)\tau, n\tau)$ the Hamiltonian (\ref{Ham-Model}) is piecewise
\textit{constant} and that it has the form (\ref{Ham-n}), as well as that the state $\omega^{t}_{\mathcal{S}}(\cdot)$ is
$w^\ast$-continuous in time (Appendix A.2). Since the system (\ref{Ham-Model}) is autonomous on the interval
$[(n-1)\tau, n\tau)$, one obtains $\Delta\mathcal{E}(n \tau - 0,(n-1) \tau) = 0$. Then (\ref{EnVar1}) implies that
on this interval (in contrast to (\ref{DEn-1,n2})) the variation of the interaction-energy \textit{completely}
compensates the energy variation due to the photon number pumping:
\begin{align}\label{Eq-int-energ-var}
&\omega^{n \tau}_{\mathcal{S}}(\lambda \,(b^*+b)\otimes \eta_n) -
\omega^{(n-1) \tau}_{\mathcal{S}}(\lambda \,(b^*+b)\otimes \eta_{n-1}) \\
&= - [\omega^{n \tau}_{\mathcal{S}}(\epsilon b^*b\otimes \idty) -
\omega^{(n-1) \tau}_{\mathcal{S}}(\epsilon b^*b\otimes \idty)] \ . \nonumber
\end{align}
Note that similar to (\ref{1st-term}) one can check this identity explicitly using Lemma \ref{n-dual-on-b} and
(\ref{p-2p}) applied to the left-hand side of (\ref{Eq-int-energ-var}).
\end{remark}
%%%%%%%%%%%%%%%%%%%%%%%%%%%%%%%%%%%%%%%%%%%%%%%%%%%%%%%%%%% 2 %%%%%%%%%%%%%%%%%%%%%%%%%%%%%%%%%%%%%%%%%%%%%%%%%%%%%%%%%%
\subsection{Energy variation in the open cavity}\label{EEP_n0}
Although for the open cavity the time-dependent generator (\ref{Generator}) is still piecewise \textit{constant}
(\ref{Generator-KL}), the cavity energy  is continuously varying between the moments $\{t = k \tau\}_{k\geq0}$
(when the interaction may to jump (\ref{Ham-Model})) because of the leaking/injection of photons.

Therefore, as above we first concentrate on the elementary variation of the total energy, when the $n$-th atom is
traveling through the cavity between the moments $t^{\prime} = (n-1) \tau$ and $t^{\prime \prime} =n \tau - 0$:
\begin{align}
\Delta \mathcal{E}_{\sigma}(t^{\prime \prime}, t^{\prime}):=
\omega^{n \tau}_{\mathcal{S},\sigma}(H_n) - \omega^{(n-1) \tau}_{\mathcal{S},\sigma}(H_n) \ . \label{EnVar-sigma0}
\end{align}
Here again we used two facts: (1) for $t\in[(n-1)\tau, n\tau)$ the Hamiltonian (\ref{Ham-Model}) of the form (\ref{Ham-n})
is piecewise constant; (2) the state $\omega^{t}_{\mathcal{S},\sigma}(\cdot)$
(\ref{dual-sigma}) is time-continuous (Appendix A.2).

By virtue of (\ref{dual-sigma}) and of (\ref{S-state-evol-adj-sigma}) for operator $A= H_n$ (\ref{Ham-n}), we see
that the problem (\ref{EnVar-sigma0}) reduces to calculation of the following expectations:
\begin{equation}\label{EnVar-sigma1}
\epsilon \ \omega^{k\tau}_{\mathcal{S},\sigma}(b^*b \otimes \idty) \ \ {\rm{and}} \ \
\lambda \ \omega^{s\tau}_{\mathcal{S},\sigma} ((b^*+b)\otimes \eta_k) \ , \ k,s \geq 1 \ .
\end{equation}
The first expectation in (\ref{EnVar-sigma1}) is known due to (\ref{C-state-t-sigma}) and Theorem
\ref{number of photons-sigma}, see (\ref{mean-value}):
\begin{equation}\label{EnVar-sigma2}
\epsilon \ \omega^{k\tau}_{\mathcal{S},\sigma}(b^*b \otimes \idty) = \epsilon \, N_\sigma(k\tau) \ .
\end{equation}
To calculate the second expectation in (\ref{EnVar-sigma1}) we use (\ref{dual-sigma}) for the operator
$A =((b^*+b)\otimes \idty)(\idty \otimes \eta_k)$
and the representation (\ref{S-state-evol-adj-sigma}) for initial gauge-invariant state $\rho_{\mathcal{C}}$ and
for homogeneous atoms state $\rho_{\mathcal{A}}$.
%%%%%%%%%%%%%%%%%%%%%%%%%%%%%%%%%%%%%%%%%%%%%%%%%%%%% Lemma %%%%%%%%%%%%%%%%%%%%%%%%%%%%%%%%%%%%%%%%%%%%%%%%%%%%%%%%%%%
\begin{lemma} \label{n-dual-on-b-sigma}
Let $\sigma_- > \sigma_+ \geq 0$. Then for the mappings $\{(T^\sigma_{n\tau,0})^\ast\}_{n\geq0}$, see
{\rm{(}}\ref{S-state-evol-adj-sigma}{\rm{)}}, one obtains:
\begin{align}\label{T-sigma-n-star}
(T^\sigma_{n\tau,0})^*(b\otimes\idty)&=e^{-n\mu\tau}b\otimes\idty-\frac{\lambda i}{\mu}(1-e^{-\mu\tau})
\sum_{k=1}^ne^{-(n-k)\mu\tau}\idty\otimes \eta_k \ ,
\end{align}
and $(T^\sigma_{n\tau,0})^*(b^*\otimes\idty) = ((T^\sigma_{n\tau,0})^*(b\otimes\idty))^*$.
\end{lemma}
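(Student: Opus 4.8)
The plan is to follow verbatim the structure of the proof of Lemma~\ref{n-dual-on-b}: first establish the one-step identity for $(T^\sigma_k)^\ast(b\otimes\idty)$, and then derive (\ref{T-sigma-n-star}) by iterating along the composition $(T^\sigma_{n\tau,0})^\ast=e^{\tau L^\ast_{\sigma,1}}\cdots e^{\tau L^\ast_{\sigma,n}}$ given by (\ref{S-state-evol-adj-sigma}), exploiting that every one-step map leaves each $\idty\otimes\eta_k$ fixed.

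For the one-step identity I would compute $L^\ast_{\sigma,k}$ on $b\otimes\idty$ directly. The Hamiltonian part is $i[H_k,b\otimes\idty]$, which by (\ref{b-evol}) equals $-i\epsilon\,b\otimes\idty-i\lambda\,\idty\otimes\eta_k$; the dissipative part of the dual generator in (\ref{Generator-KL}) acts only on the cavity tensor factor and, by the CCR, sends $b$ to $-\tfrac12(\sigma_--\sigma_+)\,b$ — this is exactly the computation already recorded in the $\lambda=0$ instances of (\ref{L_lambda(B)}), (\ref{L_0_lambda(B)}) and (\ref{gamma_b}). Hence, with $\mu=i\epsilon+(\sigma_--\sigma_+)/2$,
\[
L^\ast_{\sigma,k}(b\otimes\idty)=-\mu\,b\otimes\idty-i\lambda\,\idty\otimes\eta_k,\qquad L^\ast_{\sigma,k}(\idty\otimes\eta_k)=0,
\]
the second identity holding because the dissipator ignores the atoms and $[H_k,\eta_k]=0$. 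Setting $B_k(s):=e^{sL^\ast_{\sigma,k}}(b\otimes\idty)$ these relations give the linear equation $\partial_s B_k(s)=-\mu B_k(s)-i\lambda\,\idty\otimes\eta_k$ with $B_k(0)=b\otimes\idty$, just as in (\ref{B*-k-sol})--(\ref{B-k-sol}), whose solution is
\[
(T^\sigma_k)^\ast(b\otimes\idty)=B_k(\tau)=e^{-\mu\tau}\,b\otimes\idty-\frac{i\lambda}{\mu}\,(1-e^{-\mu\tau})\,\idty\otimes\eta_k .
\]
The same formula can alternatively be read off from the intertwining $e^{\tau L^\ast_{\sigma,k}}=\widehat S_k^{-1}\circ e^{\tau\widehat L^\ast_{\sigma,k}}\circ\widehat S_k$ of Remark~\ref{dual-operator-sigma-bis}: write $\widehat S_k(b\otimes\idty)=S(b)\otimes\eta_k+b\otimes(I-\eta_k)$ with $S$ from (\ref{V}), apply the block reduction (\ref{reduc1})--(\ref{reduc2}) together with the quasi-free evolutions $\gamma_{\lambda,\tau}(S(b))$ and $\gamma_{0,\tau}(b)$ obtained from (\ref{gamma_b}), and finish with $\widehat S_k^{-1}$.

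To pass to (\ref{T-sigma-n-star}) I would iterate the one-step formula. Since the dissipator of (\ref{Generator-KL}) acts on the cavity only and $[H_j,\eta_k]=0$ for all $j$, every factor $e^{\tau L^\ast_{\sigma,j}}$ fixes $\idty\otimes\eta_k$; applying the one-step identity from the innermost factor outward — precisely as in the passage (\ref{2-iter})$\to$(\ref{k-iter}) — at the $j$-th step the $b\otimes\idty$ coefficient acquires a further factor $e^{-\mu\tau}$, a new term $-\tfrac{i\lambda}{\mu}(1-e^{-\mu\tau})\,\idty\otimes\eta_j$ is produced, and all previously produced $\idty\otimes\eta_k$ terms are merely multiplied by $e^{-\mu\tau}$. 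Collecting the resulting geometric factors gives exactly (\ref{T-sigma-n-star}). Finally, $(T^\sigma_{n\tau,0})^\ast$ is the dual of the completely positive, trace-preserving map $T^\sigma_{n\tau,0}$, hence completely positive and unital, in particular $\ast$-preserving, so $(T^\sigma_{n\tau,0})^\ast(b^\ast\otimes\idty)=\big((T^\sigma_{n\tau,0})^\ast(b\otimes\idty)\big)^\ast$; this is also visible by taking adjoints termwise in (\ref{T-sigma-n-star}).

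The only genuinely delicate point is the first step: justifying the action of the unbounded dual generator $L^\ast_{\sigma,k}$ on $b\otimes\idty$ and the solvability of the associated differential equation on the relevant operator domain. This is the issue addressed in Appendix~A via the reduction of the $\sigma$-dynamics to quasi-free maps on $\mathrm{CCR}(\cH_{\mathcal C})$, which is why the intertwining route of Remark~\ref{dual-operator-sigma-bis} combined with (\ref{gamma_b}) is the safest way to make the one-step identity rigorous; once it is in place the remaining iteration is pure geometric-series bookkeeping, structurally identical to Lemma~\ref{n-dual-on-b}. (The hypothesis $\sigma_->\sigma_+\ge 0$ is not needed for the algebraic identity (\ref{T-sigma-n-star}) itself; it only guarantees $\Re \mu>0$, which is what matters for the $n\to\infty$ asymptotics used elsewhere.)
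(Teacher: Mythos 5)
Your proposal is correct and follows essentially the same route as the paper: the paper proves the lemma by induction on $n$ via $(T^\sigma_{(n+1)\tau,0})^\ast=(T^\sigma_{n\tau,0})^\ast\circ e^{\tau L^\ast_{\sigma,n+1}}$, obtaining the one-step identity $e^{\tau L^\ast_{\sigma,k}}(b\otimes\idty)=e^{-\mu\tau}b\otimes\idty-\tfrac{i\lambda}{\mu}(1-e^{-\mu\tau})\,\idty\otimes\eta_k$ precisely through the intertwining $\widehat S_k^{-1}\circ e^{\tau\widehat L^\ast_{\sigma,k}}\circ\widehat S_k$, the block reduction (\ref{reduc1})--(\ref{reduc2}) and $\gamma_{\lambda,\tau}(b)$ from (\ref{gamma_b}), exactly as in your second derivation. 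Your additional direct computation of $L^\ast_{\sigma,k}$ on the two-dimensional span of $b\otimes\idty$ and $\idty\otimes\eta_k$ is a correct, slightly more elementary way to the same one-step formula, and the remaining bookkeeping coincides with the paper's induction.
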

%%%%%%%%%%%%%%%%%%%%%%%%%%%%%%%%%%%%%%%%%%%%%%%%%%%%% Proof %%%%%%%%%%%%%%%%%%%%%%%%%%%%%%%%%%%%%%%%%%%%%%%%%%%%%%%%%%%%
\begin{proof}
We prove this lemma by induction. Suppose that formula (\ref{T-sigma-n-star}) is true for $(T_{n\tau,0}^\sigma)^*$.
Then we show that it is also valid for $(T^\sigma_{(n+1)\tau,0})^*$.
By virtue of (\ref{S-state-evol-adj-sigma}) and by (\ref{e_L_tilde}),(\ref{hat-L-dual}) for the action of
operator $e^{\tau L_{\sigma, n}^*}$, one gets
\begin{align*}
&(T^\sigma_{n+1})^*(b\otimes\idty)=(T^\sigma_{n})^*(e^{\tau L_{\sigma, n+1}^*}(b\otimes\idty))\\
&=(T^\sigma_{n})^*(\widehat{S}_{n+1}^{-1}(e^{\tau \widehat{L}^*_{\sigma, n+1}}(\widehat{S}_{n+1}
(b\otimes\idty))))\\
&=(T^\sigma_{n})^*(\widehat{S}_{n+1}^{-1}(e^{\tau \widehat{L}^*_{\sigma, n+1}} \Bigl((b-\frac{\lambda}{\epsilon})\otimes
\eta_{n+1}+b\otimes (I-\eta_{n+1})\Bigr))) \ ,
\end{align*}
where we used (\ref{Vn-shift}) in the last line. By (\ref{reduc1}),(\ref{reduc2}) and (\ref{Dyn-C}),(\ref{gamma_b})
combined with the shift $\widehat{S}_{n+1}^{-1}$ (\ref{relationSSnBIS}), we obtain
\begin{align*}
&\widehat{S}_{n+1}^{-1}(e^{\tau\widehat{L}^*_{\sigma, n+1}}((b-\frac{\lambda}{\epsilon})\otimes \eta_{n+1}))=
\widehat{S}_{n+1}^{-1}((\gamma_{\lambda,\tau}(b)-\frac{\lambda}{\epsilon})\otimes \eta_{n+1})\\
&=(e^{-\mu\tau}b-\frac{\lambda i}{\mu}(1-e^{-\mu\tau}))\otimes \eta_{n+1} \ ,\\
&\widehat{S}_{n+1}^{-1}(e^{\tau\widehat{L}^*_{\sigma, n+1}}(b\otimes (I-\eta_{n+1}))) =
\widehat{S}_{n+1}^{-1}(\gamma_{0,\tau}(b)\otimes (I-\eta_{n+1}))\\
&=e^{-\mu\tau}b\otimes(I - \eta_{n+1}) \ .
\end{align*}
Consequently,
\begin{align*}
&(T^\sigma_{n+1})^*(b\otimes\idty))\\
&=(T^\sigma_{n})^*((e^{-\mu\tau}b-\frac{\lambda i}{\mu}(1-e^{-\mu\tau}))\otimes
\eta_{n+1}+e^{-\mu\tau}b\otimes(\idty- \eta_{n+1}))\\
&=(T^\sigma_{n})^*(e^{-\mu\tau}b\otimes\idty-\frac{\lambda i}{\mu}(1-e^{-\mu\tau})\idty\otimes \eta_{n+1})\\
&=e^{-(n+1)\mu\tau}b\otimes\idty-\frac{\lambda i}{\mu}(1-e^{-\mu\tau})\sum_{k=1}^{n+1}
e^{-(n-k+1)\mu\tau}\idty\otimes \eta_k,
\end{align*}
which proves the lemma.
\end{proof}
%%%%%%%%%%%%%%%%%%%%%%%%%%%%%%%%%%%%%%%%%%%%%%%%%%%%%%%%%%%%%%%%%%%%%%%%%%%%%%%%%%%%%%%%%%%%%%%%%%%%%%%%%%%%%%%%%%%%%%%%%%
Recall that $\mu= i\epsilon + (\sigma_{-} - \sigma_{+})/2$. Hence, in the limit $\sigma_{-} \rightarrow +0$
one recovers from this Lemma formulae (\ref{k-iter}) and (\ref{k-iter-b}) for the ideal cavity.
%%%%%%%%%%%%%%%%%%%%%%%%%%%%%%%%%%%%%%%%%%%%%%%%%%%%% Corollary %%%%%%%%%%%%%%%%%%%%%%%%%%%%%%%%%%%%%%%%%%%%%%%%%%%%%%%%%
\begin{corollary}\label{int-energy-n-n-1}
Let initial cavity state $\rho_{\mathcal{C}}$ be gauge-invariant state for homogeneous
state $\rho_{\mathcal{A}}$ of the atomic beam.
Then with help of (\ref{p-2p}) and (\ref{T-sigma-n-star}) one obtains the interaction energy expectations
(\ref{EnVar-sigma1}) corresponding to the difference (\ref{EnVar-sigma0}):
\begin{align}\label{int-energy-n}
% \nonumber to remove numbering (before each equation)
&\lambda \ \omega^{n\tau}_{\mathcal{S},\sigma} ((b^*+b)\otimes \eta_n) =  \\
&-\frac{2\lambda^2\epsilon }{|\mu|^2}\left[p(1-p)(1-e^{-(\sigma_{-} - \sigma_{+})\tau/2}\cos\epsilon\tau)
+ p^{2}(1-e^{-n(\sigma_{-} - \sigma_{+})\tau/2}\cos n\epsilon\tau)\right] \nonumber \\
&+\frac{\lambda^2(\sigma_--\sigma_+)}{|\mu|^2}\left[p(1-p)e^{-(\sigma_{-} - \sigma_{+})\tau/2}\sin\epsilon\tau
+p^{2}e^{-n(\sigma_{-} - \sigma_{+})\tau/2}\sin n\epsilon\tau\right] \ , \nonumber \\
&\lambda \ \omega^{(n-1)\tau}_{\mathcal{S},\sigma} ((b^*+b)\otimes \eta_n) =
- \frac{2\lambda^2\epsilon}{|\mu|^2}p^{2} (1-e^{-(n-1)(\sigma_{-} - \sigma_{+})\tau/2}\cos (n-1)\epsilon\tau) \nonumber \\
&+ \frac{\lambda^2(\sigma_--\sigma_+)}{|\mu|^2} p^{2} e^{-(n-1)(\sigma_{-} - \sigma_{+})\tau/2}\sin (n-1)\epsilon\tau \ .
\label{int-energy-n-1}
\end{align}
\end{corollary}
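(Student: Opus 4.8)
The plan is to express both quantities in (\ref{EnVar-sigma1}) entering the difference (\ref{EnVar-sigma0}) through the adjoint propagator acting on $(b^*+b)\otimes\eta_n$. By duality (\ref{dual-sigma}) and the representation (\ref{S-state-evol-adj-sigma}) one has, for $m=n$ and $m=n-1$, that $\omega^{m\tau}_{\mathcal{S},\sigma}((b^*+b)\otimes\eta_n)=\Tr(\rho_{\mathcal{C}}\otimes\rho_{\mathcal{A}}\,(T^\sigma_{m\tau,0})^*((b^*+b)\otimes\eta_n))$; the right-hand side will then be evaluated using the gauge-invariance of $\rho_{\mathcal{C}}$, which annihilates every term linear in $b$ or $b^*$, together with the Bernoulli relation (\ref{p-2p}).

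First I would treat $m=n-1$. Since the $n$-th atom has not yet interacted with the cavity at times below $(n-1)\tau$, the factor $\idty\otimes\eta_n$ is a spectator for each generator $L^*_{\sigma,k}$ with $1\le k\le n-1$: it commutes with all of them and passes through the tensor structure, so $(T^\sigma_{(n-1)\tau,0})^*((b^*+b)\otimes\eta_n)=\big((T^\sigma_{(n-1)\tau,0})^*((b^*+b)\otimes\idty)\big)\,(\idty\otimes\eta_n)$. Lemma \ref{n-dual-on-b-sigma} supplies $(T^\sigma_{(n-1)\tau,0})^*(b\otimes\idty)$ and, by adjunction, $(T^\sigma_{(n-1)\tau,0})^*(b^*\otimes\idty)$ explicitly, and these have no identity component. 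Taking the trace against $\rho_{\mathcal{C}}\otimes\rho_{\mathcal{A}}$, the $b\otimes\eta_n$ and $b^*\otimes\eta_n$ contributions vanish by gauge-invariance, while each surviving term $\idty\otimes\eta_k\eta_n$ with $1\le k\le n-1<n$ contributes $p^2$ by (\ref{p-2p}). Summing the geometric series $\sum_{k=1}^{n-1}e^{-(n-1-k)\mu\tau}=(1-e^{-(n-1)\mu\tau})/(1-e^{-\mu\tau})$ and separating real and imaginary parts via $\mu=(\sigma_--\sigma_+)/2+i\epsilon$ gives (\ref{int-energy-n-1}).

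For $m=n$ I would factor $(T^\sigma_{n\tau,0})^*=(T^\sigma_{(n-1)\tau,0})^*\circ e^{\tau L^*_{\sigma,n}}$ and first compute $e^{\tau L^*_{\sigma,n}}((b^*+b)\otimes\eta_n)$. The shift $\widehat{S}_n$ of (\ref{Vn}) sends $(b^*+b)\otimes\eta_n$ into the $\eta_n$-sector, on which $e^{\tau\widehat{L}^*_{\sigma,n}}$ acts as the quasi-free map $\gamma_{\lambda,\tau}$ by (\ref{reduc1}); applying (\ref{gamma_b}) and (\ref{gamma_b^*}) and then undoing the shift one finds, after the $\lambda/\epsilon$-terms cancel, that $e^{\tau L^*_{\sigma,n}}((b^*+b)\otimes\eta_n)=(e^{-\mu\tau}b+e^{-\overline{\mu}\tau}b^*+d\,\idty)\otimes\eta_n$ with $d=i\lambda(1-e^{-\overline{\mu}\tau})/\overline{\mu}-i\lambda(1-e^{-\mu\tau})/\mu$. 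Applying $(T^\sigma_{(n-1)\tau,0})^*$ to this — again with $\idty\otimes\eta_n$ a spectator, using Lemma \ref{n-dual-on-b-sigma} on the $b,b^*$ parts and $(T^\sigma_{(n-1)\tau,0})^*(\idty\otimes\eta_n)=\idty\otimes\eta_n$ on the $d\,\idty$ part — and taking the trace as before (gauge-invariance kills the $b,b^*$ parts, $\eta_k\eta_n$ with $k<n$ gives $p^2$, and the lone $\idty\otimes\eta_n$ gives $p$) yields $\lambda\,d\,p+2\lambda^2p^2\,{\rm{Im}}\big[(e^{-\mu\tau}-e^{-n\mu\tau})/\mu\big]$. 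Writing $\lambda\,d\,p=p(1-p)\,\lambda d+p^2\,\lambda d$, using $\lambda d=2\lambda^2\,{\rm{Im}}[(1-e^{-\mu\tau})/\mu]$ and recombining with the help of ${\rm{Im}}[(1-e^{-\mu\tau})/\mu]+{\rm{Im}}[(e^{-\mu\tau}-e^{-n\mu\tau})/\mu]={\rm{Im}}[(1-e^{-n\mu\tau})/\mu]$ produces the $p(1-p)$- and $p^2$-terms of (\ref{int-energy-n}) once one writes $1/\mu=\overline{\mu}/|\mu|^2$.

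The one genuinely delicate step is the computation of $e^{\tau L^*_{\sigma,n}}((b^*+b)\otimes\eta_n)$: one must carry the constant shifts produced by $\widehat{S}_n$ and $\widehat{S}_n^{-1}$ through the quasi-free evolution and verify that all $\lambda/\epsilon$-contributions cancel, leaving only the comparatively simple coefficient $d$. The remaining ingredients — the spectator property of $\idty\otimes\eta_n$, the geometric summations, and the final separation into real and imaginary parts — are routine.
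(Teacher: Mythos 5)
Your proposal is correct and follows essentially the route the paper intends: duality plus Lemma \ref{n-dual-on-b-sigma}, the spectator property of $\idty\otimes\eta_n$, gauge-invariance of $\rho_{\mathcal{C}}$ to kill the $b,b^*$ terms, the Bernoulli relation (\ref{p-2p}) to assign $p$ to the $k=n$ term and $p^{2}$ to $k<n$, a geometric sum, and the decomposition $1/\mu=\overline{\mu}/|\mu|^{2}$. Your detour for $m=n$ through $e^{\tau L^{*}_{\sigma,n}}((b^*+b)\otimes\eta_n)$ merely re-derives the $k=n$ summand of (\ref{T-sigma-n-star}) separately and recombines to the same answer, so it is a cosmetic reorganization rather than a different argument.
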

%%%%%%%%%%%%%%%%%%%%%%%%%%%%%%%%%%%%%%%%%%%%%%%%%%%%% Corollary %%%%%%%%%%%%%%%%%%%%%%%%%%%%%%%%%%%%%%%%%%%%%%%%%%%%%%%%%
\begin{corollary}\label{energy-n-n-1}
Taking into account Theorem \ref{number of photons-sigma} and (\ref{int-energy-n}),
(\ref{int-energy-n-1}) we get for the elementary variation of the total energy (\ref{EnVar-sigma0})
\begin{align}\label{EnVar-sigma3}
&\Delta \mathcal{E}_{\sigma}(n \tau - 0, (n-1) \tau)=\epsilon \, (N_\sigma(n\tau) - N_\sigma((n-1)\tau)) \\
&+\lambda \ (\omega^{n\tau}_{\mathcal{S},\sigma} ((b^*+b)\otimes \eta_n) -
\omega^{(n-1)\tau}_{\mathcal{S},\sigma} ((b^*+b)\otimes \eta_n)) \nonumber \\
&= \epsilon \, (\frac{\sigma_{+}}{\sigma_{-} - \sigma_{+}}-N_\sigma(0))(1 - e^{-(\sigma_{-} - \sigma_{+})\tau})
e^{-(n-1)(\sigma_{-} - \sigma_{+})\tau} \nonumber \\
&- p(1-p)\frac{2\lambda^2\epsilon }{|\mu|^2}(1-e^{-(\sigma_{-} - \sigma_{+})\tau/2}\cos\epsilon\tau)
(1-e^{-(n-1)(\sigma_{-} - \sigma_{+})\tau}) \nonumber \\
&+ p(1-p) \frac{\lambda^2(\sigma_--\sigma_+)}{|\mu|^2} e^{-(\sigma_{-} - \sigma_{+})\tau/2}\sin \epsilon\tau
\nonumber \\
&+ p^{2}\frac{\lambda^2(\sigma_--\sigma_+)}{|\mu|^2}\left[e^{-n(\sigma_{-} - \sigma_{+})\tau/2}\sin n\epsilon\tau -
e^{-(n-1)(\sigma_{-} - \sigma_{+})\tau/2}\sin (n-1)\epsilon\tau\right]. \nonumber
\end{align}
\end{corollary}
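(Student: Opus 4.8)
(of Corollary \ref{energy-n-n-1})
The plan is to assemble the claimed formula directly from the two ingredients already in hand: the photon-number evolution of Theorem \ref{number of photons-sigma} at $t=n\tau$, namely (\ref{mean-value}), and the interaction-energy expectations (\ref{int-energy-n}), (\ref{int-energy-n-1}) of Corollary \ref{int-energy-n-n-1}. First I would return to the definition (\ref{EnVar-sigma0}) and split $H_n$ as in (\ref{Ham-n}) into the cavity term $\epsilon\,b^*b\otimes\idty$, the atomic term $\sum_{k\geq1}\idty\otimes E\eta_k$, and the interaction term $\lambda\,(b^*+b)\otimes\eta_n$. Since the atomic operators are constants of the motion, (\ref{commut-atoms}), the expectation of the atomic term is the same at $t=n\tau$ and at $t=(n-1)\tau$ and cancels in the difference (\ref{EnVar-sigma0}); what is left is precisely $\epsilon\,(N_\sigma(n\tau)-N_\sigma((n-1)\tau))$ plus $\lambda\,(\omega^{n\tau}_{\mathcal{S},\sigma}((b^*+b)\otimes\eta_n)-\omega^{(n-1)\tau}_{\mathcal{S},\sigma}((b^*+b)\otimes\eta_n))$, i.e. the first two lines of (\ref{EnVar-sigma3}).

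Next I would evaluate the photon-number difference from (\ref{mean-value}). Writing $\delta:=\sigma_--\sigma_+>0$ for brevity and using the elementary identities $e^{-n\delta\tau}-e^{-(n-1)\delta\tau}=-e^{-(n-1)\delta\tau}(1-e^{-\delta\tau})$ and $(e^{-(n-1)\delta\tau}-e^{-n\delta\tau})/(1-e^{-\delta\tau})=e^{-(n-1)\delta\tau}$, the difference $N_\sigma(n\tau)-N_\sigma((n-1)\tau)$ collapses to three pieces: one proportional to $\bigl(\sigma_+/\delta-N_\sigma(0)\bigr)e^{-(n-1)\delta\tau}(1-e^{-\delta\tau})$, one $p(1-p)$-piece proportional to $(1-e^{-\delta\tau/2}\cos\epsilon\tau)\,e^{-(n-1)\delta\tau}$, and one $p^2$-oscillation $p^2(2\lambda^2/|\mu|^2)\bigl(e^{-(n-1)\delta\tau/2}\cos(n-1)\epsilon\tau-e^{-n\delta\tau/2}\cos n\epsilon\tau\bigr)$. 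The interaction-energy difference is obtained by subtracting (\ref{int-energy-n-1}) from (\ref{int-energy-n}): the $p^2$ ``$1-\dots$'' terms recombine into $p^2(2\lambda^2\epsilon/|\mu|^2)\bigl(e^{-n\delta\tau/2}\cos n\epsilon\tau-e^{-(n-1)\delta\tau/2}\cos(n-1)\epsilon\tau\bigr)$, the $p(1-p)$-term $-p(1-p)(2\lambda^2\epsilon/|\mu|^2)(1-e^{-\delta\tau/2}\cos\epsilon\tau)$ survives, and the $\sin$-type terms of size $\lambda^2\delta/|\mu|^2$ are carried over.

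Finally I would add the two pieces and collect. The decisive cancellation — the open-cavity analogue of the exact compensation noted in Remark \ref{RemEn-1,n1} for the ideal case — is that the $p^2$ cosine-oscillation of size $2\lambda^2\epsilon/|\mu|^2$ coming from $\epsilon\,(N_\sigma(n\tau)-N_\sigma((n-1)\tau))$ is exactly minus the corresponding term in $\lambda\,(\omega^{n\tau}_{\mathcal{S},\sigma}(\cdot)-\omega^{(n-1)\tau}_{\mathcal{S},\sigma}(\cdot))$, so these disappear entirely. The two $p(1-p)$ contributions of size $2\lambda^2\epsilon/|\mu|^2$ then combine, via $-1+e^{-(n-1)\delta\tau}$, into the single term $-p(1-p)(2\lambda^2\epsilon/|\mu|^2)(1-e^{-\delta\tau/2}\cos\epsilon\tau)(1-e^{-(n-1)\delta\tau})$; the $\bigl(\sigma_+/\delta-N_\sigma(0)\bigr)$ term is carried over from the photon-number side; and the $\sin$-terms $p(1-p)(\lambda^2\delta/|\mu|^2)e^{-\delta\tau/2}\sin\epsilon\tau$ and $p^2(\lambda^2\delta/|\mu|^2)\bigl(e^{-n\delta\tau/2}\sin n\epsilon\tau-e^{-(n-1)\delta\tau/2}\sin(n-1)\epsilon\tau\bigr)$ — which vanish in the ideal limit $\delta\to0$ and are the only genuinely new feature compared with Section \ref{EEP_0} — remain. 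This reproduces (\ref{EnVar-sigma3}). The computation is entirely routine; the only point requiring care is that, unlike in Remark \ref{RemEn-1,n1}, the dissipative part of (\ref{Generator-KL}) contributes sine-type oscillations to (\ref{int-energy-n}) and (\ref{int-energy-n-1}), so the cancellation between photon pumping and interaction energy is only partial — the cosine pieces cancel, the sine pieces and the exponential-relaxation term do not.
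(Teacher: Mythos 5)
Your proposal is correct and follows the same route the paper intends: split $H_n$ into cavity, atomic and interaction parts (the atomic part drops out since $\omega^t_{\mathcal{S},\sigma}(\idty\otimes\eta_k)=p$ for all $t$), take the photon-number difference from (\ref{mean-value}) and the interaction difference from (\ref{int-energy-n})--(\ref{int-energy-n-1}), and observe the exact cancellation of the $p^2$ cosine oscillations together with the recombination of the $p(1-p)$ terms. The algebraic identities you invoke and the resulting collection of terms all check out against (\ref{EnVar-sigma3}).
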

%%%%%%%%%%%%%%%%%%%%%%%%%%%%%%%%%%%%%%%%%%%%%%%%%%%%%%%%%%%%%%%%%%%%%%%%%%%%%%%%%%%%%%%%%%%%%%%%%%%%%%%%%%%%%%%%%%%%%%%%%
Note that in the limit of the ideal cavity: $\sigma_+ \rightarrow 0$ and $\sigma_- \rightarrow 0$, one finds
for total energy variation (\ref{EnVar-sigma3}): $\Delta \mathcal{E}_{\sigma}(n \tau - 0, (n-1) \tau) =0$,
which corresponds to the autonomous case, see Remark \ref{RemEn-1,n1}. Whereas for $\sigma_- > \sigma_+ \geq 0 $
the external pumping due to ${\sigma_{+}}/(\sigma_{-} - \sigma_{+}) \geq 0$ is in competition with the energy leaking,
see the second term in the right-hand side of (\ref{EnVar-sigma3}). The limit of the energy increment when
$n\rightarrow\infty$ is
\begin{align}\label{lim-increm}
&\lim_{n\rightarrow\infty} \Delta \mathcal{E}_{\sigma}(n \tau - 0, (n-1) \tau) =\\
&p(1-p)\frac{\lambda^2}{|\mu|^2}\left[- 2\epsilon (1-e^{-(\sigma_{-} - \sigma_{+})\tau/2}\cos\epsilon\tau)
+ (\sigma_--\sigma_+) e^{-(\sigma_{-} - \sigma_{+})\tau/2}\sin \epsilon\tau \right]. \nonumber
\end{align}
%%%%%%%%%%%%%%%%%%%%%%%%%%%%%%%%%%%%%%%%%%%%%%%%%%%%%%% Remark %%%%%%%%%%%%%%%%%%%%%%%%%%%%%%%%%%%%%%%%%%%%%%%%%%%%%%%
\begin{remark}\label{EnVar-sigma-jump}
To consider the impact when the $n$-th atom enters the cavity we study the total energy variation
on the extended interval $((n-1)\tau -0, n\tau -0)$. Then
\begin{align}\label{EnVar-sigma-jump1}
&\Delta \mathcal{E}_{\sigma}(n\tau - 0, (n-1)\tau -0) =
(\omega^{n \tau-0}_{\mathcal{S},\sigma}(H_n) - \omega^{(n-1) \tau}_{\mathcal{S},\sigma}(H_n))\\
&+ (\omega^{(n-1) \tau}_{\mathcal{S},\sigma}(H_n) - \omega^{(n-1)\tau -0}_{\mathcal{S},\sigma}(H_{n-1})) \ ,
\nonumber
\end{align}
where the second difference $\Delta \mathcal{E}_{\sigma}((n-1)\tau,(n-1)\tau -0):=
\omega^{(n-1) \tau}_{\mathcal{S},\sigma}(H_n) - \omega^{(n-1)\tau -0}_{\mathcal{S},\sigma}(H_{n-1})$ corresponds to the
energy variation (\textit{jump}), when the $n$-th atom enters the cavity and the $(n-1)$-th atom leaves it.
\end{remark}
%%%%%%%%%%%%%%%%%%%%%%%%%%%%%%%%%%%%%%%%%%%%%%%%%%%%%%%%%%%%%%%%%%%%%%%%%%%%%%%%%%%%%%%%%%%%%%%%%%%%%%%%%%%%%%%%%%%%%%
%%%%%%%%%%%%%%%%%%%%%%%%%%%%%%%%%%%%%%%%%%%%%%%%%%%%%% Jump %%%%%%%%%%%%%%%%%%%%%%%%%%%%%%%%%%%%%%%%%%%%%%%%%%%%%%%%%%
To calculate $\Delta \mathcal{E}_{\sigma}((n-1)\tau, (n-1)\tau -0)$ note that by the time continuity of the state
\begin{align*}
&\Delta \mathcal{E}_{\sigma}((n-1)\tau,(n-1)\tau -0)=\Tr(\rho_\cS((n-1)\tau)H_{n})-\Tr(\rho_\cS((n-1)\tau)H_{n-1})\\
&=\Tr\left(e^{\tau L_{\sigma,n-1}}\, ... \ e^{\tau L_{\sigma, 1}}(\rho_\cC\otimes\rho_\cA))(H_n-H_{n-1})\right)\\
&=\Tr\left(T^\sigma_{(n-1)\tau,0}(\rho_\cC\otimes\rho_\cA)(H_n-H_{n-1})\right)\\
&=\Tr\left(\rho_\cC\otimes\rho_\cA(T^\sigma_{(n-1)\tau,0})^*(\lambda(b^*+b)\otimes (\eta_n-\eta_{n-1})\right)\\
&=\Tr\left(\rho_\cC\otimes\rho_\cA(T^\sigma_{(n-1)\tau,0})^*(\lambda(b^*+b)\otimes (\eta_n-\eta_{n-1})\right)\\
&=\Tr\{\rho_\cC\otimes\rho_\cA (T^\sigma_{(n-1)\tau,0})^*(\lambda (b^*+b)\otimes \idty)[\idty\otimes (\eta_n -\eta_{n-1})]\},
\end{align*}
where $T^\sigma_{t=n\tau,0}=e^{\tau L_{\sigma,n}}\, ... \ e^{\tau L_{\sigma, 1}}$ is defined by (\ref{Sol-Liouv-Eq-sigma}).
%%%%%%%%%%%%%%%%%%%%%%%%%%%%%%%%%%%%%%%%%%%%%%%%%%%%% Ex-Lemma %%%%%%%%%%%%%%%%%%%%%%%%%%%%%%%%%%%%%%%%%%%%%%%%%%%%%%%%%%%

If the initial cavity state is gauge-invariant, then (\ref{T-sigma-n-star}) yields for the energy jump at the moment
$t=(n-1)\tau$:
\begin{align}\label{energy-jump}
&\Delta \mathcal{E}_{\sigma}((n-1)\tau,(n-1)\tau -0)=\\
&\Tr\{\rho_\cC\otimes\rho_\cA (T^\sigma_{(n-1)\tau,0})^* (\lambda (b^*+b)\otimes \idty)
[\idty\otimes (\eta_n -\eta_{n-1})]\}\nonumber  \\
&= \frac{\lambda^2 i}{\bar{\mu}}(1-e^{-\bar{\mu}\tau})\sum_{k=1}^{n-1}e^{-(n-k-1)\bar{\mu}\tau}
\Tr_\cA (\rho_\cA\eta_k(\eta_n -\eta_{n-1})) \nonumber\\
&-\frac{\lambda^2 i}{\mu}(1-e^{-\mu\tau})\sum_{k=1}^{n-1}e^{-(n-k-1)\mu\tau}\Tr_\cA (\rho_\cA \eta_k(\eta_n -
\eta_{n-1})) .\nonumber
\end{align}
Taking into account the Bernoulli property (\ref{p-2p}) we obtain from (\ref{energy-jump})
\begin{align}
&\Delta \mathcal{E}_{\sigma}((n-1)\tau,(n-1)\tau -0)=
\frac{\lambda^2 i}{\mu}(1-e^{-\mu\tau})p(1-p)-\frac{\lambda^2 i}{\bar{\mu}}(1-e^{-\bar{\mu}\tau})p(1-p)\nonumber\\
&=p(1-p)\frac{2\lambda^2\epsilon}{|\mu|^2}(1-e^{-(\sigma_{-} - \sigma_{+})\tau/2}\cos\epsilon\tau)\nonumber \\
&-p(1-p)\frac{\lambda^2(\sigma_--\sigma_+)}{|\mu|^2}e^{-(\sigma_{-} - \sigma_{+})\tau/2}\sin\epsilon\tau.
\label{second_term}
\end{align}

Notice again that for $\sigma_{-} \rightarrow +0$ one obtains from (\ref{second_term}) the one-step energy variation
for the ideal cavity (\ref{1st-term}).

%%%%%%%%%%%%%%%%%%%%%%%%%%%%%%%%%%%%%%%%%%%%%%%%%%%%%%%%%%%%%%%%%%%%%%%%%%%%%%%%%%%%%%%%%%%%%%%%%%%%%%%%%%%%%%%%%%%%%%%%%
Summarising (\ref{EnVar-sigma3}) and (\ref{second_term}), we obtain the energy increment (\ref{EnVar-sigma-jump1})
which is due to impact of the open cavity effects (\ref{EnVar-sigma3}) and to the atomic beam pumping (\ref{second_term}):
\begin{align}\label{EnVar-sigma-jump2}
&\Delta \mathcal{E}_{\sigma}(n\tau - 0, (n-1)\tau -0) = \\
&= \epsilon \, (\frac{\sigma_{+}}{\sigma_{-} - \sigma_{+}}-N_\sigma(0))(1 - e^{-(\sigma_{-} - \sigma_{+})\tau})
e^{-(n-1)(\sigma_{-} - \sigma_{+})\tau} \nonumber \\
&+ p(1-p)\frac{2\lambda^2\epsilon }{|\mu|^2}(1-e^{-(\sigma_{-} - \sigma_{+})\tau/2}\cos\epsilon\tau)
e^{-(n-1)(\sigma_{-} - \sigma_{+})\tau} \nonumber \\
&+p^{2}\frac{\lambda^2(\sigma_--\sigma_+)}{|\mu|^2}\left[e^{-n(\sigma_{-} - \sigma_{+})\tau/2}\sin n\epsilon\tau -
e^{-(n-1)(\sigma_{-} - \sigma_{+})\tau/2}\sin (n-1)\epsilon\tau\right]. \nonumber
\end{align}
%%%%%%%%%%%%%%%%%%%%%%%%%%%%%%%%%%%%%%%%%%%%%%%%%%%%%%%%%%%%%%%%%%%%%%%%%%%%%%%%%%%%%%%%%%%%%%%%%%%%%%%%%%%%%%%%%%%%%%%%%
\begin{theorem}\label{Energy-sigma-variation}
By virtue of (\ref{EnVar-sigma-jump2}) the total energy variation between initial state at the moment $t_0 := -0$,
when the cavity is empty, and the moment $t_n := n\tau -0$, just before the $n$-th atom is ready to leave the
cavity, is
\begin{align}\label{energ-var-op-syst-0-n}
&\Delta \mathcal{E}_{\sigma}(t_n, t_0) = \sum_{k=1}^n \Delta \mathcal{E}_{\sigma}(k\tau - 0, (k-1)\tau -0)\\
&= \epsilon \, (\frac{\sigma_{+}}{\sigma_{-} - \sigma_{+}}-N_\sigma(0))(1 - e^{-n(\sigma_{-} - \sigma_{+})\tau})
\nonumber \\
&+p(1-p)\frac{2\lambda^2\epsilon }{|\mu|^2}(1-e^{-(\sigma_{-} - \sigma_{+})\tau/2}\cos\epsilon\tau)
\frac{1 - e^{-n(\sigma_{-} - \sigma_{+})\tau}}{1 - e^{-(\sigma_{-} - \sigma_{+})\tau}}\nonumber \\
&+p^{2}\frac{\lambda^2(\sigma_--\sigma_+)}{|\mu|^2} e^{-n(\sigma_{-} - \sigma_{+})\tau/2} \sin n\epsilon\tau
\nonumber \ .
\end{align}
Here $N_\sigma(0) = \omega^{t_0}_{\mathcal{S},\sigma}(b^*b\otimes\idty)$ is the initial number of photons in
the cavity.
\end{theorem}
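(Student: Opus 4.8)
The plan is to reduce the statement to the already-established elementary increment (\ref{EnVar-sigma-jump2}) together with two completely elementary summation identities — a finite geometric series and a telescoping sum; nothing beyond this is needed.

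First I would observe that the first equality in (\ref{energ-var-op-syst-0-n}) is nothing but additivity of the energy variation along the partition of $[t_0,t_n]$ by the points $\{k\tau-0\}_{k=0}^{n}$. Combining the two differences in (\ref{EnVar-sigma-jump1}) gives $\Delta\mathcal{E}_{\sigma}(k\tau-0,(k-1)\tau-0)=\omega^{k\tau-0}_{\mathcal{S},\sigma}(H_k)-\omega^{(k-1)\tau-0}_{\mathcal{S},\sigma}(H_{k-1})$, so that summing over $k=1,\dots,n$ the intermediate terms cancel pairwise, leaving the difference of the total-energy expectations at $t_n$ and $t_0$, which is by definition $\Delta\mathcal{E}_{\sigma}(t_n,t_0)$. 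Here I would invoke the $w^{\ast}$-time-continuity of $\omega^{t}_{\mathcal{S},\sigma}(\cdot)$ (Appendix A.2) to legitimise the one-sided limits, together with the piecewise-constant structure (\ref{Generator-KL}) of the generator on each $[(k-1)\tau,k\tau)$.

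Next I would substitute (\ref{EnVar-sigma-jump2}) into this sum and evaluate its three groups of terms separately. Writing $x:=e^{-(\sigma_--\sigma_+)\tau}$, which satisfies $0<x<1$ since $\sigma_->\sigma_+$, the first two groups each carry the factor $e^{-(k-1)(\sigma_--\sigma_+)\tau}=x^{k-1}$, and $\sum_{k=1}^{n}x^{k-1}=(1-x^{n})/(1-x)$: in the first group the prefactor $(1-x)$ then cancels the denominator, leaving $\epsilon\,(\sigma_{+}/(\sigma_{-}-\sigma_{+})-N_\sigma(0))(1-x^{n})$, while the second gives $p(1-p)\,\frac{2\lambda^2\epsilon}{|\mu|^2}(1-e^{-(\sigma_--\sigma_+)\tau/2}\cos\epsilon\tau)\,(1-x^{n})/(1-x)$. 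For the third group I would put $f(k):=e^{-k(\sigma_--\sigma_+)\tau/2}\sin k\epsilon\tau$; since that group equals $p^{2}\frac{\lambda^2(\sigma_--\sigma_+)}{|\mu|^2}\,[f(k)-f(k-1)]$, its sum over $k=1,\dots,n$ telescopes to $p^{2}\frac{\lambda^2(\sigma_--\sigma_+)}{|\mu|^2}\,(f(n)-f(0))=p^{2}\frac{\lambda^2(\sigma_--\sigma_+)}{|\mu|^2}\,e^{-n(\sigma_--\sigma_+)\tau/2}\sin n\epsilon\tau$, because $f(0)=0$. Adding the three contributions reproduces (\ref{energ-var-op-syst-0-n}) term by term.

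There is no serious computational obstacle — the summations are routine. The one point that needs care is the telescoping in the first step, i.e. the identity $\sum_{k=1}^{n}\Delta\mathcal{E}_{\sigma}(k\tau-0,(k-1)\tau-0)=\Delta\mathcal{E}_{\sigma}(t_n,t_0)$, which rests on the time-continuity of the Kossakowski--Lindblad evolution established in Appendices A.1--A.2; the hypothesis $\sigma_->\sigma_+$ moreover guarantees $|x|<1$, so that the geometric sums — and, should one wish, the limit $n\to\infty$ — are all well defined.
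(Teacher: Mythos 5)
Your proposal is correct and follows essentially the same route as the paper: the theorem is obtained precisely by summing the elementary increments (\ref{EnVar-sigma-jump2}) over $k=1,\dots,n$, with the first two groups handled by the finite geometric series in $x=e^{-(\sigma_--\sigma_+)\tau}$ (the prefactor $1-x$ cancelling the denominator in the first group) and the third group telescoping to $e^{-n(\sigma_--\sigma_+)\tau/2}\sin n\epsilon\tau$. Your additional remarks on the $w^{\ast}$-continuity justifying the one-sided limits and on $0<x<1$ are consistent with the paper's setup and add nothing that conflicts with it.
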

%%%%%%%%%%%%%%%%%%%%%%%%%%%%%%%%%%%%%%%%%%%%%%%%%%%%%%% Remark %%%%%%%%%%%%%%%%%%%%%%%%%%%%%%%%%%%%%%%%%%%%%%%%%%%%%%%
\begin{remark}\label{Energy-sigma-variationBIS}
Note that the total energy variation (\ref{energ-var-op-syst-0-n}) is due to evolution of the photon number in the
open cavity (\ref{mean-value}) and the variation of the interaction energy (\ref{int-energy-n}), that give
\begin{align}\label{energ-var-op-syst-0-nBIS}
\Delta \mathcal{E}_{\sigma}(t_n, t_0) &=
\epsilon \omega^{n\tau}_{\mathcal{S},\sigma}(b^*b\otimes\idty) +
\lambda \ \omega^{n\tau}_{\mathcal{S},\sigma} ((b^*+b)\otimes \eta_n) \\
&-\epsilon \omega^{t_0}_{\mathcal{S},\sigma}(b^*b\otimes\idty) \ . \nonumber
\end{align}
For $\sigma_{-} - \sigma_{+}>0$ it is uniformly bounded from above
\begin{align}\label{energ-var-op-syst-0-n+}
\Delta \mathcal{E}_{\sigma}(t_n, t_0) &\leq \epsilon \, \frac{\sigma_{+}}{\sigma_{-} - \sigma_{+}}
+\frac{2\lambda^2\epsilon }{|\mu|^2} \frac{p(1-p)}{1 - e^{-(\sigma_{-} - \sigma_{+})\tau/2}} \\
&+p^{2}\frac{\lambda^2(\sigma_--\sigma_+)}{|\mu|^2} . \nonumber
\end{align}
The lower bound of (\ref{energ-var-op-syst-0-n}) is also evident. It strongly depends on the initial condition
$N_\sigma(0)$ and can be negative.
\end{remark}
%%%%%%%%%%%%%%%%%%%%%%%%%%%%%%%%%%%%%%%%%%%%%%%%%%%%%%%%%%%%%%%%%%%%%%%%%%%%%%%%%%%%%%%%%%%%%%%%%%%%%%
The long-time asymptotic of (\ref{energ-var-op-syst-0-n}), or (\ref{energ-var-op-syst-0-nBIS}), is
\begin{align}\label{energ-var-op-syst-0-inf}
&\Delta \mathcal{E}_{\sigma}:=\lim_{n\rightarrow\infty}\Delta \mathcal{E}_{\sigma}(t_n, t_0)
=\epsilon \, (\frac{\sigma_{+}}{\sigma_{-} - \sigma_{+}}-N_\sigma(0))\\
&+p(1-p)\frac{2\lambda^2\epsilon }{|\mu|^2}
\frac{1-e^{-(\sigma_{-} - \sigma_{+})\tau/2}\cos\epsilon\tau}{1 - e^{-(\sigma_{-} - \sigma_{+})\tau}}\nonumber \ .
\end{align}
From (\ref{energ-var-op-syst-0-inf}) one gets that in the open cavity with $\sigma_{-} - \sigma_{+}>0$
the asymptotic of the total-energy variation is bounded from above and from below
\begin{align}\label{bound+}
&\Delta \mathcal{E}_{\sigma}\leq\epsilon \, (\frac{\sigma_{+}}{\sigma_{-} - \sigma_{+}}-N_\sigma(0))
+ \frac{2\lambda^2\epsilon }{|\mu|^2} \frac{p(1-p)}{1 - e^{-(\sigma_{-} - \sigma_{+})\tau/2}} \ , \\
&\Delta \mathcal{E}_{\sigma}\geq\epsilon \, (\frac{\sigma_{+}}{\sigma_{-} - \sigma_{+}}-N_\sigma(0))
+ \frac{2\lambda^2\epsilon }{|\mu|^2} \frac{p(1-p)}{1 + e^{-(\sigma_{-} - \sigma_{+})\tau/2}} \ .
\label{bound-}
\end{align}

For the short-time regime $n\tau\ll 1$ one gets for (\ref{energ-var-op-syst-0-n})
\begin{align}\label{energ-var-op-syst-0-short}
&\Delta \mathcal{E}_{\sigma}(t_n, t_0) = n \tau \epsilon \, \sigma_{+} - n\tau (\sigma_{-} - \sigma_{+})N_\sigma(0)\\
&+n\tau p(1-p)\frac{2\lambda^2\epsilon }{|\mu|^2}(1-\cos\epsilon\tau)
\frac{\sigma_{-} - \sigma_{+}}{1 - e^{-(\sigma_{-} - \sigma_{+})\tau}} \nonumber \\
&+n\tau p^{2}\frac{\lambda^2(\sigma_--\sigma_+)\epsilon}{|\mu|^2} + \mathcal{O}((n\tau)^2)
\nonumber \ ,
\end{align}
i.e. a linear asymptotic behaviour.

Another asymptotics one finds for the small difference between leaking and external pumping:
$\sigma_{-} - \sigma_{+} \rightarrow 0$. Then (\ref{energ-var-op-syst-0-n}) yields linear behaviour
\begin{align}\label{energ-var-op-syst-0-DIFF}
\Delta \mathcal{E}_{\sigma}(t_n, t_0) = n \tau \epsilon \, \sigma_{+}
+n p(1-p)\frac{2\lambda^2}{\epsilon}(1-\cos\epsilon\tau) \ ,
\end{align}
which is a \textit{growing} of the total energy due to the both external and atomic beam pumping. Note that
in this limit the energy variation $\Delta \mathcal{E}_{\sigma}(t_n, t_0)$ is not bounded from above
(\ref{energ-var-op-syst-0-n+}), (\ref{bound+}).

This case coincides with result for the ideal cavity (Theorem \ref{ThDEn-1,n1}) when the rate of environmental
pumping $\sigma_{+} =0$.

%%%%%%%%%%%%%%%%%%%%%%%%%%%%%%%%%%%%%%%%%%%%%%%%%%%%%%%%%%% 3 %%%%%%%%%%%%%%%%%%%%%%%%%%%%%%%%%%%%%%%%%%%%%%%%%%%%%%%%%%
%\noindent\ref{EEP}\textbf{.3 Entropy production for $\sigma = 0$.}
\subsection{Entropy production in the ideal cavity.}

One of the central quantities to study in the  non-equilibrium statistical mechanics is the
entropy production (or the entropy production rate). We refer to the series of
papers  \cite{BJM06}-\cite{BJM10} by Bruneau, Joye, and Merkli, for  a detailed
discussion of this quantity in the context of open quantum systems with repeated
interactions, and we shall adopt definitions of these authors.

Let $\rho$ and $\rho_0$ be two normal states on the algebra $\mathfrak{A}(\mathcal{H})$.
We define the \textit{relative entropy}  ${\rm{Ent}}(\rho|\rho_0)$ of the state $\rho$ with respect to
a "reference" state $\rho_0$ by
\begin{equation}\label{Rel-Ent}
{\rm{Ent}}(\rho|\rho_0):= \Tr(\rho \ln \rho - \rho\ln \rho_0) \geq 0 \,.
\end{equation}
The non-negativity follows from the Jensen inequality: $\Tr(\rho \ln{B})
\leq \ln \Tr(\rho{B})$, applied to observable ${B}:= \rho_0/\rho$.

Here we calculate the entropy production:
\begin{equation}\label{1DS-sigma=0}
\Delta S(t) := {\rm{Ent}}(\rho_\mathcal{S}(t)|\rho_{\mathcal{S}}^{ref})
- {\rm{Ent}}(\rho_\mathcal{S}(0)|\rho_{\mathcal{S}}^{ref}) \ ,
\end{equation}
%= \Tr\{ \rho_\mathcal{S}(t)\ln\frac{\rho_{\mathcal{S}}(t)}{\rho_{\cC}^{ref}\otimes\rho_\cA}\} \ ,
for the ideal cavity: $\sigma_-=\sigma_+=0$, with dynamics $\rho_\mathcal{S}: t \mapsto \rho_\mathcal{S}(t)$
(\ref{Sol-Liouv-Eq}), and for a reference state $\rho_{\cC}^{ref}\otimes\rho_{\cA}$.
To make a contact with thermodynamics, we suppose that all atoms of the beam are in the \textit{Gibbs state} with
the temperature $1/\beta$. Formally this can be written as
\begin{equation}\label{Atom-Gibbs}
\rho_\cA (\beta) : = \bigotimes_{n\geq1} \rho_{\cA_n} (\beta)  \  , \  \ \rho_{\cA_n} (\beta):=
\frac{e^{-\beta H_{\cA_n}}}{Z(\beta)} \ ,
\end{equation}
see (\ref{Ham-Model}). Since
\begin{equation}\label{ref-state0}
\rho_{\mathcal{S}}^{ref} = \rho_{\cC}^{ref}\otimes\rho_\cA = (\rho_{\cC}^{ref}\otimes \idty) (\idty \otimes\rho_\cA) \ ,
\end{equation}
and since for the unitary dynamics (\ref{L-Gen-t})
\begin{equation}\label{ref-state1}
\Tr\{\rho_\mathcal{S}(t) \ln \rho_\mathcal{S}(t)\} = \Tr\{\rho_\mathcal{S}(0)\ln \rho_\mathcal{S}(0)\},
\end{equation}
the relative entropy (\ref{1DS-sigma=0}) is
\begin{align}\label{2DS-sigma=0}
% \nonumber to remove numbering (before each equation)
\Delta S(t) :&= \Tr_{\cC} \{[\rho_\cC^{(0)}-\rho_\cC^{(n)}]\ln \rho_{\cC}^{ref}\} \\
&- \beta \ \sum_{k=1}^{n} \Tr\{[\rho_\mathcal{S}(0)-\rho_\mathcal{S}(n\tau)] (\idty \otimes H_{\cA_k})\}  \nonumber \ ,
\end{align}
for $t = n \tau + \nu$, see (\ref{t}). Here $\rho_\cC^{(n)}$ is defined by (\ref{C-state-n}).
%%%%%%%%%%%%%%%%%%%%%%%%%%%%%%%%%%%%%%%%%%%%%%%%%%%%%%%% Remark  %%%%%%%%%%%%%%%%%%%%%%%%%%%%%%%%%%%%%%%%%%%%%%%%%%%%%%%
\begin{remark}\label{Entr-commut}
For any Hamiltonian $H_n$ that acts non-trivially on $\cH_\cC\otimes\cH_{\cA_n}$ and for any Hamiltonian $H_{\cA_k}$
acting on $\cH_{\cA_k}$ one gets  $[H_n, H_{\cA_k}] = 0$ if $n \neq k$. Note that in our
model (\ref{commut-atoms}) implies $[H_n, H_{\cA_k}] = 0$ for any $n,k$.
\end{remark}
%%%%%%%%%%%%%%%%%%%%%%%%%%%%%%%%%%%%%%%%%%%%%%%%%%%%%%%%%%%%%%%%%%%%%%%%%%%%%%%%%%%%%%%%%%%%%%%%%%%%%%%%%%%%%%%%%%%%%%%%%
Therefore, by virtue of (\ref{C-state-n}) and (\ref{cL}) we obtain
\begin{align}\label{rho-S-nk}
&\Tr\{\rho_\mathcal{S}(n\tau) \ (\idty \otimes H_{\cA_k})\} \\
& = \Tr\{e^{- i \tau H_n}...e^{-i \tau H_{k+1}} \rho_\mathcal{S}(k\tau)e^{i \tau H_{k+1}} \ldots {e^{i \tau H_n}} \
(\idty \otimes H_{\cA_k})\} \nonumber \\
& = \Tr\{\rho_\mathcal{S}((k\tau) \ (\idty \otimes H_{\cA_k})\}  \nonumber \ .
\end{align}
Then by the same arguments one gets also that
\begin{align}\label{rho-S-0k-1}
&\Tr\{\rho_\mathcal{S}(0) \ (\idty \otimes H_{\cA_k})\} \\
& = \Tr\{e^{- i \tau H_{k-1}}...e^{-i \tau H_1} \rho_\mathcal{S}(0)e^{i \tau H_1} \ldots e^{i \tau H_{k-1}} \
(\idty \otimes H_{\cA_k})\} \nonumber \\
& = \Tr\{\rho_\mathcal{S}((k-1)\tau) \ (\idty \otimes H_{\cA_k})\}  \nonumber \ .
\end{align}

In the case of our model (see Remark \ref{Entr-commut}) $H_{\cA_k}=e^{i\tau H_k}(H_{\cA_k})e^{-i\tau H_k}$.
Therefore the last formula gets the form
\begin{align}\label{rho-S-0k-1-model}
&\Tr\{\rho_\mathcal{S}(0) \ (\idty \otimes H_{\cA_k})\}=\Tr\{\rho_\mathcal{S}(k\tau) \ (\idty \otimes H_{\cA_k}).
\end{align}
Equations (\ref{rho-S-nk}) and (\ref{rho-S-0k-1-model}) shows that the second term in the entropy production
(\ref{2DS-sigma=0}) \textit{vanishes}.

If we suppose that the reference state is the Gibbs state (\ref{Gibbs-photons}) at temperature $1/\beta_{\mathcal{C}}$ ,
then by (\ref{N(t)}) one gets
\begin{align}\label{EntrProd-OurMod}
% \nonumber to remove numbering (before each equation)
&\Delta S(t)= \Tr_{\cC} \{[\rho_\cC^{(0)}-\rho_\cC^{(n)}]\ln \rho_{\cC}^{ref}\} \\
& =\Tr_{\cC} \{[\rho_\cC^{(0)}-\rho_\cC^{(n)}] (- \beta_{\mathcal{C}} \, \epsilon \, b^{*}b )\} \nonumber \\
&= \beta_{\mathcal{C}}\, \epsilon(N(t)-N(0))\ , \nonumber
\end{align}
where $N(t)$ is the mean photon number defined by (\ref{N(t)}).

Let us define by $\Delta \mathcal{E}^{\cC}(t) = \epsilon (N(t) - N(0))$ the energy variation of the cavity
due to the photon number evolution. Then (\ref{EntrProd-OurMod}) expresses the 2nd Law of Thermodynamics
\begin{equation}\label{2nd-Law}
\Delta S(t) = \beta_{\mathcal{C}} \, \Delta \mathcal{E}^{\cC}(t) \ ,
\end{equation}
for the pumping by atomic beam. Note that relation (\ref{2nd-Law}) does not depend on the initial cavity
state $\rho_\cC^{(0)}$.
%%%%%%%%%%%%%%%%%%%%%%%%%%%%%%%%%%%%%%%%%%%%%%%%%%%%%%%%%%% Remark  %%%%%%%%%%%%%%%%%%%%%%%%%%%%%%%%%%%%%%%%%%%%%%%%%%%%
\begin{remark}
In general, when $[H_n, H_{\cA_n}]\neq 0$, the combination of (\ref{2DS-sigma=0}) with (\ref{rho-S-nk}) and
(\ref{rho-S-0k-1}) yield for the entropy production at the moment
$t = n \tau + \nu$ the expression:
\begin{align}\label{DS-sigma=0-bis}
% \nonumber to remove numbering (before each equation)
\Delta S(t) :&= \Tr_{\cC} \{[\rho_\cC^{(0)}-\rho_\cC^{(n)}]\ln \rho_{\cC}^{ref}\} \\
&+ \beta \ \sum_{k=1}^{n} \Tr\{[\rho_\mathcal{S}(k \tau) - \rho_\mathcal{S}((k-1)\tau)] \ (\idty \otimes H_{\cA_k})\}
\nonumber \ .
\end{align}
The last term in (\ref{DS-sigma=0-bis}) can be rewritten into the standard form \cite{BJM06}-\cite{BJM10},
if one uses the identities:
\begin{align*}\label{last-term1}
&\Tr\{\rho_\mathcal{S}(k\tau) (\idty \otimes H_{\cA_k})\} =
\Tr\{e^{\tau L_k}\! \ldots e^{\tau L_1}(\rho_{\cC}\otimes \rho_{\cA})\ (\idty \otimes H_{\cA_k})\} \\
&=\Tr\{e^{- i \tau H_{k}}(e^{\tau L_{k-1}}\! \ldots e^{\tau L_1}[\rho_{\cC}\otimes\!
\bigotimes_{n=1}^{k-1}\rho_{n}])
\otimes \rho_{k} e^{i \tau H_{k}} \ (\idty \otimes H_{\cA_k})\} [\idty \otimes\! \bigotimes_{m>k}\! \rho_{m}]\nonumber \\
&=\Tr \{(\rho_{\cC}^{(k-1)} \otimes \rho_{k}) e^{i \tau H_{k}} \ (\idty \otimes H_{\cA_k})
e^{- i \tau H_{k}}\}
\nonumber  \ ,
\end{align*}
and
\begin{equation*}\label{last-term2}
\Tr\{\rho_\mathcal{S}((k-1)\tau) \ (\idty \otimes H_{\cA_k})\} =
\Tr \{\rho_{\cC}^{(k-1)} \otimes \rho_{k} H_{\cA_k}\} \ .
\end{equation*}
For $t = n \tau + \nu$ this gives the formula for the entropy production in the ideal cavity:
\begin{align}\label{DS-sigma=0-fin}
% \nonumber to remove numbering (before each equation)
&\Delta S(t)= \Tr_{\cC} \{[\rho_\cC^{(0)}-\rho_\cC^{(n)}]\ln \rho_{\cC}^{ref}\} \\
&+\beta \ \sum_{k=1}^{n} \Tr_{\cH_\cC \otimes \cH_{\cA_{k}}} \{(\rho_{\cC}^{(k-1)} \otimes \rho_{k})
[e^{i \tau H_{k}} (\idty \otimes H_{\cA_k})e^{- i \tau H_{k}} - \idty \otimes H_{\cA_k}]\}  \nonumber  \ .
\end{align}

One expects that for an open cavity the gain and loss of photons would result in a corresponding additional
flux of entropy due to the quantum Markov evolution. It is not yet completely clear (see e.g.\cite{FaRe}) how to define this entropy production correctly and whether contributions of these two processes are independent.
Therefore, we leave the analysis of the entropy flux for the open cavity to be considered in the future.
\end{remark}

%%%%%%%%%%%%%%%%%%%%%%%%%%%%%%%%%%%%%%%%%%%%%%%%%%%% Section %%%%%%%%%%%%%%%%%%%%%%%%%%%%%%%%%%%%%%%%%%%%%%%%%%%%%%%%%%%%%
\section{Concluding Remarks}\label{CR}

{From} Theorem \ref{N of photons} we learn that in the ideal cavity the photon number
increases linearly in time up to bounded oscillations, (see Figure \ref{assymptotics}, green curve)
The rate of the growth is non-zero for any value of the probability of excited atoms in the beam except
$p=0$ and $p=1$. For large time it is independent of the initial state $\rho_{\mathcal{C}}$.
The rate of the linear growth of the mean-value of photons $N(t)$ with respect to the time $t=n\tau$ can
be seen from (\ref{number of photons_Hamiltonian}) since
\begin{equation}\label{rate_growth}
\frac{N(t)}{n\tau}=\frac{N(0)}{n\tau}+  \, p(1-p) \, \frac{2\lambda^2}{\tau\epsilon^2} \, (1-\cos\epsilon\tau) +
\frac{p^2}{n\tau} \, \frac{2\lambda^2}{\epsilon^2}(1-\cos n\epsilon\tau)\ .
\end{equation}
Hence, for $N(0)=0$ one gets linear growth modulo bounded oscillations:
\begin{equation}\label{rate_growthBIS}
\frac{N(t)}{n\tau} = p(1-p) \, \frac{2\lambda^2}{\tau\epsilon^2} \, (1-\cos\epsilon\tau)
+ \frac{p^2}{n\tau} \, \frac{2\lambda^2}{\epsilon^2}(1-\cos n\epsilon\tau)\ .
\end{equation}
This linear growth of the photon number is observed in experiments with one-atom masers
for the high-quality resonators (nearly ideal cavities)  \cite{MWM}.

%\pagebreak
\begin{figure}[t]
\includegraphics[width=0.7\textwidth]{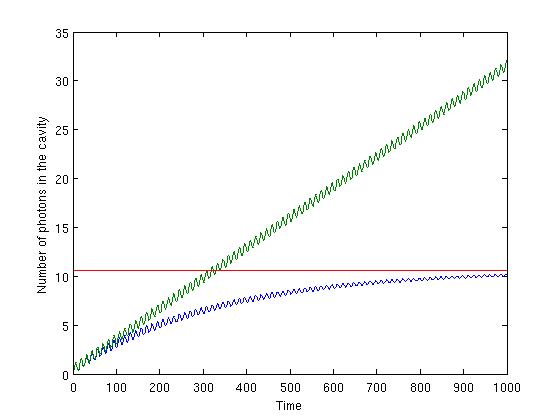}
\caption{In blue, we have plotted the photon number in an open cavity with parameters
 $\sigma_-=0.003$, $\sigma_+=0$, $\epsilon=0.5$, $\tau=0.5$, $p=\frac{1}{2}$,
 $\frac{\lambda^2}{|\mu|^2}=1$ and vanishing initial number of photons $N_{\sigma}(0)=0$.
The green curve is the mean-value of photons in the ideal cavity for the same parameters
and initial condition $N(0)=0$, except $\sigma_-=0$. The expressions for these quantities
are given by (\ref{number of photons_Hamiltonian}) for the ideal cavity and by
(\ref{mean-value}) for the open cavity.
The red line is the asymptotic value (\ref{lim-photons-number-sigma}) of the mean photon
number in the open cavity.}
\label{assymptotics}
\end{figure}
%(\ref{mean-value}) coincides with (\ref{mean-valueBIS})
As can be clearly seen in Figure \ref{assymptotics} (blue curve),
the mean photon number $N_{\sigma}(t)$ in the open cavity (\ref{mean-value}) is also initially
increasing linearly up to bounded oscillations, but then it stabilizes for large times.
Similarly to (\ref{rate_growth}) the rate of the growth in the open cavity with respect to the time $t=n\tau$
can be deduced from
\begin{align}\label{mean-value-rate}
&\frac{N_\sigma(n\tau)}{n\tau} =e^{-n(\sigma_--\sigma_+)\tau}\frac{N_\sigma(0)}{n\tau}\\
&+p(1-p)\frac{2\lambda^2}{|\mu|^2}(1-e^{-(\sigma_{-} - \sigma_{+})\tau/2}\cos \epsilon\tau)
\frac{1-e^{-n(\sigma_--\sigma_+)\tau}}{n\tau(1-e^{-(\sigma_--\sigma_+)\tau})}\nonumber\\
&+\frac{p^2}{n\tau}\frac{2\lambda^2}{|\mu|^2}(1-e^{-n(\sigma_{-} - \sigma_{+})\tau/2}\cos n\epsilon\tau)
+\frac{\sigma_+}{\sigma_--\sigma_+}\frac{1-e^{-n(\sigma_--\sigma_+)\tau}}{n\tau}\nonumber \ .
\end{align}
which for $\sigma_+ =0$ and then $\sigma_- =0$ gives (\ref{rate_growth}).

Let $N_\sigma(0)=0$ and $\sigma_+ =0$. Since $\mu =(\sigma_--\sigma_+)/2 + i \epsilon $, the short-time
behaviour of (\ref{mean-value-rate}) for $n(\sigma_--\sigma_+)\tau\ll 1$ is linear (modulo bounded oscillations):
\begin{align}\label{mean-value-rateBIS}
\frac{N_\sigma(n\tau)}{n\tau} &= p(1-p)\frac{2\lambda^2}{\tau\epsilon^2}(1-\cos \epsilon\tau) +
\frac{p^2}{n\tau} \, \frac{2\lambda^2}{\epsilon^2}(1-\cos n\epsilon\tau) + \\
&+\mathcal{O}((n(\sigma_--\sigma_+)\tau)^{2}) \nonumber \ ,
\end{align}
and asymptotically it is close to (\ref{rate_growthBIS}). This is clearly visible in the
Figure \ref{assymptotics} for $n< 100$.

There are several generalisations of the beam-cavity problem considered in this paper
that could be handled by suitable modifications of our methods. First, there is the
detuned case, when the distance between atoms is greater than the length of the cavity, $d>l$.
This is a situation where there is still at most one atom interacting with the cavity at
any given time, but there are time intervals without an atom present. The modifications
needed to analyze this situation are straightforward.
A more interesting generalization would be to consider random interatomic distances
$d\geq l$, when again still there is at most one atom in the cavity, but they arrive randomly.

Due to properties of the apparatus that produces the atom beam, one may expect short range
correlations in the chain of atoms. Such a correlated atomic beam can be described by a
classical Markov chain or, more generally, by a so-called Finitely Correlated
State \cite{FNW92}. Calculating the asymptotic behavior of the cavity in this situation
will be a bit more complicated but should still be doable.

We would like also to mention the following two other problems. The first is to consider
a cavity interaction with `soft atoms'. One of the possible interactions is of the form
$b^*\otimes \sigma^{-} + b \otimes \sigma^{+}$, i.e., interaction of the Jaynes-Cummings type.
Here in contrast to (\ref{W-int}) the atomic operators $\sigma^{\pm}:= (\sigma^{x} \pm i \sigma^{y})/2$
are \textit{off-diagonal}, constructed with the Pauli matrices $\sigma^{x}$ and $\sigma^{y}$.
For $p<1/2$ and for an ideal cavity this problem has been studied in \cite{BPi09}. It was shown that the
Jaynes-Cummings interaction allows the cavity state to converge to some explicit thermal state. So, the
pumping is saturated and the photon number expectation is well-defined and bounded.

The second problem is to calculate the entropy production for the Kossakowski-Lindblad open cavity and to determine
its relation to the energy flux in the frame work of our model. For a discussion of the entropy production problem 
for the quantum Markov semigroups see, e.g., the recent paper \cite{FaRe}.

%%%%%%%%%%%%%%%%%%%%%%%%%%%%%%%%%%%%%%%%%%%%%%%%%%%%%%%%%%%%%%%%%%%%%%%%%%%%%%%%%%%%%%%%%%%%%%%%%%%%%%%%%%%%%%%%%%%%%%%%%
\subsection*{Acknowledgments}
Our interest in open systems with repeated interactions was sparked by a nice series of lectures given by
Laurent Bruneau and Alain Joye on the Summer school on ``Non-equilibrium statistical mechanics"  (July 1-29, 2011)
CRM - Universit\'{e} de Montr\'{e}al. We would like to thank Alain Joye, Marco Merkli and Claude-Alain Pillet for
useful discussions of different aspects of the mathematical theory of the open systems.
Valentin Zagrebnov is thankful to the University of California - Davis for the warm hospitality
during his multiple visits, which allowed the authors to complete this project.

We are grateful to the referee for pointing out a few erroneous reasonings in the first version of our manuscript
as well as for useful remarks and suggestions. This friendly criticism motivated us to significantly revise the original manuscript.

%%%%%%%%%%%%%%%%%%%%%%%%%%%%%%%%%%%%%%%%%%%%%%%%%%%%%%%%%%%%%%%%%%%%%%%

\newpage

%%%%%%%%%%%%%%%%%%%%%%%%%%%%%%%%%%%%%%%%%%%%%%%%%%%%%%%%% Appendix 1 %%%%%%%%%%%%%%%%%%%%%%%%%%%%%%%%%%%%%%%%%%%%%%%%%%%%
\section*{Appendix: Open One-Mode Photon Cavity}\label{Appendix}

\noindent Here we collect some remarks and recall certain statements concerning the quantum theory of open systems, see
e.g. \cite{Dav1,AlFa,AJPII}. To this end we consider example of the open one-mode photon cavity
$\mathcal{C}$. This soluble example is useful for our discussion of the leaky cavity pumped by atomic beam starting on
Section \ref{IQD}.

\smallskip
%%%%%%%%%%%%%%%%%%%%%%%%%%%%%%%%%%%%%%%%%%%%%%%%%%%%%%%%%%%%%%%%%%%%%%%%%%%%%%%%%%%%%%%%%%%%%%%%%%%%%%%%%%%%%%%%%%%%%%%%%%
%%%%%%%%%%%%%%%%%%%%%%%%%%%%%%%%%%%%%%%%%%%%%%%%%%%%%%%%%%%% A1 %%%%%%%%%%%%%%%%%%%%%%%%%%%%%%%%%%%%%%%%%%%%%%%%%%%%%%%%%%
\noindent \textbf{A.1 Markovian Master Equation.}\label{A1}
We treat the cavity: $H_\mathcal{C} = \epsilon \, b^* b$, interacting with external reservoir $\mathcal{R}$ in the
framework of the Markovian approach \cite{AJPII,AJPIII}. Then $\mathcal{R}$ is a source of a \textit{leaking},
which decreases the cavity energy with the rate $\sigma_{-}$, or/and  of a \textit{pumping} with the rate $\sigma_{+}$.
The corresponding Markovian master equation for evolution of the normal cavity states $\omega^{t}_\mathcal{C}(\cdot)$
with trace-class density matrices $\rho_{\mathcal{C}}(t)\in \mathfrak{C}_{1}(\mathcal{H}_\mathcal{C})$ is extension
of the Hamiltonian dynamics by Kossakowski-Lindblad damping (leaking) and pumping terms \cite{Al,AlFa}:
\begin{align}\label{Damp-Pump-Cav}
&\frac{d}{dt} \rho_{\mathcal{C}}(t) = L_{\mathcal{C},\sigma} (\rho_{\mathcal{C}}(t)) :=
- i [H_\mathcal{C} , \rho_{\mathcal{C}}(t)]  \\
&+\frac{1}{2} \sigma_{-} ([b \rho_{\mathcal{C}}(t), b^*] + [b, \rho_{\mathcal{C}}(t) b^*]) +
\frac{1}{2} \sigma_{+} ([b^* \rho_{\mathcal{C}}(t), b] + [b^*, \rho_{\mathcal{C}}(t) b]) \ . \nonumber
\end{align}
Here $\rho_{\mathcal{C}}(t)\in {\rm{dom}}(L_{\mathcal{C},\sigma})$ and parameters $\sigma_{-}, \sigma_{+} \geq 0$.

Note that evolution (\ref{Damp-Pump-Cav}) for $\rho_{\mathcal{C}}:=\rho_{\mathcal{C}}(0)\in \mathfrak{C}_{1}
(\mathcal{H}_\mathcal{C})$:
\begin{equation}\label{Schr-Evol}
\mathcal{L}^{t}_{\mathcal{C},\sigma}: \rho_{\mathcal{C}} \mapsto \rho_{\mathcal{C}}(t) \ ,
\end{equation}
is the case of \textit{trace-norm} continuous semigroup: $\mathcal{L}^{t}_{\mathcal{C},\sigma}:=
e^{t \, L_{\mathcal{C},\sigma}}$ on the Banach space $\mathfrak{C}_{1}(\mathcal{H}_\mathcal{C})$, with \textit{unbounded}
generator $L_{\mathcal{C},\sigma}$, see e.g. \cite{Za}, Ch.2.4.

It is known that this case of quantum Markovian dynamics (\ref{Schr-Evol}) needs a special care, see e.g. \cite{Dav1},
\cite{Si,ChFa}, but for the open one-mode photon cavity $\mathcal{C}$ all necessary properties can be checked
explicitly.

Denote by $\widehat{L}_{\mathcal{C},\sigma}$ the {non-Hamiltonian} part of unbounded generator
(\ref{Damp-Pump-Cav}). Then for any $\rho \in \mathfrak{C}_{1}(\mathcal{H}_\mathcal{C})$ one has:
\begin{equation}\label{non-Ham-gen1}
\widehat{L}_{\mathcal{C},\sigma} (\rho) = \sum_{\alpha = \downarrow, \uparrow} \sigma_{\alpha} \{V_{\alpha} \rho
V_{\alpha}^{*} - \frac{1}{2} (V_{\alpha}^{*}V_{\alpha} \rho + \rho V_{\alpha}^{*}V_{\alpha})\} \ , \ V_{\downarrow} =
b \ , \ V_{\uparrow} =b^* \ .
\end{equation}
By virtue of the trace cyclicity this canonical form of the Kossakowski-Lindblad generator:
\begin{equation}\label{K-L-gen}
L_{\mathcal{C},\sigma} := - i \, [H_\mathcal{C} , \cdot] + \widehat{L}_{\mathcal{C},\sigma} \ \ ,
\end{equation}
ensures the \textit{trace-preserving} property of dynamics (\ref{Damp-Pump-Cav}):
\begin{equation}\label{Tr-inv}
\frac{d}{dt} \Tr_{{\mathcal{C}}} \rho_{\mathcal{C}}(t) = 0.
\end{equation}
Note that by virtue of (\ref{Damp-Pump-Cav}) the Markovian dynamics (\ref{Schr-Evol}) is also  {unity-preserving}:
$\mathcal{L}^{t}_{\mathcal{C},\sigma}(\idty) = \idty$ for $t\geq 0$.

To check another important property: the (complete) \textit{positivity} of the trace-norm continuous semigroup
$\{\mathcal{L}^{t}_{\mathcal{C},\sigma}\}_{t\geq0}$ on the space $\mathfrak{C}_{1}(\mathcal{H}_\mathcal{C})$, let us
present its generator (\ref{K-L-gen}) as $L_{\mathcal{C},\sigma} := \Phi - \Gamma$, where
\begin{align}\label{gen-Damp-Pump1}
&\Phi (\rho) := \sum_{\alpha = \downarrow, \uparrow} \sigma_{\alpha} \, V_{\alpha} \, \rho \, V_{\alpha}^{*} \  \ , \ \
\ \ \sigma_{\alpha} \geq 0 \ \ , \\
&\Gamma (\rho) := \Psi \, \rho + \rho \, \Psi^* \ \ \ {\rm{with}} \ \  \ \Psi := i \, H_\mathcal{C} +
\frac{1}{2} \sum_{\alpha = \downarrow, \uparrow} \sigma_{\alpha} \, V_{\alpha} \, V_{\alpha}^{*} \ . \label{gen-Damp-Pump2}
\end{align}
First we reduce our analysis to  {positivity} and we postpone the question concerning  {complete} positivity to
the end of this section and to the  {Heisenberg picture} of quantum dynamics (\ref{Schr-Evol}), see subsection A.2.

To see that dynamical semigroup $\{e^{t \, \Phi}\}_{t\geq0}$ with generator (\ref{gen-Damp-Pump1}) enjoy the property
of the
positivity, notice that by (\ref{gen-Damp-Pump1}) one gets for the trace-continuous maps
$\rho \mapsto \rho_{\Phi}(t) := e^{t \, \Phi} (\rho)$:
\begin{equation}\label{Comp-Posit1}
\frac{d}{dt} \rho_{\Phi}(t) = \Phi (\rho_{\Phi} (t)) =
\sum_{\alpha = \downarrow, \uparrow} \sigma_{\alpha} \, V_{\alpha} \, \rho_{\Phi} (t) \, V_{\alpha}^{*} \ .
\end{equation}
Let $\rho \in {\rm{dom}}(\Phi)\subset \mathfrak{C}_{1}(\mathcal{H}_\mathcal{C})$ and  $\rho \geq 0$. Then
(\ref{gen-Damp-Pump1}) implies that $\Phi (\rho) \geq 0$ and that equation (\ref{Comp-Posit1}) is positivity-preserving.
This yields positivity of the solution $\rho_{\Phi} (t)$ for $t\geq0$, if $\rho_{\Phi} (t=0)= \rho$.

To see that semigroup $\{e^{- t \, \Gamma}\}_{t\geq0}$ is also a family of positive mapping on
$\mathfrak{C}_{1}(\mathcal{H}_\mathcal{C})$ note that (\ref{gen-Damp-Pump2}) yields
\begin{equation}\label{Comp-Posit2}
\frac{d}{dt} \rho_{\Gamma}(t) = - \Gamma (\rho_{\Gamma}(t)) = - (\Psi \, \rho_{\Gamma}(t) + \rho_{\Gamma}(t) \, \Psi^* ) =
\frac{d}{dt} \left(e^{- t \, \Psi}\, \rho \, e^{- t \, \Psi^*} \right) \ .
\end{equation}
Then the mapping  $e^{- t \, \Gamma}: \rho \mapsto e^{- t \, \Psi} \rho \, e^{- t \, \Psi^*}$ is positive. For
$\rho \in \mathfrak{C}_{1}(\mathcal{H}_\mathcal{C})$ we denote by
$\rho_{\Gamma}(t):= e^{- t \, \Gamma} (\rho)$ the solution of (\ref{Comp-Posit2}). This operator is positive for
$\rho \geq 0$.
%The complete positivity again needs a complementary inspection for $V_{\downarrow} =b \ , \ V_{\uparrow} =b^*$,
%see \cite{Al} and \cite{AlFa}.

By virtue of (\ref{Comp-Posit1}) and (\ref{Comp-Posit2}) the composition of two maps:
$F(t): \rho \mapsto e^{t \, \Phi} (e^{- t \, \Gamma} (\rho))$, is a positive trace-norm continuous mapping.
Note that the powers of $F(t)$ are also positive maps. Then it is also true for dynamical semigroup with generator
(\ref{K-L-gen}), since by the Trotter product formula one gets
\begin{equation}\label{Trott1}
\mathcal{L}^{t}_{\mathcal{C},\sigma}= e^{t \, L_{\mathcal{C},\sigma}} = \|\cdot\|_{1}-
\lim_{n \rightarrow\infty} \left( e^{t \, \Phi/n} \ e^{- t \, \Gamma/n}\right)^n \ ,
\end{equation}
in the trace-norm topology \cite{Za}. This remark ensures, in particular, the $\mathfrak{C}_{1}$-continuity of
the limit (\ref{Trott1}).
Note also that by (\ref{Damp-Pump-Cav}) and (\ref{Schr-Evol}) the limit (\ref{Trott1}) is unity-preserving
(or {Markov}) semigroup.

Recall that essential in the concept of  {open systems} is a coupling of some  {small} sub-system with
"environment", which is a certain  {large} (even infinite) system. The mathematical description of this concept
involves a tensor product of corresponding configurations spaces and states or algebras of observables. Then the
positivity-preserving evolution due to the master equation for a coupled system, must be robust for forming the tensor
products \cite{Dav1}. Since tensor product of two positive maps might fail to be positive (\cite{AlFa}, Ch.8.3), the
evolution (\ref{Schr-Evol}) has to verify a stronger condition than positivity established in (\ref{Trott1}).

%%%%%%%%%%%%%%%%%%%%%%%%%%%%%%%%%%%%%%%%%%%%%%%% Compl-Posit %%%%%%%%%%%%%%%%%%%%%%%%%%%%%%%%%%%%%%%%%%%%%%%%%%%%%%%%%%%%
We recall now the  {complete} positivity property and constrains that it implies on dynamical semigroup
(\ref{Schr-Evol}) and on its generator \cite{Dav1,AlFa,AJPII}.

Let $T: \mathfrak{A}^{(1)} \rightarrow \mathfrak{A}^{(2)}$ be a  {positive}
linear map between two $C^*$-algebras, i.e. $T(A)\geq 0$, where $A \in \mathfrak{A}^{(1)}$ and $A\geq0$.
For $k= 1,2$ and for $n\in\mathbb{N}$, let
\begin{equation}\label{A-n}
\mathcal{M}_{n}(\mathfrak{A}^{(k)}) \simeq \mathfrak{A}^{(k)} \otimes \mathcal{M}(\mathbb{C}^n)  \ ,
\end{equation}
be algebra of $n \times n$ matrices with entries in $\mathfrak{A}^{(k)}$. Each of
$\mathcal{M}_{n}(\mathfrak{A}^{(k)})$ is also a $C^*$-algebra. If we denote by  ${\rm{Id}}_{n}$ the {identity}
matrix from $\mathcal{M}(\mathbb{C}^n)$, then
\begin{equation}\label{compl-posit}
T_{n}:= T\otimes {\rm{Id}}_{n}: \mathcal{M}_{n}(\mathfrak{A}^{(1)}) \rightarrow
\mathcal{M}_{n}(\mathfrak{A}^{(2)})\ ,
\end{equation}
defines a linear map by acting with $T$ on  {each} of the matrix element of the operator-valued matrix
${A}_{n}\in \mathcal{M}_{n}(\mathfrak{A}^{(1)})$. The positive map $T$ is called $n$-positive (respectively
 {completely} positive), if operator $T_{n}$ is positive (respectively (\ref{compl-posit}) are positive for all
$n\geq1$). For $n=1$ it obviously reduces to the positive map.
%%%%%%%%%%%%%%%%%%%%%%%%%%%%%%%%%%%%%%%%%%%%%%%%%%%%% CP Example %%%%%%%%%%%%%%%%%%%%%%%%%%%%%%%%%%%%%%%%%%%%%%%%%%%%%%%%
\begin{example}\label{CP-example}
A simple example shows that this property is quite non-trivial. (For more of them we refer to \cite{AlFa}.) Let
$\mathfrak{A}^{(k =1,2)} = \mathcal{M}(\mathbb{C}^2)$ be $C^*$-algebra of square complex matrices. Then the map
of matrices
$\mathcal{M}(\mathbb{C}^2)$ to the \textit{adjoint}, $T_{adj}: A \rightarrow A^* $ is obviously positive. Denote by
$\{E_{ij} \in \mathcal{M}(\mathbb{C}^2)\}_{i,j = 1,2}$ the set of matrices with $1$ in the $ij$-th entry and zeros
elsewhere, i.e. ${\rm{Id}}_{2} = \sum_{i,j = 1,2} E_{ij}$. To verify whether the map $T_{adj}$ is $2$-positive we
consider the algebra (\ref{A-n}):
\begin{equation}\label{A-2}
\mathcal{M}_{2}(\mathcal{M}(\mathbb{C}^2)) \simeq \mathcal{M}(\mathbb{C}^2)\otimes \mathcal{M}(\mathbb{C}^2)
\simeq  \mathcal{M}(\mathbb{C}^4)\ ,
\end{equation}
and the element $E:= \sum_{i,j = 1,2} E_{ij} \bigotimes E_{ij}\in \mathcal{M}(\mathbb{C}^4)$. Since $E =E^*$ and
$E^2 = 2 E$, it is positive $E \geq 0$. On the other hand by definition (\ref{compl-posit}) one gets
\begin{equation}\label{T-2}
T_{{adj},2}:= T_{adj}\otimes {\rm{Id}}_{2}: E \rightarrow \begin{bmatrix}
                                                T_{adj}(E_{11}) & T_{adj}(E_{12}) \\
                                                T_{adj}(E_{21}) & T_{adj}(E_{22}) \\
                                              \end{bmatrix}
= \begin{bmatrix}
    1 & 0 & 0 & 0 \\
    0 & 0 & 1 & 0 \\
    0 & 1 & 0 & 0 \\
    0 & 0 & 0 & 1 \\
  \end{bmatrix}
\ .
\end{equation}
The matrix $T_{{adj},2}(E)$ in (\ref{T-2}) is \textit{not} positive, since its spectrum contains ($-1$). Therefore,
the map $T_{adj}$ is \textit{not} $2$-positive either.
\end{example}
%%%%%%%%%%%%%%%%%%%%%%%%%%%%%%%%%%%%%%%%%%%%%%%%%%%%%%%%%%%%%%%%%%%%%%%%%%%%%%%%%%%%%%%%%%%%%%%%%%%%%%%%%%%%%%

One of the important corollary imposed by demand that quantum Markovian dynamics on the Banach space
$\mathfrak{C}_{1}(\mathcal{H})$:
\begin{equation}\label{QMD}
\mathcal{L}^{t} = e^{t \, L} : \rho \mapsto \rho (t) \ , \  \
\rho (0)= \rho \in \mathfrak{C}_{1}(\mathcal{H}) \ ,
\end{equation}
must be \textit{completely} positive, is certain restrictions on the form of the generator $L$, cf.
(\ref{Schr-Evol}) and (\ref{K-L-gen}). For the case of bounded generator, when semigroup (\ref{QMD}) is
continuous in the operator-norm topology of mappings on $\mathfrak{C}_{1}(\mathcal{H})$, this is just a
celebrated Kossakowski-Lindblad result saying that the most general form of $L$ is
\begin{equation}\label{K-L-gen-I}
L:= - i \, [H , \cdot] + \widehat{L} \ ,
\end{equation}
where non-Hamiltonian part can be presented as
\begin{align}\label{K-L-gen-II}
\widehat{L}(\rho):&=\frac{1}{2}\sum_{\alpha}  ( [\widehat{V}_{\alpha} \rho \, , \, \widehat{V}_{\alpha}^{*}] +
[\widehat{V}_{\alpha}\,, \rho \, \widehat{V}_{\alpha}^{*}] ) \\
&=\sum_{\alpha} \{\widehat{V}_{\alpha} \rho \widehat{V}_{\alpha}^{*} -
\frac{1}{2} (\widehat{V}_{\alpha}^{*}\widehat{V}_{\alpha} \, \rho + \rho \,
\widehat{V}_{\alpha}^{*}\widehat{V}_{\alpha})\} \ ,
\nonumber
\end{align}
see, e.g., \cite{AlFa}, Ch.8. Note that the choice of bounded operators $H = H^*$ and
$\{\widehat{V}_{\alpha}\}_{\alpha}$ in representation (\ref{K-L-gen-II}) is not unique.

The contact between (\ref{K-L-gen-I}), (\ref{K-L-gen-II}) and representations  (\ref{non-Ham-gen1}),(\ref{K-L-gen})
as well as with (\ref{gen-Damp-Pump1}), (\ref{gen-Damp-Pump2}) related to semigroups $\{e^{t \, \Phi}\}_{t\geq0}$,
$\{e^{- t \, \Gamma}\}_{t\geq0}$ follows through verbatim. Then taking into account that
the Stinespring theorem (\cite{AlFa}, Ch.8) implies the complete positivity of these two semigroups and using the
Trotter product formula one can check the complete positivity of (\ref{QMD}).

The present results for unbounded $H$ and/or $\widehat{L}$ are more poor. Certain classes of
strongly continuous on $\mathfrak{C}_{1}(\mathcal{H})$, contraction semigroups (\ref{QMD}) have been studied
in \cite{Dav1} and \cite{EvLe1,EvLe2} under condition that operator
$\sum_{\alpha} \widehat{V}_{\alpha}^{*}\widehat{V}_{\alpha}$ is generator of a strongly continuous contraction
semigroup on $\mathcal{H}$. For further developments see, e.g., \cite{Dav2,Si,Hol,ChFa}.

The Kossakowski-Lindblad representation (\ref{K-L-gen-I}), (\ref{K-L-gen-II}) explains our choice of the right-hand side
in the Markovian Master Equation (\ref{Damp-Pump-Cav}), but appeals for a concrete  verification of the
complete positivity of quantum Markovian dynamics (\ref{Schr-Evol}). Similar to other known cases of the
Kossakowski-Lindblad type generators \cite{FrVe}, this property of $\mathcal{L}^{t}_{\mathcal{C},\sigma}$ follows
directly from explicit calculations.

\smallskip
%%%%%%%%%%%%%%%%%%%%%%%%%%%%%%%%%%%%%%%%%%%%%%%%%%%%%%%%%%%%%%%%%%%%%%%%%%%%%%%%%%%%%%%%%%%%%%%%%%%%%%%%%%%%%%%%%%%%%%%%%
%%%%%%%%%%%%%%%%%%%%%%%%%%%%%%%%%%%%%%%%%%%%%%%%%%%%%%%% A2 %%%%%%%%%%%%%%%%%%%%%%%%%%%%%%%%%%%%%%%%%%%%%%%%%%%%%%%%%%%%%
\noindent \textbf{A.2 Dual Dynamical Map (Heisenberg Picture).}
The equivalent (and often more convenient) is the abstract version of the reduced Markovian dynamics A.1 on the
algebra of observables $\mathfrak{A}(\mathcal{H}_\mathcal{C})$, i.e. the {quantum dynamical semigroup}
$\mathcal{L}^{t \,*}_{\mathcal{C},\sigma}:=(\mathcal{L}^{t}_{\mathcal{C},\sigma})^{*}$  in the \textit{dual} Heisenberg
picture \cite{AlFa,AJPI}.

Recall that in a general setting the $C^*$-dynamical system is a pair $(\mathfrak{A}, \tau^t)$ with two
properties. First, $\mathfrak{A}$ is a unital $C^*$-algebra, i.e. $\idty \in \mathfrak{A}$. Second, $\tau^t$ is a
 {strongly} continuous one-parameter $*$-automorphism of $\mathfrak{A}$, i.e. for any $A \in \mathfrak{A}$ the map:
$t \mapsto \tau^t(A)$, is continuous in the {norm} topology of $\mathfrak{A}$.

Since quantum states belong to trace-class $\mathfrak{C}_{1}(\mathcal{H})$, which is not a unital $C^*$-algebra, this
framework is not satisfactory for Markovian dynamics (\ref{QMD}) on the Banach space $\mathfrak{C}_{1}(\mathcal{H})$,
although (\ref{QMD}) is a strongly continuous $*$-automorphism even for unbounded generator $L$.

To define a dynamics, which is {dual} to the Schr\"{o}dinger picture (\ref{QMD}), we recall that semigroup
$\{\mathcal{L}^{t}\}_{t\geq0}$ serves to calculate evolution of the \textit{normal} states
$\{\omega^{t}(\cdot)\}_{t\geq0}$ on observables  $A \in \mathfrak{B}(\mathcal{H})$:
\begin{equation}\label{Heis-Evol-abstr}
\omega^{t}(A) = \Tr_{\cH}(\mathcal{L}^{t} (\rho) \, A)=: \langle \mathcal{L}^{t} (\rho), A \rangle
 \ , \  \ \rho \in \mathfrak{C}_{1}(\mathcal{H}) \ .
\end{equation}
Here $\mathfrak{B}(\mathcal{H})$ denote the $C^*$-algebra of bounded operators on the Hilbert space $\mathcal{H}$ and
$\rho$ is a density-matrix operator with the trace-norm $\|\rho\|_{\mathfrak{C}_{1}} =1$.

Recall that by virtue of (\ref{Heis-Evol-abstr}) the Banach space of bounded operators on $\mathcal{H}$ is topologically
{dual} of $\mathfrak{C}_{1}(\mathcal{H})$: $\mathfrak{B}(\mathcal{H})= (\mathfrak{C}_{1}(\mathcal{H}))^*$.
This means that the map $A \mapsto \langle \cdot , A\rangle$ is an isometric isomorphism of $\mathfrak{B}(\mathcal{H})$
onto the set of linear continuous functionals $(\mathfrak{C}_{1}(\mathcal{H}))^*$, defined on the space
$\mathfrak{C}_{1}(\mathcal{H})$. Then semi-norms generated by this duality
\begin{equation}\label{semi-norms}
\{\mathfrak{N}_{\rho} (A) : = |\langle \rho , A \rangle|\}_{\rho \in \mathfrak{C}_{1}(\mathcal{H})} \ ,
\ \  A \in \mathfrak{B}(\mathcal{H}) \ ,
\end{equation}
define on the Banach space $\mathfrak{B}(\mathcal{H})$ the weak$^*$-topology, which coincides with the  {operator}
$\sigma$-weak topology, and for the operator norm of $A$ one gets:
\begin{equation}\label{op-norm}
\|A\| = \sup_{\rho \in \mathfrak{C}_{1}(\mathcal{H})} \frac{|\langle \rho , A \rangle|}{\|\rho\|_{\mathfrak{C}_{1}}} \ .
\end{equation}
see e.g. \cite{ReSiI} or \cite{AJPI}. Duality (\ref{Heis-Evol-abstr}) defines also the \textit{adjoint} semigroup
$\{\mathcal{L}^{t \,*}\}_{t\geq0}$ on the dual space $(\mathfrak{C}_{1}(\mathcal{H}))^* = \mathfrak{B}(\mathcal{H})$:
\begin{equation}\label{dual-semigroup}
\langle \mathcal{L}^{t} (\rho) , A \rangle = \langle \rho , \mathcal{L}^{t \,*}(A) \rangle \ , \ \
\rho \in \mathfrak{C}_{1}(\mathcal{H})\ , \   A \in (\mathfrak{C}_{1}(\mathcal{H}))^* \ .
\end{equation}
In general, the adjoint semigroup is  {not} strongly continuous on the dual Banach space
$(\mathfrak{C}_{1}(\mathcal{H}))^*$, although (\ref{dual-semigroup}) and the strong continuity of (\ref{QMD}) trivially
imply the weak$^*$-continuity of $\{\mathcal{L}^{t \,*}\}_{t\geq0}$ on this space.

Even thought $\{\mathcal{L}^{t \,*}\}_{t\geq0}$ is not necessarily strongly continuous on $\mathfrak{B}(\mathcal{H})$,
one can still associate with this semigroup a generator $\widetilde{L}$ in the weak$^*$-topology:
\begin{equation}\label{L-hat}
\widetilde{L} \, A := w^*-\lim_{t\rightarrow +0} \frac{1}{t}(\mathcal{L}^{t \,*} A - A ) \ ,
\end{equation}
with domain
\begin{equation}\label{dom-L-hat}
{\rm{dom}}(\widetilde{L}):= \{A \in \mathfrak{B}(\mathcal{H}):
w^*-\lim_{t\rightarrow +0} \frac{1}{t}(\mathcal{L}^{t \,*} A - A ) \ \  \exists  \ \} \ .
\end{equation}
It turns out that generator $\widetilde{L}$ is weak$^*$-densely defined, closed operator, which coincides,
$\widetilde{L} A = L^* A$, with the adjoint operator $L^*$, i.e.
\begin{align}\label{dom-L-adj}
&{\rm{dom}}(\widetilde{L})= {\rm{dom}}({L}^*):= \\
&\{A \in \mathfrak{B}(\mathcal{H}):  \exists  \ A' \in \mathfrak{B}(\mathcal{H}) \ s.t. \
\langle \rho , A' \rangle = \langle L (\rho) , A \rangle \ {\rm{for \ all}} \  \rho \in {\rm{dom}}({L}) \} \ .
\nonumber
\end{align}
Although the weak$^*$-topology is even  \textit{weaker} than the  {weak} topology on the Banach space
$\mathfrak{B}(\mathcal{H})$ the above arguments make legitime the characterization of semigroup
$\{\mathcal{L}^{t \,*}:= e^{t \, {L}^*}\}_{t\geq0}$ by a  {generator} in a close similarity to  {strongly}
continuous case, see \cite{BrRo1}, Ch.3 for further details. The minimal price is that instead of the  $C^*$-algebra
of bounded operators $\mathfrak{B}(\mathcal{H})$ one must consider this space endowed by a  {weaker}, namely the
weak$^*$-topology.

Recall that \textit{von Neumann algebra} is a $C^*$-algebra acting on $\mathcal{H}$, containing identity operator
and closed in the  {weak operator} topology. It has enough room for the weak$^*$-continuous semigroup
$\{e^{t \, {L}^*}\}_{t\geq0}$. The Banach space $\mathfrak{B}(\mathcal{H})$ is example of a von Neumann algebra, while
the Banach space $\mathfrak{C}_{1}(\mathcal{H})$ is evidently not.

Let $\mathfrak{M} \subseteq \mathfrak{B}(\mathcal{H})$ be a von Neumann algebra and the weak$^*$-continuous for all
$A\in\mathfrak{M}$ map $t \mapsto \tau^{t}(A)$ be a (semi)group of $*$-automorphisms of $\mathfrak{M}$. Then the
pair ($\mathfrak{M}, \tau^{t}$) is called a $W^*$-dynamical system. In our case $\tau^{t} = \mathcal{L}^{t \,* }$.

If the semigroup (\ref{QMD}) is trace-preserving (\ref{Tr-inv}), then the  adjoint semigroup (\ref{dual-semigroup}) is a
unity-preserving ($\mathcal{L}^{t \,* } (\idty) = \idty $) contraction: $\|\mathcal{L}^{t \,* } (A)\| \leq \|A\|$, see
(\ref{op-norm}). This dual map inherits the property to be a  {completely positive} semigroup, which was
established for Markovian dynamics (\ref{QMD}), see \cite{Dav1} and \cite{Dav2}. By consequence, one gets for
generator ${L}^*$ the analogue of the Kossakowski-Lindblad representation  \cite{FrVe,AlFa}:
\begin{equation}\label{K-L-gen-III}
{L}^*(A) = i \, [H , A] + \frac{1}{2}\sum_{\alpha}  ( \widehat{V}_{\alpha}^{*} [A , \widehat{V}_{\alpha}] +
[\widehat{V}_{\alpha}^{*}\,, A ] \widehat{V}_{\alpha}) \ , \ \ A \in \mathfrak{B}(\mathcal{H}) \ ,
\end{equation}
see (\ref{K-L-gen-I}), (\ref{K-L-gen-II}) and (\ref{dom-L-adj}).

Duality (\ref{dual-semigroup}) is useful for control the state evolution $\mathcal{L}^{t} (\rho)$ and,
in particular, for the proof of the $t\rightarrow\infty$ limit $\rho_{\infty}$ by calculation of this limit on
observables. Since for any $\tau \geq 0$ and $\rho \in \mathfrak{C}_{1}(\mathcal{H})$ , $A \in \mathfrak{B}(\mathcal{H})$:
\begin{equation}\label{steady}
\Tr_{\cH}(\rho_{\infty} \, A) = \lim_{t\rightarrow\infty} \Tr_{\cH}(\mathcal{L}^{\tau+t}(\rho), A) =
\Tr_{\cH}(\mathcal{L}^{\tau}(\rho_{\infty}) , A ) \ ,
\end{equation}
we conclude that $\rho_{\infty}$ is $\mathcal{L}^{\tau}$-invariant (\textit{steady}) state. To elucidate topology of the
density matrix convergence, recall that for the Kossakowski-Lindblad generator (\ref{K-L-gen-I}), (\ref{K-L-gen-II})
dynamics of state is trace-preserving (\ref{Tr-inv}): $\|\mathcal{L}^{t}(\rho)\|_{\mathfrak{C}_{1}} = 1$, and that
(\ref{steady}) implies the \textit{weak-operator} convergence $\mathcal{L}^{t}(\rho)\rightarrow \rho_{\infty}$. Then
(\ref{steady}) is equivalent to the trace-norm convergence of density matrices:
\begin{equation}\label{Tr-norm-conv}
\lim_{t\rightarrow\infty} \|\mathcal{L}^{t}(\rho) - \rho_{\infty}\|_{\mathfrak{C}_{1}} = 0 \ ,
\end{equation}
see e.g. \cite{Za}, Ch.2.4.
\smallskip

%%%%%%%%%%%%%%%%%%%%%%%%%%%%%%%%%%%%%%%%%%%%%%%%%%%%%%%%%%%%%%%%%%%%%%%%%%%%%%%%%%%%%%%%%%%%%%%%%%%%%%%%%%%%%%%%%%%%%%%%%%
%%%%%%%%%%%%%%%%%%%%%%%%%%%%%%%%%%%%%%%%%%%%%%%%%%%%%%%% A3 %%%%%%%%%%%%%%%%%%%%%%%%%%%%%%%%%%%%%%%%%%%%%%%%%%%%%%%%%%%%%%
\noindent \textbf{A.3 $W^*$-Dynamics, Steady States and Return to Equilibrium.}
For open cavity (\ref{Damp-Pump-Cav}) the choice of operators $\{V_\alpha\}_\alpha$ in (\ref{K-L-gen-III}) is defined by
(\ref{non-Ham-gen1}). Since the cavity is a boson system, the natural (Fock) representation of the  {Canonical
Commutation Relations} (CCR) involves  {one-mode} unbounded creation and annihilation operators $b^*$ and $b$ acting
on the Hilbert
space $\mathcal{H} = \mathcal{H}_\mathcal{C}$. It is a boson Fock space $\mathcal{H} = \mathfrak{F}_{B}(\mathbb{C})$ over
the  {one-dimensional} subspace $\mathfrak{h}=\{\zeta \, \phi\}_{\zeta\in\mathbb{C}}$ (of a Hilbert space)
corresponding to this one photon mode $\phi$.

To avoid the problems with unbounded operators and to keep the evolution of observables in the space
$\mathfrak{B}(\mathcal{H})$ one considers the corresponding $*$-algebra of bounded Weyl operators \cite{BrRo2} in the
form:
\begin{equation}\label{Weyl-I}
\mathfrak{W}(\mathbb{C}):= \left\{ W(\zeta)=
\exp \left[ \frac{i}{\sqrt{2}}\ (\overline{\zeta} \, b + \zeta \, b^*) \right] \right\}_{\zeta\in \mathbb{C}} \ ,
\end{equation}
that verify the Weyl CCR-relations
\begin{equation}\label{Weyl-II}
W(\zeta_1) W(\zeta_2) =  e^{- i \, {\rm{Im}}(\overline{\zeta_1} \,  \zeta_2)/2 } \, W(\zeta_1 + \zeta_2)  \ ,
\end{equation}
Note that the family $\{ W(\zeta)\}_{\zeta\in \mathbb{C}}$ is continuous on $\mathcal{H}$ in the  {strong operator}
sense, but it is  {not} continuous in the $C^*$-algebra topology: $\|W(\zeta) - \idty \| = 2$ for any $\zeta \neq 0$.

By virtue of (\ref{non-Ham-gen1}) and (\ref{K-L-gen-II}) one gets for the adjoint generator (\ref{K-L-gen-III}) of the
one-mode open cavity with $H_{\mathcal{C}} = \epsilon \, b^*b$ :
\begin{align}\label{K-L-gen-IV}
&{L}^{*}_{\mathcal{C},\sigma}(A) = i \, [\epsilon \, b^* b, A]  \\
&+\frac{1}{2} \left(\sigma_{-} b^{*} [A , b] + \sigma_{-} [b^{*}\,, A ] b  +
\sigma_{+} b [A , b^{*}] + \sigma_{+} [b\,, A ] b^{*} \right) \ , \ \ A \in \mathfrak{B}(\mathcal{H}) \ .
\nonumber
\end{align}
Then the adjoint semigroup equation:
\begin{equation}\label{cavity-adj-evol}
\partial_{t} \mathcal{L}^{t \,* }_{\mathcal{C},\sigma} \, (A) =
\mathcal{L}^{t \,* }_{\mathcal{C},\sigma} \,({L}^{*}_{\mathcal{C},\sigma}(A)) \ ,
\end{equation}
allows to calculate evolution of the Weyl operators (\ref{Weyl-I}): $A = W(\zeta)$,  explicitly:
\begin{equation}\label{Weyl-III}
\mathcal{L}^{t \,* }_{\mathcal{C},\sigma} \, (W(\zeta)) = e^{- \Omega_{t}(\zeta)} \  W(\zeta (t)) \ .
\end{equation}
Here
\begin{equation}\label{Weyl-IV}
\Omega_{t}(\zeta) := \frac{|\zeta|^2}{4} \, \frac{\sigma_{-} + \sigma_{+}}{\sigma_{-} - \sigma_{+}}
\left\{1 - e^{- (\sigma_{-} - \sigma_{+}) t}\right\}  \ , \
\zeta (t) := \zeta \ e^{i \, \epsilon t - (\sigma_{-} - \sigma_{+}) t /2} \  .
\end{equation}

Since $\|W(\zeta) - \idty \| = 2$, the evolution (\ref{Weyl-III}) is not continuous in the $C^*$-algebra topology, but
it does in the weak$^*$-topology on the von Neumann algebra $\overline{\mathfrak{W}(\mathbb{C})}$ generated by
(\ref{Weyl-I}) and the weak operator closure. Hence, the pair $(\overline{\mathfrak{W}(\mathbb{C})},
\mathcal{L}^{t \,* }_{\mathcal{C},\sigma})$ is $W^*$-dynamical system, see \cite{BrRo1,AJPI}.

Note that by differentiating of $\mathcal{L}^{t \,* }_{\mathcal{C},\sigma} \, (W(\zeta))$ with respect to
$\zeta$ and $\overline{\zeta}$  one can calculate the evolution of polynomials of creation-annihilation operators
in the weak$^*$-topology. For example of the photon number operator
$N(t):=\mathcal{L}^{t \,* }_{\mathcal{C},\sigma}(b^* b)$, the formal evolution equation follows directly from
(\ref{K-L-gen-IV}) for $A= b^* b$, and from (\ref{cavity-adj-evol}):
\begin{equation}\label{Eq-N(t)}
\partial_{t}N(t) = - (\sigma_{-} - \sigma_{+}) N(t) + \sigma_{+} \ , \ N(t=0) = b^* b \ .
\end{equation}
If $\sigma_{-} > \sigma_{+}$ (leaking is stronger then pumping), then one gets
\begin{equation}\label{N-phot}
N(t) = e^{- (\sigma_{-} - \sigma_{+}) t} \ b^* b + \frac{\sigma_{+}}{\sigma_{-} - \sigma_{+}}
\left\{1 - e^{- (\sigma_{-} - \sigma_{+}) t}\right\} \ .
\end{equation}
By consequence, (\ref{N-phot}) formally implies
\begin{equation}\label{Conv-N}
\lim_{t\rightarrow \infty} \mathcal{L}^{t \,* }_{\mathcal{C},\sigma} \, ( b^* b ) =
\idty  \ \frac{\sigma_{+}}{\sigma_{-} - \sigma_{+}} \ ,
\end{equation}
and (\ref{Weyl-III}) and (\ref{Weyl-IV}) yield
\begin{equation}\label{Conv-Equil}
w^*-\lim_{t\rightarrow \infty} \mathcal{L}^{t \,* }_{\mathcal{C},\sigma} \, (W(\zeta)) =
\idty \ \exp\{- \frac{|\zeta|^2}{4}  \ \frac{\sigma_{-} + \sigma_{+}}{\sigma_{-} - \sigma_{+}}\} \ ,
\end{equation}
in the weak$^*$-topology.

Let the  {initial} cavity state $\rho$ be such that $\overline{N}(0):=\Tr_{\cH_\mathcal{C}}(\rho \,
b^* b) < \infty$, and $\sigma_{-} > \sigma_{+} > 0$. Then the limit (\ref{Conv-N}) implies a nontrivial stationary
expectation value of the photon number:
\begin{align}\label{state-I}
&\overline{N}(t) := \lim_{t\rightarrow \infty} \Tr_{\cH_\mathcal{C}}(\rho \, \mathcal{L}^{t \,*} ( b^* b))=
\lim_{t\rightarrow \infty} \Tr_{\cH_\mathcal{C}}(\mathcal{L}^{t}(\rho) \,  b^* b) \\
&= \Tr_{\cH_\mathcal{C}}(\rho_{\infty} \,  b^* b) =  \frac{\sigma_{+}}{\sigma_{-} - \sigma_{+}} \ , \nonumber
\end{align}
in the  {limiting} cavity state $\rho_{\infty} := \lim_{t\rightarrow \infty} \mathcal{L}^{t}(\rho)$. Similarly,
by (\ref{Conv-Equil}) one gets:
\begin{align}\label{state-II}
&\lim_{t\rightarrow \infty} \Tr_{\cH_\mathcal{C}}(\rho \, \mathcal{L}^{t \,*} (W(\zeta))=
\lim_{t\rightarrow \infty} \Tr_{\cH_\mathcal{C}}(\mathcal{L}^{t}(\rho) \, W(\zeta))\\
&= \Tr_{\cH_\mathcal{C}}(\rho_{\infty} \,  W(\zeta)) =
\exp\{- \frac{|\zeta|^2}{4} \ \frac{\sigma_{-} + \sigma_{+}}{\sigma_{-} - \sigma_{+}}\} \ . \nonumber
\end{align}
By (\ref{Weyl-I}) and (\ref{state-I}), (\ref{state-II}) we obtain that the limiting (\textit{steady}) density matrix
$\rho_{\infty}$ of the open cavity corresponds to the one-mode boson equilibrium state:
\begin{equation}\label{lim-state}
\rho_{\infty} =: \rho^{\beta_{cav}} =
(1- e^{- \beta_{{\rm{cav}}} \epsilon} ) \ e^{- \beta_{{\rm{cav}}} \epsilon \ b^* b} \ , \ \ \ \
\beta_{{\rm{cav}}} := \frac{1}{\epsilon} \ln \frac{\sigma_{-}}{\sigma_{+}} \ .
\end{equation}

Therefore, if $\sigma_{-} > \sigma_{+} > 0$ and $\epsilon > 0$, the one-mode pumped leaky cavity evolves from
\textit{any} initial state verifying $\overline{N}(0)< \infty$, to the \textit{equilibrium} Gibbs state with temperature
$\theta_{cav} = 1/\beta_{{cav}}$ entirely defined by the leaking-pumping intensities $\sigma_{\mp}$.
For this limit of {return} to the thermal  {equilibrium} one can distinguish two intuitively clear
extreme cases. The zero-temperature case for  {zero} pumping: $\sigma_{+} = 0$, and the \textit{infinite}
temperature case, when the pumping is not dominated by the leaking: $\sigma_{+} \uparrow \sigma_{-}$. In the first
case the photon-number mean-value (\ref{state-I}) is zero, whereas it is infinite in the second case.

Note that the \textit{time} and the \textit{pumping} limits do {not} commute. Let the initial expectation of
photons in the cavity $\overline{N}(0) < \infty$. Then applying the limit $\sigma_{+} \uparrow \sigma_{-}$ to the
expectation of (\ref{N-phot}) we obtain
\begin{equation}\label{photon-linear}
\lim_{\sigma_{+} \uparrow \sigma_{-}} \overline{N}(t) =  \overline{N}(0) + \sigma_{+} \ t    \ .
\end{equation}
Therefore, one has a \textit{linear} asymptotic increasing of the photon-number mean-value in the limit, when the
leaking and pumping rates coincide: $\sigma_{-} = \sigma_{+} >0$.

\smallskip
%%%%%%%%%%%%%%%%%%%%%%%%%%%%%%%%%%%%%%%%%%%%%%%%%%%%%%%%%%%%%%%%%%%%%%%%%%%%%%%%%%%%%%%%%%%%%%%%%%%%%%%%%%%%%%%%%%%%%%%%%
%%%%%%%%%%%%%%%%%%%%%%%%%%%%%%%%%%%%%%%%%%%%%%%%%%%%%%%% A4 %%%%%%%%%%%%%%%%%%%%%%%%%%%%%%%%%%%%%%%%%%%%%%%%%%%%%%%%%%%%%
\noindent \textbf{A.4 Completely Positive Quasi-Free Dynamics on the $CCR$ Algebra.}
We need a more abstract description of dynamics $\mathcal{L}^{t \,* }_{\mathcal{C},\sigma}$ on the $CCR$ algebra
$\overline{\mathfrak{W}(\mathbb{C})}$ generated by (\ref{Weyl-I}).

Notice that the  {complete positivity} of the map $\mathcal{L}^{t \,* }_{\mathcal{C},\sigma}:
\overline{\mathfrak{W}(\mathbb{C})} \rightarrow \overline{\mathfrak{W}(\mathbb{C})}$ follows from general properties
of the Kossakowski-Lindblad generator (\ref{K-L-gen-III}). For the case of the open cavity this result follows directly
from CCR-relations (\ref{Weyl-II}) and the explicit result (\ref{Weyl-III}). Behind this result there are abstract
observations due to \cite{DVV1,DVV2} and \cite{EvLe2,Van}.

Let $\mathfrak{H}$ be a Hilbert space. Denote by $\Gamma: \mathfrak{H} \rightarrow \mathfrak{H}$ a linear map and by
$\Psi: \mathfrak{H} \rightarrow \mathbb{C}$ a complex function. Consider the Weyl CCR($\mathfrak{H}$) algebra generated by
unitaries:
\begin{equation}\label{Weyl-V}
\mathfrak{W}(\mathfrak{H}):= \left\{ W(f) =
\exp \left[ \frac{i}{\sqrt{2}}\ ( b(f) +  b^{*}(f)) \right] \right\}_{f\in \mathfrak{H}} \ .
\end{equation}
Here linear and anti-linear functions: $f \mapsto b^{*}(f)$ and $f \mapsto b(f)$, are creation and annihilation
operators in the boson Fock space $\mathfrak{F}_{B}(\mathfrak{H})$ over $\mathfrak{H}$. Then the Weyl CCR-relations take
the form
\begin{equation}\label{Weyl-VI}
W(f) W(g) =  e^{- i \, {\rm{Im}}(f,g)_{\mathfrak{H}}/2 } \, W(f + g)  \ , \ f,g\in \mathfrak{H} \ ,
\end{equation}
where $(\cdot , \cdot)_{\mathfrak{H}}$ is the scalar product in $\mathfrak{H}$.

Recall now that a linear map $T: \mathfrak{W}(\mathfrak{H}) \rightarrow \mathfrak{W}(\mathfrak{H})$ on the
CCR($\mathfrak{H}$) algebra (\ref{Weyl-V}) is called \textit{quasi-free}, if it has the form, \cite{DVV1,DVV2}:
\begin{equation}\label{T-quasi-free1}
T(W(f))= \Psi(f) \ W(\Gamma(f)) \ , \ f\in \mathfrak{H} \ .
\end{equation}
Then the unity-preserving \textit{quasi-free} \textit{semigroup} $\{T_{t}\}_{t \geq 0}$ on the CCR($\mathfrak{H}$) algebra
$\mathfrak{W}(\mathfrak{H})$ is defined in a similar way:
\begin{equation}\label{T-quasi-free2}
T_{t}(W(f)):= \Psi_{t}(f) \ W(\Gamma_{t}(f)) \ , \ f\in \mathfrak{H} \ ,
\end{equation}
where $\Psi_{t=0}(f) = 1$, $\Gamma_{t=0} = \idty$ and $\Psi_{t}(f=0) = 1$, $\Gamma_{t}(f=0) = 0$ .
Note that the semigroup property of $\{T_{t}\}_{t \geq 0}$  and (\ref{T-quasi-free2}) imply
\begin{align}%\label{T-quasi-free3}
% \nonumber to remove numbering (before each equation)
& T_{s+t}(W(f))= T_{s}(T_{t}(W(f)))   \\
&= \Psi_{s}(\Gamma_{t}(f)) \Psi_{t}(f)\ W(\Gamma_{s}(\Gamma_{t}(f)))= \Psi_{s+t}(f) W(\Gamma_{s+t}(f))    \ . \nonumber
\end{align}
Then linear independence of the Weyl operators yields
\begin{equation}\label{T-quasi-free3}
\Gamma_{s+t}(f) = \Gamma_{s} (\Gamma_{t} (f)) \ \ \ {\rm{and}} \ \ \
 \Psi_{s+t}(f)= \Psi_{s}(\Gamma_{t}(f)) \Psi_{t}(f) \ .
\end{equation}
Hence, $\{\Gamma_{t}\}_{t \geq 0}$ is in turn a semigroup on $\mathfrak{H}$.

Let $t\mapsto \Psi_{t}(f)$ be continuous and for each $f\in \mathfrak{H}$ be differentiable at $t=+0$:
\begin{equation}\label{der-Psi}
\Psi_{0}^{\prime}(f) := \lim_{t \rightarrow +0} \partial_{t}\Psi_{t}(f) \ ,
\end{equation}
such that the function $t\mapsto \Psi_{0}^{\prime}(\Gamma_{t}(f))$ be bounded. Since semigroup (\ref{T-quasi-free2}) is
unity-preserving, then (\ref{T-quasi-free3}) and  $W(f=0)= \idty$ imply that for any $f\in \mathfrak{H}$
\begin{equation}\label{Psi-solut}
\Psi_{t}(f) = \exp \left\{\int_{0}^{t} d\tau \Psi_{0}^{\prime}(\Gamma_{\tau}(f))\right\} \ ,
\end{equation}
where $\Psi_{0}^{\prime}(0) = 0$ and $\Psi_{0}^{\prime}(-f) = \overline{\Psi_{0}^{\prime}(f)}$, see \cite{DVV2}.

Note that for the particular case: $\mathfrak{H} =\mathbb{C}$ and for identification of (\ref{Weyl-I}) with (\ref{Weyl-V}),
we reproduce the results (\ref{Weyl-III}) and (\ref{Weyl-IV}) in \textbf{A.3} for
$T_{t}= \mathcal{L}^{t \,* }_{\mathcal{C},\sigma}$. To this end one has to use (\ref{T-quasi-free3}), (\ref{der-Psi})
and to put
\begin{equation}\label{identif}
\Gamma_{t}(\zeta) := \zeta \ e^{i \, \epsilon t - (\sigma_{-} - \sigma_{+}) t /2} \ \ \ \ {\rm{and}}
\ \ \ \ \Psi_{0}^{\prime}(\zeta):= - \, \frac{|\zeta|^2}{4} (\sigma_{-} + \sigma_{+}) \ .
\end{equation}
Then (\ref{Psi-solut}) gives $\Psi_{t}(\zeta) = e^{- \Omega_{t}(\zeta)}$, where  $\Omega_{t}(\zeta)$ is defined by
(\ref{Weyl-IV}).

\smallskip
%%%%%%%%%%%%%%%%%%%%%%%%%%%%%%%%%%%%%%%%%%%%%%%%%%%%%%%%%%%%%%%%%%%%%%%%%%%%%%%%%%%%%%%%%%%%%%%%%%%%%%%%%%%%%%%%%%%%%%%%%
%%%%%%%%%%%%%%%%%%%%%%%%%%%%%%%%%%%%%%%%%%%%%%%%%%%%%%%% A5 %%%%%%%%%%%%%%%%%%%%%%%%%%%%%%%%%%%%%%%%%%%%%%%%%%%%%%%%%%%%%
\noindent \textbf{A.5 Quasi-Free States on the $CCR$ Algebra.}
Since the linear hull of the the Weyl operators is dense in CCR($\mathfrak{H}$), any state $\omega(\cdot)$ is uniquely
determined by its values taken on $\{W(f)\}_{f\in \mathfrak{H}}$. Therefore a state $\omega$ is completely defined
by its characteristic functional
\begin{equation*}
\mathfrak{H}\ni f \mapsto \omega(W(f)).
\end{equation*}
A state $\omega$ is called \textit{regular} if the function $a\mapsto \omega(W(af))$ is continuous for all
$f\in\mathfrak{H}$.
Characteristic functionals of regular states on CCR($\mathfrak{H}$) are characterized by Araki and Segal
\cite{AJPI,BrRo1} in the following theorem.

\begin{theorem}\label{Araki-Segal}
A map $\mathfrak{H}\ni f\mapsto \omega(W(f))\in\bC$ is the characteristic functional of a regular state $\omega$
on CCR($\mathfrak{H}$) if and only if
\begin{enumerate}
\item $\omega(W(0))=1.$
\item The function $a \mapsto \omega(W(af))$ is continuous for all $f\in\mathfrak{H}$.
\item For any integer $n\geq 2,$ all $f_1,...f_n\in\mathfrak{H}$ and all $z_1,...z_n\in\bC$ one has
\begin{equation*}
\sum_{j,k=1}^n\omega(W(f_j-f_k))e^{-i\Im(f_j,f_k)/2}\overline{z_j}z_k\geq 0.
\end{equation*}
\end{enumerate}
\end{theorem}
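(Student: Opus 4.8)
The plan is to prove the two directions separately: necessity is a one-line computation with the Weyl relations plus positivity of the state, while sufficiency is a Gelfand--Naimark--Segal-type reconstruction of $\omega$ from the data $\chi(f):=\omega(W(f))$.

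For necessity, let $\omega$ be a regular state on CCR($\mathfrak{H}$). Item (1) is immediate since $W(0)=\idty$ and $\omega(\idty)=1$, and item (2) is precisely the definition of regularity. For item (3) fix $f_{1},\dots,f_{n}\in\mathfrak{H}$, $z_{1},\dots,z_{n}\in\bC$ and put $A:=\sum_{j}z_{j}W(f_{j})$. Since $W(f)^{*}=W(-f)$, the Weyl relation (\ref{Weyl-VI}) gives $W(-f_{j})W(f_{k})=e^{\frac{i}{2}\Im(f_{j},f_{k})}W(f_{k}-f_{j})$, whence $\omega(A^{*}A)=\sum_{j,k}\overline{z_{j}}z_{k}\,e^{\frac{i}{2}\Im(f_{j},f_{k})}\,\omega(W(f_{k}-f_{j}))$; after relabelling the indices and using $\omega(W(-h))=\overline{\omega(W(h))}$ this is exactly the left-hand side of the inequality in (3), so $\omega(A^{*}A)\ge 0$ yields the claim.

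For sufficiency, suppose $\chi$ satisfies (1)--(3). Applying (3) with $n=2$, $f_{1}=0$, $f_{2}=h$ already forces the Hermiticity $\chi(-h)=\overline{\chi(h)}$. Let $V$ be the free complex vector space with basis $\{e_{f}\}_{f\in\mathfrak{H}}$, equipped with the sesquilinear form determined by $\langle e_{f},e_{g}\rangle:=e^{\frac{i}{2}\Im(f,g)}\,\chi(g-f)$; by (3) this form is positive semidefinite and by (1) $\langle e_{0},e_{0}\rangle=1$. Quotienting $V$ by its null subspace and completing gives a Hilbert space $\cH_{\omega}$ with a distinguished unit vector $\Omega$, the class of $e_{0}$. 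Define next, for $g\in\mathfrak{H}$, the linear map $\pi_{0}(W(g))e_{f}:=e^{-\frac{i}{2}\Im(g,f)}\,e_{g+f}$; a direct computation with the form shows $\pi_{0}(W(g))$ preserves inner products, hence descends to the quotient and extends to a unitary $\pi(W(g))$ on $\cH_{\omega}$, and that $g\mapsto\pi(W(g))$ obeys (\ref{Weyl-VI}) together with $\pi(W(g))^{*}=\pi(W(-g))$. Thus $\pi$ is a unitary $*$-representation of the Weyl $*$-algebra; since that $*$-algebra carries a unique $C^{*}$-norm and CCR($\mathfrak{H}$) is its completion, $\pi$ extends to a representation of CCR($\mathfrak{H}$) on $\cH_{\omega}$. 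Then $\omega(\cdot):=\langle\Omega,\pi(\cdot)\Omega\rangle$ is a state with $\omega(W(f))=\langle e_{0},e_{f}\rangle=\chi(f)$, and hypothesis (2) says $a\mapsto\omega(W(af))$ is continuous, i.e.\ $\omega$ is regular. Uniqueness of the state with a prescribed characteristic functional follows from density of ${\rm span}\{W(f):f\in\mathfrak{H}\}$ in CCR($\mathfrak{H}$), as already noted before the theorem.

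The main obstacle is the sufficiency direction, and within it two bookkeeping points require care: (i) matching the symplectic twist $e^{\pm\frac{i}{2}\Im(\cdot,\cdot)}$ in the definitions of $\langle e_{f},e_{g}\rangle$ and of $\pi_{0}(W(g))$ so that condition (3) becomes exactly positive-semidefiniteness of the form while $\pi_{0}$ stays isometric and multiplicative; and (ii) checking that $\pi_{0}(W(g))$ really descends to the quotient as a bounded (in fact unitary) operator and that the resulting $*$-representation of the Weyl $*$-algebra extends to the $C^{*}$-algebra CCR($\mathfrak{H}$), which is where one invokes uniqueness of the $C^{*}$-norm on the Weyl algebra. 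The regularity hypothesis (2) plays only a peripheral role: in the necessity direction it is simply the definition, and in the sufficiency direction it is used solely at the end to conclude that the reconstructed $\omega$ (equivalently the GNS representation $\pi$) is regular, so that Stone's theorem produces the field operators.
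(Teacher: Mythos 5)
Your proof is correct. Note that the paper itself does not prove this theorem: it is quoted as a classical result of Araki and Segal with a citation to the literature, so there is no in-paper argument to compare against. What you have written is the standard reconstruction: necessity of (3) from $\omega(A^{*}A)\ge 0$ with $A=\sum_{j}z_{j}W(f_{j})$ and the Weyl relation (\ref{Weyl-VI}), and sufficiency via a GNS-type construction on the free vector space with the kernel $\langle e_{f},e_{g}\rangle=e^{\frac{i}{2}\Im(f,g)}\chi(g-f)$. The phase bookkeeping checks out: your twisted shift $\pi_{0}(W(g))e_{f}=e^{-\frac{i}{2}\Im(g,f)}e_{g+f}$ is indeed isometric for that kernel and satisfies the Weyl relations, hence descends to unitaries on the completion (an isometry of the semidefinite form maps the null space into itself, which is exactly the well-definedness point you flag). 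Two small remarks: the Hermiticity $\chi(-h)=\overline{\chi(h)}$ and the bound $|\chi(h)|\le 1$ both follow from positive semidefiniteness of the $2\times2$ matrix in condition (3) with $f_{1}=0$, $f_{2}=h$, and the extension of $\pi$ from the Weyl $*$-algebra to the $C^{*}$-algebra CCR($\mathfrak{H}$) rests on Slawny's uniqueness of the $C^{*}$-norm, which requires the symplectic form $\Im(\cdot,\cdot)_{\mathfrak{H}}$ to be nondegenerate — automatic here since $\mathfrak{H}$ is a complex Hilbert space, but worth stating. With those points made explicit the argument is complete.
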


We remind that the state $\omega_{r,s}(\cdot)$ is called
\textit{quasi-free} \cite{Ver11}, if
\begin{equation}\label{QFree1}
\omega_{r,s}(W(f)) = \exp \{ i \, r(f) - \frac{1}{2} \, s(f,f)\} \ \ , \ \ f\in \mathfrak{H} \ .
\end{equation}
Here $r$ is a linear functional on $\mathfrak{H}$, whereas $s$ is a non-negative (closable) sesquilinear form on
$\mathfrak{H}\times\mathfrak{H}$, that verifies
\begin{equation}\label{QFree2}
\frac{1}{4} \, |{\rm{Im}}(f,g)_{\mathfrak{H}}|^2 \leq s(f,f) \, s(g,g) \ \ , \ f,g \in \mathfrak{H} \ ,
\end{equation}
to ensure the  {positivity} of this state \cite{Ver11}. By the Araki-Segal theorem the quasi-free states are
\textit{regular}
and \textit{analytic}, verifying the equations:
\begin{equation}\label{r-s}
r(f) = \omega_{r,s}(\Phi(f)) \ \ \ {\rm{and}} \ \ \ s(f,f) = \omega_{r,s}(\Phi(f)^2)- \omega_{r,s}(\Phi(f))^2 \ ,
\end{equation}
where $\Phi(f):= (b(f) +  b^{*}(f))/\sqrt{2}$, see, e.g. \cite{Ver11,BrRo2}.

If the state (\ref{QFree1}) is \textit{gauge-invariant}:
$\omega_{r,s}(W(f))= \omega_{r,s}(W(e^{i \, \varphi} \, f ))$, then $r(\cdot) =0$ and we denote this state by
$\omega_{s}(\cdot) := \omega_{r=0,s}(\cdot)$. By virtue of (\ref{state-II}) the limiting ({steady}) state
(\ref{lim-state}) of the open cavity is gauge invariant and quasi-free  with:
\begin{equation}\label{lim-QF}
r(\zeta)=0  \ ,  \ s(\zeta,\zeta) =
\frac{|\zeta|^2}{2} \ \frac{\sigma_{-} + \sigma_{+}}{\sigma_{-} - \sigma_{+}}=
\Tr_{\cH_\mathcal{C}}(\rho^{\beta_{cav}} \, \Phi(\zeta)^2) \ , \ \zeta \in \mathfrak{H} =\mathbb{C} \ ,
\end{equation}
Moreover it is the \textit{equilibrium}, i.e. $(\beta_{cav})$-KMS state with respect to the Hamiltonian
dynamics $\tau^{t}$ generated by $H_{\mathcal{C}}$.

It is worth to note that  {\textit{a priori}} there is no evidence
that the  {steady} state of the open cavity (\ref{Damp-Pump-Cav}) must be either quasi-free, or the Gibbs
equilibrium state. For discussion of the theory of \textit{non-equilibrium} quasi-free steady states we suggest a very
complete account in \cite{AJPI}-\cite{AJPIII}. There one can find curious examples of non-equilibrium ({\textit{non}}-KMS)
quasi-free states, which allow a certain  {informal} Gibbs description via \textit{long-range} many-body interactions
\cite{AschPi}.

Note that by definitions (\ref{T-quasi-free2}) and (\ref{QFree1}) the {quasi-free} dynamics maps the
 {quasi-free states} into the states:
\begin{align}\label{QF-QF}
\omega_{r, s}(T_{t}(W(f))) = \Psi_{t}(f) \ \omega_{r, s}(W(\Gamma_{t}(f))) =
\Psi_{t}(f) \ \omega_{r_{t}, s_{t}}(W(f)) \ ,
\end{align}
where $r_{t}(f) := r(\Gamma_{t}(f))$ and $s_{t}(f,f):= s(\Gamma_{t}(f),\Gamma_{t}(f))$. In general the states
(\ref{QF-QF}) are  {not} quasi-free.

Let us consider the case of \textit{Gaussian} {quasi-free} dynamics \cite{Van}, when the semigroup
$\{\Gamma_{t}\}_{t \geq 0}$ defined on $\mathfrak{H}$ by
\begin{equation}\label{Gauss-Gamma}
\Gamma_{t} := \exp \{i \, t H - \frac{1}{2}\, t \,(\Sigma_{-} - \Sigma_{+}) \} \
\end{equation}
here $H$ is a self-adjoint operator and $\Sigma_{\mp}$ are  {bounded} positive operators on $\mathfrak{H}$
such that $\Sigma_{-} \geq \Sigma_{+} \geq 0$, and (\ref{der-Psi}) is a bilinear form
\begin{equation}\label{der-Psi1}
\Psi_{0}^{\prime}(f) := - \, \frac{1}{4}\, (f, R f) \ , \ f \in \mathfrak{H} \ ,
\end{equation}
defined by a positive  {bounded} operator $R \geq \Sigma_{-}$. Then by virtue of (\ref{Psi-solut}), (\ref{QFree1})
and (\ref{QF-QF}) dynamics $T_{t}$ maps initial quasi-free state $\omega_{r, s}$ into quasi-free state
$\omega_{r_{t}, \widetilde{s}_{t}}$ with
\begin{equation}\label{s-t}
\widetilde{s}_{t}(f,f):= s(\Gamma_{t}(f),\Gamma_{t}(f))+ \, \frac{1}{2}\,
\int_{0}^{t} d\tau \, (\Gamma_{\tau}(f), R \, \Gamma_{\tau}(f)) \ .
\end{equation}

In the particular case of Hamiltonian dynamics ($\Sigma_{\mp} = 0$ and $R=0$) the quasi-free map
(\ref{T-quasi-free2}) is the group of Bogoliubov automorphisms on $\mathfrak{W}(\mathfrak{H})$:
\begin{equation}\label{T-Bog}
T_{t}(W(f))= W( e^{i tH} f) \ , \ \ \ f\in \mathfrak{H} \ .
\end{equation}
Automorphism (\ref{T-Bog}) is the simplest quasi-free dynamics, which corresponds to the unitary one-particle
evolution generated by the Hamiltonian $H$.

For the example of the open cavity, when $\mathfrak{H} =\mathbb{C}$, we use (\ref{identif}) to establish that
in this case one has to put $H=\epsilon$, $\Sigma_{\mp} = \sigma_{\mp}$, $R = \sigma_{-} + \sigma_{+}$ in
(\ref{Gauss-Gamma}) and (\ref{der-Psi1}). Then (\ref{s-t}) yields
\begin{equation}\label{diff-ss}
\widetilde{s}_{t}(\zeta,\zeta) = s(\Gamma_{t}(\zeta),\Gamma_{t}(\zeta)) +
\frac{1}{2}\, |\zeta|^2 \ \frac{\sigma_{-} + \sigma_{+}}{\sigma_{-} - \sigma_{+}} \
(1 - e^{- (\sigma_{-} - \sigma_{+}) t}) \ .
\end{equation}
This means that dynamics generated by (\ref{K-L-gen-IV}) is  {quasi-free}:
$\mathcal{L}^{t \,* }_{\mathcal{C},\sigma} \, (W(\zeta)) = T_{t}(W(\zeta))$, see (\ref{identif}), and that it
 {preserves} the quasi-free states.

 Since $\lim_{t\rightarrow\infty} \Gamma_{t}(\zeta) =0$, by
(\ref{diff-ss}) we obtain that
\begin{equation}\label{lim-ss}
\lim_{t\rightarrow\infty} \omega_{r, s}(\mathcal{L}^{t \,* }_{\mathcal{C},\sigma}(W(\zeta))) =
\exp\{- \frac{|\zeta|^2}{4} \ \frac{\sigma_{-} + \sigma_{+}}{\sigma_{-} - \sigma_{+}}\} \ ,
\end{equation}
for any initial quasi-free state $\omega_{r, s}$.

In fact we established in (\ref{state-II}) that for $t\rightarrow\infty$ the quasi-free dynamics
$\mathcal{L}^{t \,* }_{\mathcal{C},\sigma}$ transforms \textit{any} initial state $\rho$ with finite expectation
of photon number into the \textit{limit state}, which is quasi-free and Gibbs, see (\ref{state-II}), (\ref{lim-state})
and (\ref{lim-ss}).

\vskip 2.0cm
%%%%%%%%%%%%%%%%%%%%%%%%%%%%%%%%%%%%%%%%%%%%%%%%%%%%%%%%%%%%%%%%%%%%%%%%%%%%%%%%%%%%%%%%%%%%%%%%%%%%%%%%%%%%%%%%%%%%%

%%%%%%%%%%%%%%%%%%%%%%%%%%%%%%%%%%%%%%%%%%%%%%%%%%%%%%%%%%%%%%%%%%%%%%%
\newpage
%%%%%%%%%%%%%%%%%%%%%%%%%%%%%%%%%%%%%%%%%%%%%%%%%%%%%%%%%%%%%%%%%%%%%%%

%%%%%%%%%%%%%%%%%%%%%%%%%%%%%%%%%%%%%%%%%%%%%%%%%%%%%%%%%%%%%%%%%%%%%%%%%%%%%%%%%%%%%%%%%%%%%%%%%%%%%%%%%%%%%%%%%%%%%%
\end{document}